\documentclass[11pt,letterpaper]{article}
\usepackage{ramp}

\def\authornotes{1}


\ifnum\authornotes=1
    \newcommand{\tselil}[1]{\footnote{\color{ForestGreen}Tselil: {#1}}}
    \newcommand{\misha}[1]{\footnote{\color{Orange}Misha: {#1}}}

    \newcommand{\Tnote}[1]{{\color{ForestGreen}[Tselil: #1]}}
    \newcommand{\mnote}[1]{{\color{Orange}[Misha: #1]}}
	\newcommand{\todo}[1]{{\color{Red} (TODO: #1)}}
\else
    \newcommand{\tselil}[1]{}
    \newcommand{\misha}[1]{}

    \newcommand{\Tnote}[1]{}
    \newcommand{\mnote}[1]{}

	\newcommand{\todo}[1]{}
\fi

\newcommand{\parencite}[1]{\cite{#1}}

\newcommand{\iprod}[1]{\left\langle #1 \right\rangle}
\newcommand{\Iprod}[1]{\langle #1 \rangle}

\usepackage{scalerel}



\newcommand{\LSH}{\text{LStH}\xspace}
\newcommand{\AMP}{AMP\xspace}
\newcommand{\sos}{SoS\xspace}

\newcommand{\lsh}{\mathrm{\LSH}}
\newcommand{\amp}{\mathrm{\AMP}}
\newcommand{\defeq}{:=}

\newcommand{\vSOS}{v_{\LSH}}
\newcommand{\vAMP}{v_{\mathrm{AMP}}}
\newcommand{\vamp}{v_{\mathrm{AMP}}}

\newcommand{\ampopt}{\mathrm{OPT}_\amp}
\newcommand{\optamp}{\mathrm{OPT}_\amp}

\newcommand{\rob}{\mathrm{robust}}
\newcommand{\sla}{\mathrm{slack}}
\newcommand{\Xprox}{{\hat{X}}}

\newcommand{\tX}{\Xprox}

\DeclareMathOperator{\op}{\mathsf{op}}
\DeclareMathOperator{\plim}{\operatornamewithlimits{p-lim}}

\newcommand{\trunk}{\mathsf{k}}

\begin{document}

\title{Semidefinite programs simulate approximate message passing robustly}
\author{
Misha Ivkov\thanks{Stanford University. \texttt{mishai@stanford.edu}} \and Tselil Schramm\thanks{Stanford University.  \texttt{tselil@stanford.edu}.}
}
\date{\today}
\maketitle

\begin{abstract}
    Approximate message passing (AMP) is a family of iterative algorithms that generalize matrix power iteration.
    AMP algorithms are known to optimally solve many average-case optimization problems.
    In this paper, we show that a large class of AMP algorithms can be simulated in polynomial time by \emph{local statistics hierarchy} semidefinite programs (SDPs), even when an unknown principal minor of measure $1/\polylog(\mathrm{dimension})$ is adversarially corrupted.
    Ours are the first robust guarantees for many of these problems.
    Further, our results offer an interesting counterpoint to strong lower bounds against less constrained SDP relaxations for average-case max-cut-gain (a.k.a. ``optimizing the Sherrington-Kirkpatrick Hamiltonian'') and other problems.
    
    \end{abstract}

\setcounter{tocdepth}{2}
\setcounter{page}{-1}
\thispagestyle{empty}
\newpage
\tableofcontents
\thispagestyle{empty}
\thispagestyle{empty}
\newpage
\setcounter{page}{1}

\section{Introduction}

Approximate Message Passing (AMP) is a family of algorithms which generalize matrix power iteration.
AMP is so named because it is a dense variant of the ``Belief propagation'' message-passing algorithm, with origins in statistical physics \parencite{Bolt14,DMM09,BM11}.
Since its early use in the context of compressed sensing \parencite{DMM09}, AMP has become widely studied in the high-dimensional statistics community and has found extensive applications, including sparse principal components analysis (PCA) \parencite{DM14}, linear regression \parencite{DMM09,BM11,KMSSZ12}, non-negative PCA \parencite{MR15}, and the recent breakthrough algorithm for finding the ground state of the Sherrington-Kirkpatrick Hamiltonian (an average-case version of Max-Cut-Gain) \parencite{Mon21}.
The surveys \parencite{Mon12,FVRR22} contain a wealth of additional examples.

To describe the AMP algorithm, we consider the illustrative example of the {\em non-negative principal components analysis} problem (non-negative PCA or nnPCA). 
In non-negative PCA, we observe a symmetric $n \times n$ matrix $X$, and our goal is to optimize the objective $\max \, \{ v^\top X v \mid v \in \R^n, v \ge 0, \|v\| \le 1\}$.
AMP produces a sequence of iterates $v_0,\ldots,v_t \in \R^n$ with the goal that $v_t$ has large objective value, where each iteration combines {\em matrix-vector multiplication} to amplify correlation with $X$ and the application of {\em denoiser functions} that allow us to enforce constraints on our output: that is,  $v_{s+1} \propto f(X v_s)$ for some $f: \R^n \to \R^n$.
The choice of $f$ falls to the algorithm designer, but a good choice in the context of nnPCA is entry-wise thresholding at zero, $f(u)_i = \max(u_i,0)$, which enforces that $v_{s+1} \ge 0$.

AMP enjoys a number of strengths:
It is simple to implement and extremely efficient (provided that the chosen denoisers $f$ can be applied efficiently).
For many spiked matrix models, AMP is known to perform well, achieving the information-theoretic minimum mean squared error (i.e. AMP is Bayes-optimal, see e.g. the discussion in \parencite{FVRR22}).
In fact, AMP has such a prominent place in the high-dimensional statistics community that a lower bound against AMP is considered to provide evidence for computational intractability \parencite{CM22,CMW20}. 
The major problem with AMP is that it is brittle to the model specification; AMP is known to have poor robustness.
Formal guarantees for AMP algorithms are known for a variety of problems, but typically one requires that the input $X$ have the form $X = Y + Z$ for $Y$ a simple planted structure (potentially $Y = 0$) and $Z$ a random matrix with {\em independently sampled} subgaussian entries.
The independence assumption on the noise is crucial for the success of the algorithm, and minor perturbations can cause the algorithm to behave unstably \parencite{CZK14,RSFS19}. 
The first aim of this paper is to study the following question:
\begin{center}
{\em
Can AMP be supplanted by a polynomial-time algorithm which is robust to adversarial noise?}
\end{center}

In particular, we wish to understand whether sum-of-squares semidefinite programs (SDPs) and related algorithms can simulate AMP, with the additional benefit of robustness.
SDPs are a natural choice for two reasons.
Firstly, SDPs represent the most powerful polynomial-time algorithms we know in many contexts, such as worst-case approximation algorithms (e.g. \parencite{Rag08,ARV09}) and several problems in statistical estimation (e.g \parencite{BM16}, see also \parencite{RSS18} for a survey).
Additionally, SDP algorithms are often quite robust, enabling a recent renaissance in algorithmic robust statistics (see \parencite{HL18,KSS18}, and the aftermath).

However, in the context of random optimization problems, the supremacy of SDPs is yet uncertain.
There are several random optimization problems where AMP is known to succeed while the natural semidefinite programming relaxation is known to fail, such as non-negative PCA \parencite{BKW22} and optimizing the Sherrington-Kirkpatrick (SK) Hamiltonian.
This latter example is particularly dramatic: the SK problem asks us to find $\argmax\{ v^\top X v \mid v \in \{\pm \tfrac{1}{\sqrt{n}}\}^n\}$ for $X$ an $n \times n$ matrix with independent $\calN(0,\frac{1}{n})$ entries.
The celebrated works of Parisi and Talagrand \cite{Par80,Tal06} show that with high probability, the true value is $\approx 1.52$, and in a recent breakthrough Montanari showed that (modulo a widely believed conjecture) an AMP algorithm achieves value $1.52-\eps$ in time $C(\eps) \cdot n^2$, for $C(\eps)$ a function depending only on $\eps$ \parencite{Mon21}.
For the basic semidefinite programming relaxation of this problem, there is an integrality gap of value $2$ \parencite{MS16}, and it was recently shown that this gap persists even after $n^{\Omega(1)}$ rounds of the sum-of-squares hierarchy \parencite{GJJPR20}. 
This is surprising, given that results from hardness of approximation for worst-case CSPs suggests that SDPs may be optimal among polynomial-time algorithms \cite{Rag08}.
The second aim of this paper is to understand:
\begin{center}
{\em
Are SDPs really worse than other algorithms for average-case optimization problems?
}
\end{center}

One way to reconcile the apparent weakness of SDPs in this setting is as follows: the natural SDP relaxations are {\em certifying } an upper bound on the value of the maximization problem in question, which may be a computationally harder problem than merely finding a $(1-\eps)$-optimal solution (see e.g. \parencite{BKW22,BBKMW21}).
If one artificially plants a larger-valued solution in the SK model or nnPCA, AMP is not guaranteed to find it; on the other hand, because the natural SDP is a convex relaxation of the original optimization problem, the SDP value has to reflect the presence the planted solution.

To address this problem, Banks, Mohanty, and Raghavendra introduce a family of semidefinite programming relaxations, called the {\em Local Statistics Hierarchy} (\LSH) \parencite{BMR21}.
The \LSH is based on the sum-of-squares (\sos) hierarchy, but rather than relaxing an optimization problem, it is a feasibility program which incorporates prior information about the joint distribution over $X,v^*$ for $v^*$ the desired solution.\footnote{\parencite{BMR21} do not study optimization; rather, there is some ``planted'' solution $v^*$ which they are trying to recover.}
\cite{BMR21} study \LSH in the context of the stochastic block model; they are unable to prove that \LSH can estimate $v^*$, but they do prove that polynomial-time \LSH can {\em hypothesis test} between $X$ drawn from the stochastic block model and $X$ drawn from a null distribution without community structure.\footnote{In some sense, this is the natural ``decision'' variant of the problem, where estimating $v^*$ is the natural ``search'' variant.}
Their work leaves open the intriguing question of whether \LSH can redeem SDPs more broadly as a best-in-class algorithm for random optimization and estimation problems.

\medskip

In this paper, we will show that in the context of random optimization problems, under mild conditions the \LSH can simulate AMP in polynomial time, estimating $v^*$ as well as the corresponding AMP algorithm while also being {\em robust} to adversarial perturbations.
Namely, if any $n/\polylog n$-sized principal minor of $X$ is adversarially corrupted, then the \LSH SDP run on $X$ can be rounded to a solution $v_{\lsh}$ which approximates $v_{\amp}$ in $\ell_2$-norm, for $v_{\amp}$ the solution output by AMP on the uncorrupted instance.
Our result captures the contexts in which AMP is known to succeed while the standard SDP relaxation fails: non-negative PCA and the SK model.
This redeems SDPs as at least as powerful as other algorithms when it comes to random optimization and estimation problems; further, it is the first demonstration of the power of \LSH for robust estimation. 

\subsection{Our results}

Our main result is a meta-theorem stating that if we have an AMP algorithm for a random quadratic optimization problem satisfying certain conditions, then the degree-$O(1)$ \LSH can simulate this algorithm, and robustly. 
In order to make our statement precise, we must first give some definitions.

We begin with the class of quadratic optimization problems to which our results apply.
\begin{definition}[Random quadratic optimization problem]\label{def:rqo}
Let $n \in \Z_+$, let $\calD$ be a symmetric subgaussian distribution over $\R$ with variance $\frac 1n$ and $\E_{X \sim \calD}[|X|^\ell] \le O(\frac{\sqrt{\ell}}{n})^{\ell}$ for each even $\ell \in \Z_+$, and $\calK \subset \R^n$.
We define the {\em random quadratic optimization problem} $\calP_n(\calD,\calK)$ as follows. 
A problem instance from $\calP_n(\calD,\calK)$ is sampled by choosing a matrix $X \in \R^{n\times n}$ with $X_{ij} = X_{ji} \sim \calD$ independently for all $i,j \in [n]$
(we write $X \sim \calP_n(\calD,\calK)$ for short).
The goal is to find $v^* = \argmax\{v^\top X v \mid v \in \calK\}$, or a $\delta$-approximate solution $u \in \R^n$ satisfying $|\iprod{u,v^*}| \ge (1-\delta)\|u\|\|v^*\|$.\footnote{
A $\delta$-approximate solution $u$ automatically implies an additive approximation guarantee for the objective, since if $u = \pm(1-\delta) v^* + u^{\perp}$  (with $\|u\|= \|v^*\|=1$), then $u^\top X u \ge (1-\delta)^2(v^*)^\top X v^* - O(\sqrt{\delta})\|X\|_{op}$. 
Typically $\|X\|_{op} = \Theta(\max_{v} v^\top X v)$.}
\end{definition}

For example, the above-mentioned Sherrington-Kirkpatrick problem is (up to rescaling) the distribution $\calP_n(\calN(0,\frac{1}{n}),\{\pm \frac{1}{\sqrt{n}}\}^n)$, and non-negative PCA is $\calP_n(\calN(0,\frac{1}{n}),\bbS^{n-1}_{\ge 0})$, where $\bbS^{n-1}_{\ge 0}$ denotes the set of non-negative unit vectors.
The entries of $X$ are normalized to have magnitude $\Theta(\frac{1}{\sqrt n})$ because at this scale the objective value tends to be $\Theta(1)$.

Note that we are interested in the {\em search/estimation} version of the problem rather than the problem of determining the objective value---this is because the objective values of these optimization problems concentrate extremely well, so the deterministic expected optimum gives a near-optimal estimate of the objective value with high probability.

Now, we formally define AMP algorithms.
\begin{definition}[AMP algorithm]\label{def:amp-intro}
An {\em AMP algorithm} $\calA$ is an algorithm specified by a sequence of deterministic {\em denoising} functions $\calF = f_1,f_2\ldots$, with $f_s: \R^{s+1} \to \R$ for all $s \ge 1$.
Given an input matrix $X \in \R^{n \times n}$ and a number of iterations $t \in \Z_+$, $\calA$ outputs a sequence of iterates $v_1,\ldots,v_t$ according to the rule $v_0 = \vec{1}$, and
\[
v_{s+1} = f_{s+1}(X v_s, v_s, v_{s-1},\ldots,v_0) - \Delta_s(v_{s-1},\ldots,v_0).
\]
where $f_{s+1}$ is applied coordinate-wise; that is, for $u_{s+1},\ldots,u_0 \in \R^n$, $f_{s+1}(u_{s+1},\ldots,u_0) \in \R^n$ and for each $i \in [n]$, $(f_{s+1}(u_{s+1},\ldots,u_0))_i = f_{s+1}(u_{s+1}(i),\ldots,u_0(i))$.
The function $\Delta_s$ is the so-called Onsager correction term, which is determined by $\calF$ and is included so as to decrease correlation between the iterates (see \pref{def:amp-instance} for details).
\end{definition}
This definition of an AMP algorithm is more restrictive than the broadest definition one sees in the AMP literature; in particular, we have required that $X$ be $n\times n$, that $f_s$ be univariate functions applied entrywise, and that $v_0 = \vec{1}$.
We will require the latter two conditions in our proofs, but we do not find these to be too restrictive because almost every theoretical guarantee for an AMP algorithm stipulates the same restrictions.\footnote{In the literature $v_0$ is usually an arbitrary starting point independent of $X$; restricting to $v_0=\vec{1}$ is almost without loss of generality, since the $X$ in the literature are typically Gaussian and therefore rotationally invariant.}
The symmetry of $X$ is a condition which could almost certainly be removed, but making this assumption makes the proofs more convenient.

Finally, we introduce the Local Statistics Hierarchy (\LSH) of \cite{BMR21}.
The idea of \LSH is as follows: suppose we know the joint distribution $\calP_{X,v}$ over matrix-vector pairs $(X,v)$, and observing $X$, we wish to estimate the marginal over $v$.
The \LSH combines a \sos program searching for feasible {\em pseudoexpectations} over $v$ with constraints consistent with prior information about $\calP$.

\begin{definition}[Local Statistics Hierarchy, Non-Robust Version]\label{def:LSH}
Let $\calP$ be a distribution over $(X,v) \in \R^{n \times n} \times \R^n$, and let $v^{\le d}$ denote the set of all monomials in $v$ of degree at most $d$.
Call a polynomial $q(X,v)$ {\em $\calS_n$-symmetric} if $q(X,v) = q(\Pi X \Pi^\top, \Pi v)$ for any $n\times n$ permutation matrix $\Pi$.

The {\em degree-$(d_X,d_v)$ Local Statistics Hierarchy} with input $X$ is a semidefinite program which returns a linear operator $\pE: v^{\le d_v} \to \R$ which satisfies the following constraints:
\setlist{nolistsep}
\begin{enumerate}[noitemsep]
\item Scaling: $\pE 1 = 1$.
\item Positivity: for any polynomial $p$ of degree at most $d_v/2$ in $v$, $\pE p(v)^2 \ge 0$
\item Prior matching: for any $\calS_n$-symmetric polynomial $q$ of degree at most $d_X$ in $X$ and $d_v$ in $v$, 
\[
\pE q(X,v) = \E_{(X',v') \sim \calP} q(X',v') \pm C \sqrt{\Var_{(X',v') \sim \calP} q(X',v')},
\]
for $C>0$ chosen so the constraint is satisfied with high probability when $(X^*,v^*) \sim \calP$ and on the left one plugs in $X = X^*, v = v^*$.
\end{enumerate}
\end{definition}
One can implement the degree-$(a,b)$ \LSH with an SDP with $O(n^a)$ variables and $O(n^a) + (a+b)^{O(a+b)}$ linear constraints by placing constraints on a finite basis of $\calS_n$ symmetric polynomials (see \pref{sec:alg} for more precise details).
So as long as $a,b = O(1)$, the \LSH algorithm is polynomial-time.
For our robust result, we will use a version of this program with more variables and constraints; we give a high-level description in \pref{sec:overview} and definition in \pref{sec:alg}.

Lastly, we define a class of adversarial perturbations against which our algorithms are robust:
\begin{definition}[adversarial $\eps$-principal minor perturbation]
We say $Y \in \R^{n \times n}$ is an {\em adversarial $\eps$-principal minor perturbation of the matrix $X$} if $X-Y$ is supported on some $\eps n \times \eps n$ principal minor.
\end{definition}
In \pref{sec:robust-model} we comment more on this notion of robustness: we give (i) an example which demonstrates that AMP is not robust to this type of corruption, and (ii) an information-theoretic obstacle to achieving the guarantees of AMP in the so-called ``strong contamination model.''

We are finally ready to state our theorem.

\begin{theorem}[Main theorem, informal]\label{thm:main-intro}
Suppose $X \sim \calP_n(\calD,\calK)$ is an instance of a random quadratic optimization problem, $\calA$ is an AMP algorithm with degree-$k$ polynomial denoiser functions, and $v_{\amp}$ is the $t$'th iterate of $\calA$ on input $X$.
Then there exists an integer $d = O({2k}^{2t})$ such that for any $\eta = \omega(1/\sqrt{n})$, the degree-$(d,2)$ Local Statistics Hierarchy on $X$ can be rounded to a vector $v_{\lsh}$ which satisfies $|\iprod{v_{\lsh},v_{\amp}}| \ge (1-\eta)\|v_{\lsh}\|\|v_{\amp}\|$ with probability $1-o(1)$.

Further for any $\eps \ge 0$ (allowing $\eps \to 0$ as $n \to \infty$), even when given as input an adversarial $\eps$-principal minor perturbation $Y$ of $X$, the degree-$(d,d)$ {\em robust} local statistics hierarchy can be rounded to a vector $v_{\lsh}(Y)$ satisfying $|\iprod{v_{\lsh}(Y), v_{\amp}(X)}| \ge (1-\eps^{1/2}\cdot (d\log n)^{O(d)}-\eta)\|v_{\lsh}(Y)\|\|v_{\amp}(X)\|$ with probability $1-o(1)$, for $v_{\amp}(X)$ the output of AMP on the {\em uncorrupted} $X$. 
\end{theorem}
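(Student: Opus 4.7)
The plan is to realize $v_{\amp}(X)$ as a low-degree polynomial map in $X$ and use the \LSH prior-matching constraints to force the pseudoexpectation's second-moment matrix to be approximately a multiple of $v_{\amp}(X) v_{\amp}(X)^\top$, so that eigenvector rounding recovers $v_{\amp}(X)$. Unrolling \pref{def:amp-intro} -- each $f_s$ is a coordinatewise degree-$k$ polynomial and $v_0 = \vec{1}$ -- yields a polynomial map $P : \R^{n \times n} \to \R^n$ of degree $D = O(k^t)$ in $X$ with $v_{\amp}(X) = P(X)$, permutation-equivariant in the sense that $P(\Pi X \Pi^\top) = \Pi P(X)$. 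Consequently,
\[
q_1(X,v) := \iprod{v, P(X)}^2, \qquad q_2(X,v) := \|v\|^2
\]
are $\calS_n$-symmetric, of degree $O(D)$ in $X$ and $\le 2$ in $v$. Taking $d$ as in the theorem (large enough both to represent $q_1$ and to carry the variance bound below), we apply \pref{def:LSH} with the planted distribution given by the joint law of $(X', P(X'))$ for $X' \sim \calP_n(\calD, \calK)$; the prior-matching constraint then gives
\[
\pE\, q_i(X, v) = \E\, q_i(X', P(X')) \pm C\sqrt{\mathrm{Var}\, q_i(X', P(X'))}
\]
for $i=1,2$, with high probability over the realization of $X$.

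Setting $M := \pE[vv^\top]$, these identities read $v_{\amp}(X)^\top M\, v_{\amp}(X) = (1 \pm o(1))\,\E \|v_{\amp}\|^4$ and $\mathrm{tr}(M) = (1 \pm o(1))\,\E\|v_{\amp}\|^2$. Combined with the concentration $\|v_{\amp}(X)\|^2 = (1 \pm o(1))\,\E\|v_{\amp}\|^2$ -- a consequence of polynomial concentration for subgaussian variables -- the Rayleigh-quotient sandwich $v_{\amp}^\top M v_{\amp} \le \|v_{\amp}\|^2 \lambda_{\max}(M) \le \|v_{\amp}\|^2\, \mathrm{tr}(M)$ is forced to be nearly tight. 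Rounding to the top eigenvector $v_{\lsh}$ of $M$ then yields $|\iprod{v_{\lsh}, v_{\amp}(X)}| \ge (1-\eta)\|v_{\lsh}\|\|v_{\amp}(X)\|$ for any $\eta = \omega(n^{-1/2})$.

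For the robust statement, the SDP additionally carries pseudo-indicator variables $w \in \{0,1\}^n$ for a candidate good set with $\sum_i w_i \ge (1-\eps) n$, and the prior-matching constraints are imposed on the masked input $WYW$ where $W = \mathrm{diag}(w)$; raising the $v$-degree of the program to $d$ permits encoding the required polynomial constraints in the joint variables $(v,w)$. When $\mathrm{supp}(w)$ lies inside the true uncorrupted set $S^\star$ we have $WYW = WXW$, so the non-robust analysis carried out on the $(1-\eps)n$-dimensional sub-instance produces a rounded $v_{\lsh}(Y)$ correlated with $v_{\amp}(X_{S^\star})$. The remaining ingredient is the stability estimate $|\iprod{v_{\amp}(X_{S^\star}), v_{\amp}(X)}| \ge (1 - \eps^{1/2}(d \log n)^{O(d)}) \|v_{\amp}(X_{S^\star})\| \|v_{\amp}(X)\|$, obtained by propagating the matrix perturbation $\|X - X_{S^\star}\|_F = O(\sqrt{\eps})$ through $t$ polynomial AMP iterations and applying subgaussian polynomial concentration to each iterate.

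The main obstacle is proving $\mathrm{Var}\, q_1(X', P(X')) = o((\E\, q_1)^2)$: since $q_1$ has degree up to $4D$ in $X'$, a naive moment estimate yields factors of $(D \log n)^{\Theta(D)}$ that swamp the signal $(\E \|v_{\amp}\|^2)^2$. One must invoke the self-averaging structure of AMP -- that the coordinates $P(X)_i$ are asymptotically independent across $i$, which is the content of state evolution -- so that $\|P(X)\|^4$ concentrates at the scale $(\E\|P(X)\|^2)^2 \cdot o(1)$ rather than at a polynomial-in-$n$ scale. Establishing this under only the subgaussian hypothesis on $\calD$ (rather than Gaussianity) is the technically delicate heart of the argument and dictates the dependence of $d$ on $k$ and $t$; the same self-averaging input also drives the stability bound in the robust step.
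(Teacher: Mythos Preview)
Your non-robust argument matches the paper's overview closely: realize $v_{\amp}$ as a permutation-equivariant polynomial $P(X)$, constrain $\pE[\iprod{v,P(X)}^2]$ and $\pE[\|v\|^2]$, and round by taking the top eigenvector of $\pE[vv^\top]$. One difference worth noting: rather than bounding $\Var q_1$ directly (which, as you say, threatens to blow up), the paper decomposes $P(X)$ into a linear combination of ``tree'' and ``lumber'' polynomials, imposes \LSH constraints on each pairing $\frac{1}{n}\iprod{T_1,T_2}$ and $\frac{1}{n}\iprod{T,v}$ separately, and then reassembles. Concentration of each lumber piece follows from elementary moment calculations and does not require state evolution; this modularity is what controls the slack.

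The robust argument, however, has a genuine gap. You propose imposing prior-matching on the masked matrix $WYW$ and then reason about the case ``when $\mathrm{supp}(w)\subseteq S^\star$.'' But the SDP returns a pseudoexpectation over $(v,w)$, not an integral $w$, and there is no mechanism forcing the pseudodistribution to concentrate on $w$'s supported in $S^\star$; the adversary can arrange for many $(1-\eps)n$-sized subsets to look plausible. Your argument therefore establishes feasibility (a good integral point exists) but not correctness (every feasible pseudoexpectation rounds well). Separately, $WYW$ is zeroed out on the complement of $\mathrm{supp}(w)$, so it is never distributed like a sample from $\calP_n$, and the prior-matching constraints on $WYW$ do not have the intended semantics.

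The paper's fix is to introduce a \emph{free} proxy matrix $\hat X$ with the constraint $W_j(\hat X_{ij}-Y_{ij})=0$, plus an operator-norm bound $\hat X^2 \preceq O(1)\cdot I$ and coordinatewise $\ell_\infty$ bounds on tree polynomials $T(\hat X)_i$. The crucial step is then a \emph{sum-of-squares} proof (valid for every feasible pseudoexpectation, not just the good integral one) that $\frac{1}{n^2}\|T(X)-T(\hat X)\|_2^4 \le \eps\cdot (O(d\log n))^{2d}$ for every tree $T$. This is done by structural induction on $T$, using the identity $(X-\hat X)D_M = 0$ for the ``matching'' diagonal $D_M$ (with $M_j = W_j\cdot \Ind[Y_{\cdot,j}=X_{\cdot,j}]$) together with Cauchy--Schwarz against $\|\vec 1 - M\|_4^4 \le 2\eps n$ and the $\ell_\infty$ constraints. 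The $(d\log n)^{O(d)}$ factor arises precisely from these $\ell_\infty$ bounds in the grafting step of the induction; your proposed route via $\|X-X_{S^\star}\|_F$ outside the SDP does not produce an inequality the pseudoexpectation is forced to obey.
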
	

It is folklore that ``nice'' AMP denoisers are well-approximated by polynomials of bounded degree. 
A variety of formalizations appear in the literature (see e.g. \parencite{MW22}), but we were unable to find one that handles the setting of the Sherrington-Kirkpatrick algorithm, where the denoiser function $f_t$ depends on all previous iterates.
Hence, we prove our own polynomial approximation result (see \pref{app:amp-poly}), which gives us the following corollary of \pref{thm:main-intro}:
\begin{corollary}\label{cor:general}
Suppose $\calA$ is a $t$-step AMP algorithm with $\calF$ consisting of functions that are (1) $L$-Lipschitz, (2) have either pseudolipschitz or indicator-function derivatives, and (3) are well-conditioned (in a sense that is made precise in \pref{lem:approx-poly}).

Then for any $\delta =\Omega(1)$, there exists $d = (\frac{L}{\delta})^{O(4^t)}$ so that the degree-$(d,2)$ \LSH approximates $\calA$ (in the sense of \pref{thm:main-intro}) with error $\delta$ on $X \sim \calP_n(\calD,\calK)$.
Further for any $\eps \ge 0$ (allowing $\eps \to 0$ as $n \to \infty$), even when given as input an adversarial $\eps$-principal minor perturbation of $X$, the degree-$(d,d)$ \LSH approximates the output of $\calA$ on $X$ (in the sense of \pref{thm:main-intro}) with error $\le \eps^{1/2}(d \log n)^{O(d)} + \delta$.
\end{corollary}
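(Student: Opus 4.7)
The plan is to derive \pref{cor:general} from \pref{thm:main-intro} in three steps: replace each denoiser in $\calA$ by a bounded-degree polynomial approximation using \pref{lem:approx-poly}; show by a stability argument that the resulting polynomial AMP $\tilde{\calA}$ produces iterates close to those of $\calA$; and invoke \pref{thm:main-intro} on $\tilde{\calA}$, combining its conclusion with the closeness of $\tilde{v}_t$ to $v_{\amp}$ by a triangle-inequality argument for cosine similarity.

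First I would invoke \pref{lem:approx-poly} to produce, for each $s \le t$ and a target per-step accuracy $\delta_0$ to be fixed later, a polynomial $\tilde{f}_s$ of degree $k_s = (L/\delta_0)^{O(1)}$ approximating $f_s$ within $\delta_0$ in the weighted $L^2$-norm that captures the joint limiting distribution of $(X v_{s-1}, v_{s-1}, \ldots, v_0)$; this is where the Lipschitz, pseudolipschitz/indicator-derivative, and well-conditioning hypotheses enter. Since the Onsager correction $\Delta_s$ is a functional of $\calF$, one defines the polynomial algorithm $\tilde{\calA}$ by substituting $\tilde{f}_s$ for $f_s$ throughout and recomputing $\Delta_s$ from the $\tilde{f}_s$; the resulting denoisers are polynomials of degree at most $k \defeq \max_s k_s$.

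Next I would show by induction on $s$ that, with high probability over $X \sim \calP_n(\calD,\calK)$, the iterates satisfy $\|v_s - \tilde{v}_s\| \le C^s \delta_0$ for a constant $C$ depending on $L$ and on the high-probability bound $\|X\|_{\op} = O(1)$ guaranteed by the subgaussian tail assumption in \pref{def:rqo}. The inductive step combines (i) the operator-norm bound to control $\|X(v_{s-1}-\tilde{v}_{s-1})\|$, (ii) the Lipschitz constant $L$ of $f_s$ to propagate previous-iterate error through the denoiser, (iii) the polynomial approximation error $\delta_0$ at the current step, and (iv) a quantitative bound on how the Onsager term $\Delta_s$ depends continuously on its arguments (which follows from $\Delta_s$ being a bounded-degree polynomial functional of the previous iterates and $\calF$). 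Choosing $\delta_0$ small enough that $C^t \delta_0$ is at most a small constant fraction of $\delta \cdot \|v_{\amp}\|$ forces $k_s = (L/\delta)^{O(2^t)}$, which combined with the $d = O((2k)^{2t})$ bound from \pref{thm:main-intro} yields the claimed $d = (L/\delta)^{O(4^t)}$.

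Finally I would apply \pref{thm:main-intro} to $\tilde{\calA}$ (in both the clean and robust regimes) to obtain $v_{\lsh}$, resp.\ $v_{\lsh}(Y)$, that is $(1-\eta)$-correlated with $\tilde{v}_t(X)$, resp.\ $(1-\eps^{1/2}(d\log n)^{O(d)}-\eta)$-correlated; a standard cosine-similarity triangle inequality then upgrades correlation with $\tilde{v}_t(X)$ to correlation with $v_{\amp}(X)$ at the cost of the $\delta$ term, using the $\ell_2$-closeness established in the previous step. The main obstacle will be the inductive stability bound: a naive Lipschitz propagation yields $L^{O(t)}$-type growth per step and forces doubly-exponential degree in $t$, so one must exploit the specific structure of the AMP recursion (in particular the self-averaging behavior of $X v_s$ captured by state evolution, and the boundedness of the relevant domains guaranteed by well-conditioning) to obtain stability constants compatible with the $4^t$ exponent in the stated degree bound rather than something strictly worse.
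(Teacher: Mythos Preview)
Your proposal is correct and follows the same route as the paper: approximate the Lipschitz denoisers by polynomials (this is exactly the content of \pref{lem:approx-poly} and \pref{prop:final-approx}), then invoke the polynomial-denoiser result and combine via a triangle inequality (this is \pref{lem:corr-non-poly-amp} and \pref{cor:apx-lipschitz}).

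Two clarifications about your step~2. First, in the paper the polynomial approximation and the stability argument are not done sequentially but as a single intertwined induction: the polynomial $p^s$ is chosen to approximate $f^s$ in $L^2$ under the state-evolution covariance $\hat Q^s$ of the \emph{polynomial} iterates (which is determined by $p^0,\dots,p^{s-1}$), and the stability proof tracks $\|Q^s-\hat Q^s\|_F$ alongside $\|x^s-\hat x^s\|$. The key triangle inequality is $f^s(x)\to f^s(\hat x)\to p^s(\hat x)$, using Lipschitzness of $f^s$ for the first leg and state evolution of the polynomial AMP for the second; you cannot use Lipschitzness of $p^s$ since polynomials are not globally Lipschitz. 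Second, your worry in the last paragraph is misplaced: the stated $4^t$ exponent \emph{is} what this ``naive'' state-evolution-based argument yields. The doubly-exponential dependence arises because the covariance error satisfies a square-root recursion $g_1(t)\approx C\sqrt{g_1(t-1)}$ (coming from the passage $Q\mapsto Q^{1/2}$ in the coupling), which forces the approximation accuracy at early steps to be $\delta^{2^{O(t)}}$; compounding the resulting per-step polynomial degrees over $t$ steps gives $(L/\delta)^{O(4^t)}$. No additional structural insight is needed to hit the target bound.
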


As an application of \pref{cor:general}, we have robust polynomial-time approximation schemes for non-negative PCA and the Sherrington-Kirkpatrick problem.
\begin{corollary}[robust Sherrington-Kirkpatrick]
\label{cor:robust-sk}
Define the Sherrington-Kirkpatrick problem: 
\[
\argmax\left\{v^\top X v \mid v \in \{\pm \tfrac{1}{\sqrt{n}}\}^n\right\},
 \qquad \text{with } X_{ij} = X_{ji} \sim \calN(0,\tfrac{1}{n}) \text{ independently}.
\]
For any $\delta =\Omega(1)$, $\eps \ge 0$, there exists $d = d(\delta)$ depending only on $\delta$ so that if given an adversarial $\eps$-principal minor corruption $Y$ of $X$, the degree-$(d,d)$ \LSH can be rounded to a vector $v_{\LSH}$ which with probability $1 - o(1)$ satisfies $v_{\LSH}^\top X v_{\LSH} \ge (1-\eps^{1/4} (d\log n)^{O(d)} - \delta)\cdot \ampopt$, where $\ampopt$ is the objective value achieved by AMP as the number of iterations approaches infinity. 
 Modulo a popular conjecture \parencite{Mon21}, $\ampopt\approx 1.52\ldots$, the global optimal value.
\end{corollary}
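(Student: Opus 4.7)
The plan is to instantiate \pref{cor:general} with Montanari's AMP algorithm \parencite{Mon21} for the Sherrington--Kirkpatrick problem, and then convert the $\ell_2$-closeness guarantee into an objective-value guarantee using the standard argument from the footnote of \pref{def:rqo}.

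First I would fix the approximation parameter $\delta > 0$ and choose a number of AMP iterations $t = t(\delta)$ large enough that Montanari's algorithm, run to $t$ steps, produces an iterate $v_{\amp}(X)$ whose objective value is at least $(1-\delta/4)\cdot \optamp$ on $X \sim \calP_n(\calN(0,1/n),\{\pm 1/\sqrt n\}^n)$ with probability $1-o(1)$. This is exactly the content of Montanari's result (modulo the conjecture referenced in the corollary). One must verify that the Montanari denoisers, which are built from discretizations of the Parisi optimizer, satisfy the hypotheses of \pref{cor:general}: Lipschitz, pseudolipschitz (or indicator) derivatives, and the well-conditionedness property of \pref{lem:approx-poly}. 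This is the step I expect to require the most care, since Montanari's denoisers at step $s$ depend on all previous iterates; however, these properties are already essentially used in Montanari's own state-evolution analysis, so the verification should reduce to bookkeeping.

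Next I would apply \pref{cor:general} with error parameter $\delta' := \delta^2/C$ for a sufficiently large constant $C$, obtaining a degree $d = d(\delta)$ and a rounded vector $v_{\lsh} = v_{\lsh}(Y)$ satisfying
\[
|\iprod{v_{\lsh},v_{\amp}}| \;\ge\; \bigl(1 - \eps^{1/2}(d\log n)^{O(d)} - \delta'\bigr)\|v_{\lsh}\|\|v_{\amp}\|
\]
with probability $1-o(1)$. Call the right-hand side parenthetical factor $1-\delta_0$; note $\delta_0 \le \eps^{1/2}(d\log n)^{O(d)} + \delta^2/C$.

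Finally I would convert $\ell_2$-closeness to an objective-value bound. Normalizing so $\|v_{\lsh}\| = \|v_{\amp}\| = 1$ and possibly negating, write $v_{\lsh} = (1-\delta_0)v_{\amp} + u^\perp$ with $\|u^\perp\|^2 \le 2\delta_0$. Expanding,
\[
v_{\lsh}^\top X v_{\lsh} \;=\; (1-\delta_0)^2\, v_{\amp}^\top X v_{\amp} + 2(1-\delta_0)\, v_{\amp}^\top X u^\perp + (u^\perp)^\top X u^\perp.
\]
The cross and quadratic error terms are bounded in magnitude by $2\sqrt{2\delta_0}\,\|X\|_{\op}$ and $2\delta_0\,\|X\|_{\op}$, respectively. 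Standard GOE concentration gives $\|X\|_{\op} \le 2 + o(1)$ with probability $1-o(1)$. Hence, rescaling back to $\|v_{\amp}\| = \Theta(1)$ (which for Montanari's iterates is ensured by state evolution), we obtain
\[
v_{\lsh}^\top X v_{\lsh} \;\ge\; (1-\delta_0)^2\, v_{\amp}^\top X v_{\amp} - O(\sqrt{\delta_0}).
\]
Substituting $v_{\amp}^\top X v_{\amp} \ge (1-\delta/4)\optamp$ and $\sqrt{\delta_0} \le \eps^{1/4}(d\log n)^{O(d)} + \delta/\sqrt{C}$, and absorbing the constants (using $\optamp = \Theta(1)$), yields the claimed bound
\[
v_{\lsh}^\top X v_{\lsh} \;\ge\; \bigl(1 - \eps^{1/4}(d\log n)^{O(d)} - \delta\bigr)\cdot \optamp.
\]
The main technical step is the verification of the denoiser hypotheses of \pref{cor:general} for Montanari's algorithm; every other step is a routine combination of \pref{cor:general}, the footnote-style $\ell_2$-to-objective reduction, and the GOE operator norm bound.
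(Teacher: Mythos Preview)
Your proposal is correct and follows essentially the same route as the paper: verify that Montanari's denoisers satisfy the hypotheses of \pref{cor:general} (the paper phrases this via the equivalent \pref{cor:apx-lipschitz}, noting the $f^t$ are built from $\tanh$ and linear maps, have pseudo-Lipschitz derivatives, and by \cite[Lemma~2.2]{Mon21} have diagonal state-evolution covariance), apply the corollary, and convert $\ell_2$-closeness to an objective-value bound via the decomposition from the footnote to \pref{def:rqo} together with $\|X\|_{\op}\le 2+o(1)$. The one technical point you fold into ``bookkeeping'' but the paper calls out explicitly is the initialization mismatch: Montanari's algorithm starts at $x^0 \sim \calN(0,I)$ rather than $\vec 1$, and the paper resolves this by observing that under the iteration $x^1 = X\vec 1$ is itself $\calN(0,I)$-distributed, so running one extra step absorbs the random start into the framework.
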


\begin{corollary}[robust non-negative PCA]
\label{cor:robust-nnpca}
Define the the non-negative PCA optimization problem: 
\[
\argmax\{v^\top X v \mid v \ge 0,\,\, \|v\|=1\},
 \qquad \text{with } X_{ij} = X_{ji} \sim \calN(0,\tfrac{1}{n}) \text{ independently}.
\]
For any $\delta =\Omega(1)$, $\eps \ge 0$ there exists $d = d(\delta)$ depending only on $\delta$ so that when given an adversarial $\eps$-principal minor corruption $Y$ of $X$, the degree-$(d,d)$ \LSH can be rounded to a vector $v_{\LSH}$ which with probability $1-o(1)$ satisfies $v_{\LSH}^\top X v_{\LSH} \ge (1-\eps^{1/4} (d\log n)^{O(d)} -\delta)\cdot\sqrt{2}$, where $\sqrt{2}$ is the global optimum value.
\end{corollary}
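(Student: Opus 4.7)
The plan is to derive \pref{cor:robust-nnpca} by instantiating \pref{cor:general} with the known AMP algorithm of Montanari--Richard \parencite{MR15} for non-negative PCA, whose denoiser is the entry-wise ReLU $f_s(u) = \max(u, 0)$ and whose normalized objective $v_{\amp}^\top X v_{\amp}/\|v_{\amp}\|^2$ converges to the global optimum $\sqrt{2}$ as the iteration count $t \to \infty$. For the target $\delta$, fix $t = t(\delta)$ so that the $t$-th AMP iterate has normalized value at least $\sqrt{2}(1 - \delta/4)$ with probability $1 - o(1)$.

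Next I would verify the hypotheses of \pref{cor:general} for this AMP: the ReLU is $1$-Lipschitz with derivative the indicator $\mathbf{1}[u > 0]$, satisfying the ``indicator-function derivative'' hypothesis; the well-conditioning condition (in the sense of \pref{lem:approx-poly}) follows from the standard state-evolution analysis for nnPCA, which certifies that the limiting covariances of the iterates are bounded away from degeneracy. Applying \pref{cor:general} with internal accuracy $\delta^2$ then produces a degree $d = d(\delta)$ such that, when the degree-$(d,d)$ \LSH is run on the $\eps$-corrupted $Y$, it can be rounded to a vector $\tilde v$ satisfying $|\iprod{\tilde v, v_{\amp}}| \geq (1 - \eta)\|\tilde v\|\|v_{\amp}\|$ with $\eta \le \eps^{1/2}(d\log n)^{O(d)} + \delta^2$.

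The remaining steps are standard. To produce a feasible (non-negative) output I set $v_{\LSH}$ to be the entry-wise positive part of $\tilde v$; since $v_{\amp} \geq 0$, this operation only increases $\iprod{v_{\LSH}, v_{\amp}}$ while not increasing $\|v_{\LSH}\|$, preserving the correlation guarantee. To convert correlation into objective closeness, I decompose $v_{\LSH}/\|v_{\LSH}\| = \alpha\, \hat v_{\amp} + w$ with $\hat v_{\amp} = v_{\amp}/\|v_{\amp}\|$ and $w \perp \hat v_{\amp}$; the correlation bound forces $\|w\|^2 \leq 2\eta$, and expanding the quadratic form gives
\[
\frac{v_{\LSH}^\top X v_{\LSH}}{\|v_{\LSH}\|^2} \;\geq\; (1 - 2\eta)\cdot\frac{v_{\amp}^\top X v_{\amp}}{\|v_{\amp}\|^2} \;-\; O(\sqrt{\eta})\cdot\|X\|_{\op}.
\]
Using the high-probability bound $\|X\|_{\op} = O(1)$ for subgaussian Wigner matrices, combined with $\sqrt{\eta} = O(\eps^{1/4}(d\log n)^{O(d)} + \delta)$, yields the claimed lower bound $\sqrt{2}\bigl(1 - \eps^{1/4}(d\log n)^{O(d)} - \delta\bigr)$. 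The main thing to check is that Montanari--Richard's AMP genuinely falls within the framework of \pref{def:amp-intro}, which is exactly the polynomial-approximation content handled by \pref{app:amp-poly} and packaged in \pref{cor:general}; the other subtlety---that operator-norm control on the uncorrupted $X$ suffices, even though $v_{\LSH}$ is produced from the corrupted $Y$---holds because the objective in the statement is evaluated on $X$.
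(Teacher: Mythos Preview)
Your proposal is correct and follows essentially the same route as the paper: verify that the Montanari--Richard ReLU AMP satisfies the Lipschitz/indicator-derivative/well-conditioned hypotheses of \pref{cor:general}, invoke it at internal accuracy $\Theta(\delta^2)$ to obtain correlation $1-\eta$ with $\eta = \eps^{1/2}(d\log n)^{O(d)} + O(\delta^2)$, and then convert correlation to an objective-value bound via the decomposition $v_{\LSH} = \alpha\,\hat v_{\amp} + w$ together with the high-probability operator-norm bound $\|X\|_{\op} = O(1)$, which yields the $\eps^{1/4}$ loss. The only (harmless) difference is your extra entry-wise positive-part rounding to enforce $v_{\LSH}\ge 0$; the corollary as stated only asks for an objective-value guarantee, so the paper omits this, and if you keep it you should note that the sign of $\tilde v$ may need to be flipped first before thresholding (since the correlation guarantee is on $|\iprod{\tilde v, v_{\amp}}|$).
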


These corollaries offer a counterpoint to the lower bounds against the non-\LSH SDPs for non-negative PCA \parencite{BKW22} and optimizing the Sherrington Kirkpatrick Hamiltonian \parencite{MRX20,KB21,GJJPR20}.
In particular, we see that if an SDP incorporates prior information on the solutions of our random optimization problem in the same way that AMP does, it can overcome these lower bounds in polynomial time, even robustly. 

\begin{remark}
\pref{cor:general} requires the denoising functions to be Lipschitz and well-conditioned.
Further, in both \pref{thm:main-intro} and \pref{cor:general} the entries of $X$ must be drawn independently from a subgaussian distribution, and the \LSH degree $d$ has an exponential or doubly-exponential dependence on the number of iterations $t$, so that the result is only meaningful when $t$ is fixed as a function of $n$, and even then it is laughably impractical.
Most of these drawbacks are shared by theoretical analyses of AMP algorithms.
Typically, AMP analyses rely on careful Gaussian approximation and approximate independence of the ``noise'' portion of the iterates, and it is assumed that $t $ does not grow with $n$ and that the denoisers $f$ are Lipschitz and not too poorly behaved, as otherwise the Gaussian approximations become inaccurate.
There are a few notable exceptions where these conditions have been relaxed, see \parencite{MV21,LW22}.
Of course, AMP algorithms are still fast to implement, whereas our algorithms run in time $n^{O(d)}$.
\end{remark}

\subsection{Relationship to prior work}

\paragraph{Relating models of computation in statistical settings.}
Our work is a part of the effort to understand the relative power of different models of computation in statistical settings (e.g. \parencite{HKPRSS17,BBHLS21}).
This is useful from both the algorithms and complexity standpoints: since a leading approach in average-case complexity is to prove lower bounds against restricted models of computation, establishing a hierarchy among models of computation amplifies the usefulness of such lower bounds.

Our work establishes a new result of this form: polynomial-time SDPs can robustly simulate AMP algorithms for a broad class of random optimization problems.
The works \parencite{BKW22} and \parencite{MRX20,KB21,GJJPR20} already mentioned above had suggested that SDPs may be worse than AMP in this context; the work \parencite{BKW20} gave some evidence that this may be because the optimization formulation of SDPs also solves the harder task of certifying an upper bound on the objective value.
Here, we elucidate this phenomenon further and confirm that when not forced to solve the certification problem, SDPs can be made to perform no worse than AMP.

Perhaps closest to our work is the recent \parencite{MW22}, which is a result of this form concerning AMP in the context of spiked matrix models.
In their setting, they observe $X = vv^\top + G$ for $G$ Gaussian noise, and the goal is to estimate $v$. 
They show that for such models, AMP and bounded-degree polynomials are equivalent in power: on the one hand AMP is well-approximated by low-degree polynomials, and on the other hand the AMP polynomials achieve the optimal estimation error among all bounded-degree polynomials.
Their work complements ours, and the techniques and technical challenges are almost completely distinct.
\LSH is thought (but not formally known) to be at least as powerful as low-degree polynomials for estimation; in light of \parencite{MW22}, our results are consistent with this hypothesis (though the results are a bit incomparable, because we do not consider the spiked setting).
Their work does not have algorithmic consequences for robustness.

\paragraph{The Local Statistics Hierarchy and SDPs for random estimation problems.}
The \LSH was proposed by Banks, Raghavendra, and Mohanty in \parencite{BMR21}.
They proposed the general framework described in \pref{def:LSH},
but studied it only in the specific context of community detection in the stochastic block model (SBM). 
In the SBM, $X$ is the adjacency matrix of a random graph of average degree $O(1)$ with a {\em planted} sparse $k$-partition, and the goal is to recover the planted partition.
When the signal-to-noise ratio is small, the global balanced minimum $k$-partition is not especially correlated with the planted partition, and so the natural optimization SDP relaxation for the problem should fail (because the exact integer solution to minimum $k$-partition fails).

In light of this, \cite{BMR21} suggest \LSH to give the SDP access to prior information about the joint distribution over the planted partition and the observed graph.
This approach was inspired in part by \cite{HS17}, in which SDPs given access to the appropriate moment tensors in $X$ were used to recover planted partitions.
\cite{BMR21} were unable to show that \LSH estimates the planted partition, but they do show that degree-$(D,2)$ \LSH for $D$ a large enough constant can distinguish, or hypothesis test, between stochastic block model graphs and \erdos-\renyi graphs, even when the graph is adversarially corrupted (in \pref{sec:robust-model} we discuss their noise model).
Building on their work, \cite{DORS21} show that with a different SDP relaxation (related to the \LSH hierarchy, but not exactly the same) can estimate the planted partition robustly. 
Taking a slightly different approach, \cite{LM22} use a different SDP in combination with a boosting procedure to obtain minimax-optimal robust recovery for the sparse stochastic block model, albeit in a slightly weaker adversarial corruption model.

Our paper is the first to use the \LSH hierarchy to perform robust estimation.
One difference between our use of \LSH and that of \parencite{BMR21} is that rather than using the moments of the joint distribution over matrices $X$ and optimizers $v$ of $v^\top X v$, we are specifically using the joint distribution over $X$ and $v_{\amp}$, the solution returned by AMP on $X$. 
This is {\em crucial} for our success; in fact, it is not clear that the task of sampling from the posterior over near-optimal $v$ conditioned on $X$ is a computationally tractable task. 
In contrast to these prior works, our paper is primarily relevant in the ``dense'' setting when a constant fraction of the entries of $X$ are nonzero, whereas the prior works are concerned with the setting where $X$ is the adjacency matrix of a sparse graph.

\paragraph{Approximate Message Passing.}
As we have already mentioned above, AMP is a popular algorithm that has found a wealth of applications in high-dimensional algorithmic statistics, see e.g. the surveys \parencite{Mon12,FVRR22}.
AMP is known to achieve Bayes-optimal error rates (i.e. minimize the mean squared error) for a number of spiked matrix problems and beyond \parencite{FVRR22}.
Because of the success of AMP in these settings, people often prove lower bounds against AMP as a restricted model of computation in order to better understand information-computation gaps \parencite{CMW20,CM22}.

In this context, it is interesting to understand when AMP is or is not more powerful than semidefinite programs, and whether lower bounds for one model can rule out the success of the other.
SDPs are known to outperform AMP in some contexts; for example, in the tensor version of PCA, SDPs dramatically dominate AMP (even ignoring issues of robustness), succeeding at asymptotically smaller signal-to-noise ratios \parencite{RM14,HSS15}.\footnote{Recently, \cite{WAM19} showed that a spectral algorithm inspired by message-passing algorithms (corresponding to the Kikuchi free energy in graphical models) match the performance of SDPs for tensor PCA, thus ``partially redeeming'' message passing algorithms in this context.} 
On the other hand, there are contexts where AMP algorithms are known to outperform the natural optimization SDPs, such as non-negative PCA and the SK problem \parencite{BKW22,MR15,Mon21,GJJPR20}.
Our work shows that polynomial-time SDPs dominate AMP algorithms when the correct SDP is used (albeit with a much slower running time).

\subsection{Discussion and open problems}

Our results show that a broad class of AMP algorithms can be simulated by \LSH semidefinite programs, even in the presence of adversarial corruptions of principal minors of measure $1/\polylog(n)$.
The corresponding SDP is polynomial-time, but we are only able to guarantee the success of a {\em very} large \LSH relaxation, rendering our algorithms dramatically slower than the corresponding AMP algorithm.
This immediately raises the question: can AMP be robustly simulated by more practical semidefinite programs? 
One reason to hope for an affirmative answer is that this is known to hold in the related context of Belief Propagation in the stochastic block model \parencite{LM22}.

Our paper is the first to use the \LSH in the context of estimation/optimization.
We calibrate the \LSH to the joint distribution $(X,v_{\amp}) \sim \calP$, and the analysis of our algorithm is predicated on the fact that the marginal distribution $\calP\mid_X$ on $v_{\amp}$ given $X$ is effectively supported on a point mass; that is, the solution output by AMP, $v_{\amp}$ is a deterministic function of the input.
This is the setting of most AMP algorithms, but one might hope that \LSH (and maybe also AMP) algorithms are useful in more complex situations, when $\calP|_X$ has more entropy, and perhaps not only for estimation but also for sampling.
For example, in the Sherrington-Kirkpatrick optimization problem there are exponentially many solutions of objective value $\eps$-close to the optimum; is it possible to use \LSH to sample from these?
It would also be interesting to combine the optimization capabilities of SDPs with the \LSH constraints, perhaps surpassing the theoretical guarantees of AMP.

In the non-robust context, our \LSH algorithm is a bit silly. 
Since we are calibrating \LSH to the joint distribution $(X,v_{\amp})$, the linear constraints of the SDP are essentially running AMP, without the benefit of the highly-efficient iterative implementation. 
But even here, we find it quite remarkable that {\em \LSH simulates AMP in a black-box fashion}. 
When we program the local statistics hierarchy SDP, we do not need to know the AMP algorithm.
We don't need to know what the denoising functions are; we just need to know some low-order statistics the joint distribution over inputs and AMP outputs, which could be handed to us by some oracle.
We think this emphasizes the power and ``universality'' of the \LSH as formulated by \cite{BMR21}.
We wonder whether SDP-based approaches which do take the structure of the AMP denoisers more directly into account could perhaps yield robustness with more practical running times.

Lastly, there is a question of the optimality of our algorithms in the robust context (in terms of the approximation error). 
Our guarantees are only meaningful if the measure of the principal minor corrupted is at most $\frac{1}{\polylog n}$.
Specifically,
\pref{cor:general} guarantees that when our observed matrix $X$ has an $\eps$-fraction of corruptions, we can recover a solution which is $(1-\sqrt{\eps}\polylog n)$-correlated with the AMP solution; the exponent of the logarithm depends doubly exponentially on the number of iterations of AMP (this can be improved to an exponential dependence if the denoisers are degree-$k$ polynomials). 
It would be interesting to understand if these rates are tight.
Could a constant fraction of errors be tolerated?
Could a $1/\polylog(n)$ fraction of errors be tolerated by a faster algorithm?

\subsubsection{Notions of robustness for AMP}\label{sec:robust-model}
We verify that AMP is not robust to the adversarial $\eps$-principal minor contamination model unless $\eps \ll \frac{1}{\sqrt{n}}$, as witnessed by the following example.
Our theorems tolerate $\eps = O(1/\polylog(n))$.
\begin{example}[AMP is not robust to principal minor corruptions]
Consider matrix power iteration, which is AMP with the denoiser $f(x) = x$.
Suppose that $X_{ij} \sim \calN(0,\frac{1}{n})$ iid, but instead AMP is given $Y = X + \frac{1}{\sqrt{n}} 1_{S} 1_S^\top$ for $1_S$ the restriction of the all-1 vector to $S \subset [n]$, $|S| = \eps n$.\footnote{The coefficient $\frac{1}{\sqrt{n}}$ chosen so that the typical entry of the perturbation will not be obviously larger than a typical entry of $X$; a larger coefficient would have been technically fine.
}
With high probability $\|X\|_{op} = \Theta(1)$, whereas $Y$'s top eigenvector is close to $1_S$, with eigenvalue $\eps \sqrt{n}$.
Hence, AMP incorrectly converges, eventually, to $\approx 1_S$ as long as $\eps \gg \frac{1}{\sqrt{n}}$.
Further, recall that we start at $v_0 = \vec{1}$, a ``warm start'' with $\Iprod{\frac{1}{\|v_0\|}v_0,\frac{1}{\|{1_S}\|}1_S} \ge \sqrt{\eps}\gg \frac{1}{\sqrt{n}}$; a calculation then shows that even the $t$'th iterate for $t = O(1)$ is equal to $1_S$ up to low-order noise, so long as $\sqrt{\eps}(\eps \sqrt{n})^t \gg 1$.
\end{example}

A priori one might have hoped for an algorithm which robustly simulates AMP even in the $\eps$-strong contamination model: when the corrupted $Y$ satisfies only that the support of $X-Y$ has size at most $\eps n^2$ (not necessarily taking the form of a principal minor).
The following observation shows that this is {\em information theoretically impossible} in the context of AMP.

\begin{observation}[Impossibility of robustness to strong contamination]
Suppose $X$ is a symmetric matrix with entries chosen iid and uniform from $\{\pm 1\}$.
Consider the single-step AMP algorithm which computes $v_\amp = \frac{1}{\sqrt{n}}X\vec{1}$. 
Note that $v_\amp(i)$ is proportional to the $i$th row sum of $X$.

We design an $\eps$-strong contamination $Y$ in which all row sums are zero, so that it is information-theoretically impossible to determine the signs of the row sums of $X$ with accuracy much better than a random guess.
At first we let $Y^{(0)} = X$.
For each $i \in [n]$: (1) copy over $Y^{(i)} = Y^{(i-1)}$, then (2) letting $b_i = (Y^{(i-1)}\vec{1})_i$, sample $|b_i|$ entries with sign $\mathrm{sign}(b_i)$ uniformly at random of the $i$th row of $Y^{(i-1)}$, and set them to zero in $Y^{(i)}$, after which (3) zero out the corresponding entries of column $i$ of $Y^{(i)}$ to maintain symmetry.
Finally, take $Y = Y^{(n)}$.

It is not difficult to show that the total number of entries corrupted in this process is $\sum_{i=1}^n 2|b_i| = O(n^{3/2})$ with high probability, which is $\le \eps n^2$ so long as $\eps = \Omega(n^{-1/2})$.
Furthermore, since the row sums of $X$ are weakly correlated sums of iid signs, it is information-theoretically impossible to infer the signs of each of the row sums of $X$ from $Y$ with accuracy much better than a random guess.\footnote{Formalizing this would go as follows: each of the $|b_i|$ changed entries are almost equally likely to have been positive or negative, even given the signs of $b_1,\ldots,b_{i-1}$, because almost all the $|b_j| = O(\sqrt{n\log n})$ for $j < i$ and because the zeroed entries are chosen at random, with very high probability there will be at most $O(\log n)$ zeroed out entries in row $i$ resulting from previously processed rows $j$.
This is not enough to appreciably influence the sign of the $i$th row sum at step $i$ (compared with its previous sum).
} 
Hence given only access to $Y$, one cannot approximate $v_\amp$ with any appreciable accuracy.
\end{observation}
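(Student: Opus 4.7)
My plan formalizes the Observation's two technical claims: (i) the corrupted matrix $Y$ differs from $X$ in at most $\eps n^2$ entries when $\eps = \Omega(n^{-1/2})$, and (ii) given $Y$, no estimator can recover $\mathrm{sign}(b_i)$ with accuracy noticeably above $1/2$. Claim (ii) immediately implies the stated impossibility: since $v_\amp = \frac{1}{\sqrt n} X \vec{1}$ has coordinates $v_\amp(i) = b_i/\sqrt n$ whose signs are information-theoretically unrecoverable from $Y$, $\mathbb{E}[b_i \mid Y] = o(|b_i|)$ for typical $i$, and a Cauchy--Schwarz argument applied to $\iprod{\hat v, v_\amp}$ yields $|\iprod{\hat v, v_\amp}|/(\|\hat v\|\|v_\amp\|) = o(1)$ w.h.p.\ for every $Y$-measurable $\hat v$.

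For (i), I would proceed by induction on $i$ to show $|b_i| = O(\sqrt{n \log n})$ with probability $1 - n^{-\omega(1)}$. Write $b_i = (X\vec 1)_i - \Delta_i$, where $\Delta_i$ collects the entries $(i, j)$ with $j < i$ that were zeroed during the symmetrization at step $j$. Under the inductive hypothesis each such event has conditional probability $O(|b_j|/n) = O(\sqrt{\log n / n})$ with near-independent randomness across $j$, so Bernstein's inequality gives $|\Delta_i| = O(\sqrt{n \log n})$ w.h.p., while Hoeffding controls $|(X\vec 1)_i|$. A union bound over $i$ and summation yield $\sum_i 2|b_i| = O(n^{3/2}\sqrt{\log n}) \le \eps n^2$ provided $\eps = \Omega(n^{-1/2}\sqrt{\log n})$.

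For (ii), the strategy is to show that the conditional distributions $\mathbb{P}(Y \mid \mathrm{sign}(b_i) = +)$ and $\mathbb{P}(Y \mid \mathrm{sign}(b_i) = -)$ are within total-variation distance $o(1)$. Intuitively, by the adversary's design, in both scenarios row $i$ of $Y$ consists of $|b_i|$ zeros and an equal number of $\pm 1$'s, with the zero locations drawn as a uniformly random subset of a $\Theta(n)$-sized pool that does not reveal $\mathrm{sign}(b_i)$. To formalize, I would construct a near-measure-preserving coupling $(X, Z) \leftrightarrow (X', Z')$, where $Z$ is the adversary's internal randomness, that flips the signs of $X$ at the row-$i$ and column-$i$ positions of $S_i$ (the step-$i$ zero-set) and re-samples $Z$ at step $i$ to pick the opposite-sign subset. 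Since all flipped positions are zero in $Y$, the observation $Y$ is preserved while $\mathrm{sign}(b_i)$ is inverted, giving $\mathbb{P}(\mathrm{sign}(b_i) = + \mid Y) = \tfrac{1}{2} + o(1)$.

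The main obstacle is the history-dependence of the adversary's process. Flipping $X_{ji}$ for $j \in S_i \cap \{1, \ldots, i-1\}$ alters $b_j$ at the earlier step $j$ (where $(j,i)$ is not yet zeroed) by $\pm 2$. While $\mathrm{sign}(b_j)$ is preserved since $|b_j| \gg 2$ w.h.p., the size $|S_j|$ shifts by $\pm 2$, forcing the adversary's subsequent choices to be re-coupled if $Y$ is to remain identical. I would resolve this via a layered total-variation argument: execute the coupling step-by-step, exploiting the near-uniform distribution of each $S_j$ over $\Theta(n)$ candidates to show that the marginal law of row $j$ of $Y$ (what matters downstream) is preserved up to TV $O(1/n)$ per affected step. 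With only $O(\sqrt{n \log n})$ such steps in expectation, the cumulative TV remains $o(1)$ and the coupling closes, yielding the required posterior bound.
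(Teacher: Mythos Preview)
Your proposal is correct and follows essentially the same line as the paper's own sketch (which appears entirely within the Observation and its footnote rather than as a separate proof). You add considerably more rigor, particularly in part (ii), where you propose an explicit coupling to bound the total-variation distance; the paper simply asserts that the zeroed entries are ``almost equally likely to have been positive or negative'' and that the $O(\log n)$ entries of row $i$ affected by earlier steps cannot appreciably shift the sign of $b_i$. Your identification of the history-dependence obstacle and the layered TV resolution is a genuine refinement over what the paper writes.

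One minor quantitative point: you obtain $\sum_i 2|b_i| = O(n^{3/2}\sqrt{\log n})$ by bounding each $|b_i|$ individually, which forces $\eps = \Omega(n^{-1/2}\sqrt{\log n})$. The paper's sharper $O(n^{3/2})$ (and hence $\eps = \Omega(n^{-1/2})$) follows by instead noting $\E|b_i| = \Theta(\sqrt n)$ and that the sum $\sum_i |b_i|$ concentrates; this avoids the union-bound log factor. Also note the paper claims only $O(\log n)$ entries of row $i$ are zeroed by earlier steps, whereas your estimate gives $O(\sqrt{n\log n})$ in expectation; your bound is the correct one for $i$ near $n$, but either suffices for the sign-stability conclusion since the zeroed entries have roughly balanced signs.
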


Given this, we find our corruption model to be quite strong; further, it is consistent with the corruptions considered in work on robust algorithms for community detection in the stochastic block model (SBM).
In the sparse SBM, \cite{BMR21,DORS21} can recover a $\poly(\eps)$-approximate solution when any $O(\eps n) = O(\eps \|X\|_F^2)$ entries are adversarially corrupted---but clearly, these entries must be contained in an $2\eps n \times 2\eps n$ principal minor.
We can also handle $O(\eps \|X\|_F^2)$ corrupted entries, provided they occur in a principal minor of measure $\poly(\eps)$ (though we require $\eps = O(1/\polylog n)$).

\paragraph{Organization}
\pref{sec:overview} is a technical overview.
\pref{sec:prelims} is dedicated to preliminaries, including background on AMP, setup for the sum-of-squares proofs, and definitions of a useful basis of $\calS_n$-symmetric polynomials.
In \pref{sec:robust} we give our robust algorithm and prove \pref{thm:main-intro}.
We give some proofs of concentration in \pref{app:hyp} and of polynomial approximation in \pref{app:amp-poly}.

\section{Proof overview}
\label{sec:overview}

For simplicity of exposition, suppose that $\calA$ is an AMP algorithm as in \pref{def:amp-intro} which applies the same denoiser at each of its $t$ iterations, and that every iteration only depends on the previous iteration, so that there exists an $f:\R \to \R$ so that $\calF = f_1,f_2,\ldots$ satisfies 
\[
f_s(Xv_{s-1}, v_{s-1},v_{s-2},\ldots,v_0) = f(Xv_{s-1}) \quad \text{for all }s \le t.
\]
This simpler case captures the main ideas.
We will also ignore the Onsager correction (this will greatly reduce bookkeeping and slightly simplify the proof).

\paragraph{Non-robust simulation with polynomial denoisers.}
First, consider the non-robust setting, in which we observe $X \sim \calP_n(\calD,\calK)$ and the \AMP algorithm maximizes the objective $v^\top X v$ subject to $v \in \calK$.
Suppose first that the denoiser $f$ is a polynomial of degree at most $k$.

Letting $d = k^{O(t)}$, we assume that we have access to the degree-$(d,2)$ moments of the joint distribution over pairs $(X',v_{t}(X'))$, where $X' \sim \calP_n(\calD,\calK)$ and $v_{t}(X')$ is the $t$th iterate of the AMP algorithm $\calA$ on the input $X'$.
If these moments are not known to us in advance, we can estimate the moments of this distribution up to arbitrary accuracy in polynomial time by sampling a sequence of $X_1,X_2,\ldots \sim \calP_n(\calD,\calK)$ and running the AMP algorithm $\calA$ on each of them. 

We then set up the local statistics hierarchy (\pref{def:LSH}) of degree-$(d,2)$ so that it will return a linear operator $\pE: v^{\le d} \to \R$ satisfying the linear constraints
\[
\pE q(X,v)  = \E_{X'\sim \calP_n} q(X',v_{t}(X')) \pm C \sqrt{\Var_{X'\sim \calP_n} q(X',v_{t}(X'))},
\]
for any $\calS_n$-symmetric polynomial $q(X,v)$ and $C$ a large enough constant.\footnote{In \pref{sec:alg} we add only the subset of these constraints that will be useful in our analysis.}
For the polynomials $q$ we care about, the standard deviation term will be of lower order, so that this is effectively enforcing 
\[
\pE q(X,v) = (1\pm \delta)\E_{X' \sim \calP_n} q(X',v_t(X'))
\]
for $\delta$ an arbitrarily small constant.
We make use of a $d^{O(d^2)}$-sized basis of polynomials to enforce this with only $d^{O(d^2)}$ linear SDP constraints.
Because of the concentration of low-degree polynomials in $X$, as long as $C$ is a large enough constant the SDP will be feasible with high probability (as witnessed by the ``integral solution'' pseudoexpectation given by $v = v_t(X)$).

Since we have assumed the denoiser $f$ is a degree-$k$ polynomial, and since AMP iteratively defines
\[
v_t = f(X v_{t-1}) = f(X f(X v_{t-2})) = \cdots = f(Xf(Xf(X \cdots f(Xv_0) \cdots ))),
\]
we can express $v_t$ as a vector-valued polynomial in $X$, $v_t = v_t(X)$.
It is also the case that $v_t$ is $\calS_n$-symmetric: since $v_0 \propto \vec{1}$, permuting the rows and columns of $X$ and of $v_t$ in a consistent way fixes the vector-valued function $v_t$.
Because $v_t(X)$ is the $t$-fold composition of the function $f(X\cdot)$ applied to $v_0$, the degree of $v_t$ in $X$ is at most $k(\deg(v_{t-1} +1))$. 
Solving this recurrence gives $\deg(v_t) = O(k)^t$.

Because of the symmetry of $v_t$, the polynomial $\iprod{v_t(X),v}^2$ is $\calS_n$ symmetric in $v$ and $X$ of degree at most $(O(k)^t,2)$. 
Hence our \LSH relaxation automatically includes the constraint
\begin{equation}
\pE \left[\Iprod{v_t(X),v}^2\right] 
\ge (1-\delta) \E_{X'}[\iprod{v_t(X'),v_t(X')}^2]
 = (1-\delta)\E_{X'}[\|v_t(X')\|^4]
\ge (1-\delta)\E_{X'}[\|v_t(X')\|^2]^2
\label{eq:main}
\end{equation}
where $\delta$ comes from the slack in our \LSH constraints.
This implies (by re-writing the left-hand side)
\begin{equation}
v_t(X)^\top\left(\pE[vv^\top]\right) v_t(X) 
\ge (1-\delta)\E_{X'}[\|v_t(X')\|^2]^2
\end{equation}
Further, we also have the \LSH constraint
\[
\tr(\pE[vv^\top]) = \pE[\|v\|^2] \le (1+\delta) \E_{X'}[\|v_t(X')\|^2],
\] 
and by the concentration of sums of low-degree polynomials in the subgaussian distribution $\calD$, with high probability $\|v_t(X)\|^2_2 = (1\pm o(1)) \E_{X'} \|v_t(X')\|^2$.
Putting these together,
\[
\tfrac{v_t(X)}{\|v_t(X)\|}^\top (\pE[vv^\top]) \tfrac{v_t(X)}{\|v_t(X)\|} \ge (1-o(1))\tfrac{1-\delta}{1+\delta} \tr(\pE[vv^\top]).
\]
Hence \pref{eq:main} and the positive-semidefiniteness of $\pE[vv^\top]$ implies that $\pE[vv^\top]$ is well-enough correlated with a rank-$1$ matrix that the top eigenvector $v^*$ of $\pE[vv^\top]$ must be proportional to $v_t(X)$, by the following easy-to-prove claim (see the proof of \pref{thm:ramp-recovery-1}):
\begin{claim}
If $A$ is a positive-semidefinite matrix and $u$ is a unit vector satisfying $u^\top A u \ge (1-\eta)\tr(A)$, the top eigenvector $v$ of $A$ satisfies $\iprod{u,v}^2 \ge 1-2\eta$.
\end{claim}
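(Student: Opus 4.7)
The plan is to diagonalize $A$ and do a direct spectral calculation. Write $A = \sum_i \lambda_i v_i v_i^\top$ in decreasing order of eigenvalues, with $\lambda_1 \ge \lambda_2 \ge \cdots \ge 0$ (non-negativity thanks to PSD-ness) and $v = v_1$. Expand the unit vector $u = \sum_i \alpha_i v_i$, so that $\sum_i \alpha_i^2 = 1$ and $\iprod{u,v}^2 = \alpha_1^2$. The hypothesis then reads
\[
(1-\eta)\tr(A) \le u^\top A u = \sum_{i} \alpha_i^2 \lambda_i.
\]

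Next, I would produce two simple estimates. First, using $\alpha_i^2 \le 1$ for $i \ge 2$, split off the top term to get the upper bound
\[
u^\top A u \le \alpha_1^2 \lambda_1 + \sum_{i \ge 2} \lambda_i = \alpha_1^2 \lambda_1 + \bigl(\tr(A) - \lambda_1\bigr).
\]
Combining with the hypothesis and rearranging yields $(1 - \eta - \alpha_1^2)\lambda_1 \le \eta \sum_{i \ge 2}\lambda_i$. Second, because $u^\top A u \le \lambda_1$ for any unit $u$, the hypothesis also forces $\lambda_1 \ge (1-\eta)\tr(A)$, so $\sum_{i \ge 2}\lambda_i \le \eta\,\tr(A) \le \tfrac{\eta}{1-\eta}\lambda_1$ (assuming without loss of generality $\eta < 1/2$; otherwise the claimed bound is vacuous).

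Plugging the second estimate into the first gives
\[
(1 - \eta - \alpha_1^2)\lambda_1 \le \frac{\eta^2}{1-\eta}\lambda_1,
\]
and dividing by $\lambda_1 > 0$ (the $\lambda_1 = 0$ case is trivial, as then $A = 0$) and simplifying yields $\alpha_1^2 \ge \frac{1-2\eta}{1-\eta}$. A one-line check shows $\frac{1-2\eta}{1-\eta} \ge 1-2\eta$ for $\eta \in [0,1/2]$, completing the proof.

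There is no real obstacle here; the argument is a two-line spectral manipulation. The only subtlety worth flagging is that $\alpha_1^2$ is the inner product with the \emph{top} eigenvector, so it is important that the upper bound $\sum_{i \ge 2}\alpha_i^2 \lambda_i \le \sum_{i \ge 2}\lambda_i$ separates out $\lambda_1$ (and not some arbitrary eigenvalue). Everything else is arithmetic, and the looseness $(1-2\eta)/(1-\eta) \to 1-2\eta$ accounts for the factor of $2$ in the statement.
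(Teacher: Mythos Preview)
Your proof is correct and follows essentially the same approach as the paper: diagonalize, use $\lambda_1 \ge u^\top A u \ge (1-\eta)\tr(A)$ to get $\sum_{i\ge 2}\lambda_i \le \eta\,\tr(A)$, and bound $\sum_{i\ge 2}\alpha_i^2\lambda_i \le \sum_{i\ge 2}\lambda_i$. The only difference is cosmetic: the paper divides through by $\lambda_1$ and uses $\lambda_1 \le \tr(A)$ directly, obtaining $\alpha_1^2 \ge \tfrac{1}{\lambda_1}\bigl((1-\eta)\tr(A) - \eta\,\tr(A)\bigr) \ge 1-2\eta$ in one line, whereas you route through $\tr(A) \le \tfrac{\lambda_1}{1-\eta}$ and land at $(1-2\eta)/(1-\eta)$ before rounding down.
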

Taking $A = \pE[vv^\top]$ and $\eta = O(\delta)$ we may conclude that $\Iprod{v^*,\frac{v_t(X)}{\|v_t(X)\|}}^2 \ge 1-O(\delta)$, so eigenvector rounding will yield a solution which approximates the AMP solution.
This proves that when $f$ is a polynomial of degree $k$, degree-$(O(k)^t,2)$ \LSH can simulate AMP.

\paragraph{Robust simulation with polynomial denoisers.}
Suppose that instead of being given a clean sample, our input matrix $Y$ is a corrupted observation of a matrix $X \sim \calD$, where our only guarantee is that $X-Y$ is supported on a principal minor of dimension $\le \eps n$.
We will modify the \LSH in a manner inspired by prior works in the \sos robust-statistics literature, by adding SDP variables $\Xprox_{ij}$ and $W_{i}$ for each $i,j \in [n]$, where $\Xprox_{ij}$ is a proxy variable which represents our best guess for $X_{ij}$, and $W_{i}$ is a variable that represents the indicator that $\Xprox_{i,\cdot} = Y_{i,\cdot}$, i.e. that row $i$ is uncorrupted.
To capture this, we add the polynomial constraints $W_i^2 = W_i$ (Booleanity), $W_{i}(\Xprox_{ij} - Y_{ij}) = 0$ (encoding $\Xprox_{ij} = Y_{ij}$ if row $i$ is uncorrupted) and $\sum_{i} W_{i} = (1-\eps) n$ (the total fraction of uncorrupted rows is $\ge (1-\eps)$).
The $Y_{ij}$ variables are now our ``observed'' variables, and the $v,W,\Xprox$ are our ``program'' variables.

We also add, in addition to the typical local statistics constraints, the operator norm constraint
\[
\Xprox^2 \psdle (1+\gamma)\E_{X' \sim \calP_n}[\|X'\|_{\op}]^2 \cdot \Id,
\]
and some ``fixed-coordinate'' local statistics constraints 
\[
|q(\Xprox)| \le C_n \cdot \sqrt{\E_{X' \sim \calP_n} q(X')^2}, \quad \forall q \in \calQ_i,\quad \forall i \in [n],
\] 
where $\calQ_i$ is the set of all polynomials in $X$ which are fixed by the action of $\calS_{n-1}$ on coordinates $[n]\setminus \{i\}$.
Both of these constraints can be encoded in the \sos SDP; again, using a convenient basis, the latter only requires $n \cdot d^{O(d^2)}$ linear constraints.\footnote{In \pref{sec:alg} we will only the subset of these constraints that we make use of in our proof.}
We choose $\gamma$ and $C_n$ large enough so that this is satisfied by $\Xprox = X$ with high probability over $X$.
We can choose $\gamma$ to be a very small constant because the operator norm of $X$ concentrates very well.
As the fixed-coordinate constraints have to hold simultaneously for all $i \in [n]$, it is necessary and sufficient to take $C_n = \Theta((\log n)^{\deg(q)/2})$.

The AMP solution $v_t$ is a polynomial of degree at most $d = O(k)^t$, but it also has special structure: since it results from the iterated application of a polynomial function, $v_t$ belongs to a special class of vector-valued polynomials which we call ``forest polynomials'' (because the computation graph for $v_t$ looks like a weighted sum of trees). 
Letting $D =O(d)$ (now $D$ represents our degree in both $v$ and the program variables $\Xprox,W$), we prove in the degree-$(d,D)$ sum-of-squares proof system that any vector-valued forest polynomial $u(X)$ of degree at most $d$ satisfies
\begin{equation}\label{eq:close}
\pE[\|u(X) -u(\Xprox)\|^2] \le \alpha\E[\|u(X)\|^2], \quad\text{ for }\alpha = O(\sqrt{\eps}) (d\log n)^{O(d)}.
\end{equation}
We'll explain how \pref{eq:close} is proven below.
Applying \pref{eq:close} with $u=v_t$ shows that $\E_X\pE[\|v_t(X)- v_t(\Xprox)\|^2 ] \le \alpha \pE[\|v_t(X)\|^2] \approx \alpha\pE[\|v\|^2]$.
Now the argument is easily finished by appealing to the reasoning in the previous section, which tells us that $\pE[\|v - v_t(\Xprox)\|^2] \ll \pE[\|v\|^2]$.
Combining these using the (\sos) triangle inequality and concentration of $\|v_t(X)\|$ gives that 
\[
\pE[\|v - v_t(X)\|^2] \le O(\alpha)\pE[\|v\|^2] \implies (1-O(\alpha))\pE[\|v\|^2]\le \pE[\iprod{v,v_t(X)}].
\]
Since the non-negativity of the variance has a sum-of-squares proof,
$\pE[\iprod{v,v_t(X)}^2] \ge (1-O(\alpha)) \pE[\|v\|^2]^2$, and we can follow the same chain of reasoning as previously to conclude that the top eigenvector of $\pE[vv^\top]$ is $1-O(\alpha)$ correlated with the output of AMP on the uncorrupted matrix, $v_t(X)$.

\medskip
We now sketch the ideas in the proof of \pref{eq:close}:
the AMP solution $v_t(X)$ is formed by iterative application of (i) multiplication by $X$ and (ii) entry-wise applications of a polynomial denoiser (potentially a different denoiser at each step, $f_s$ at step $s$), starting from $v_0 = \vec{1}$.
The proof of \pref{eq:close} more-or-less reduces to the case where each denoiser $f_s$ is homogeneous, $\{f_s(x) = x^{\ell_s}\}_{s=1}^t$, as $v_t$ is roughly a linear combination of such terms.\footnote{To express $v_t$ we have to allow trees with more irregular degree structure, in which nodes at the same depth may have different degrees. The proof for such trees is the same.}
We call each such term a ``tree'' and a linear combination of such terms a ``forest.''\footnote{The Onsager correction term requires a bit of additional work.}

The proof then proceeds by induction on the polynomial degree of the vector: the base case is the degree-$0$ tree polynomial, $u(X) = \vec{1}$, which clearly satisfies \pref{eq:close} since it does not depend on its input. 

The induction step will have two cases.
In the first case, suppose we have $u(X) = X w(X)$ for $w(X)$ any degree-$(d-1)$ tree polynomial.
This is the ``heart'' of the argument, as this is the step at which we utilize the robustness constraints.
We can write
\begin{align}
\|u(X) - u(\Xprox)\|_2^2 
&= \|X w(X) - \Xprox w(\Xprox)\|_2^2\nonumber \\
&= \|\tfrac{1}{2}(X - \Xprox)(w(X) + w(\Xprox)) + \tfrac{1}{2}(X+\Xprox)(w(X)- w(\Xprox))\|_2^2\nonumber\\
&\le \|(X - \Xprox)(w(X) + w(\Xprox))\|^2 + \|(X+\Xprox)(w(X)- w(\Xprox))\|_2^2\label{eq:two-term}
\end{align}
To bound the first term, define $U_i = \Ind[\text{row/col }i\text{ uncorrupted in }Y]$.
Define the \sos variable $V_i = 1-W_iU_i$ for $W_i$ the \sos robustness variable defined above; $V_i$ represents the indicator that either $Y_{i,\cdot} \neq X_{i,\cdot}$ or $\Xprox_{i,\cdot} \neq Y_{i,\cdot}$.
Finally, let $D_V = \diag(V)$.
Our robustness constraints (and the symmetry of $X,\Xprox$) imply that 
\[
(X_{ij} - \Xprox_{ij})V_j = (X_{ij} - \Xprox_{ij}) \implies (X-\Xprox)D_V = X-\Xprox
\]
as can be proven by a short, dull \sos proof which expresses in algebra the idea that $X_{ij} = \Xprox_{ij}$, unless either $X_{\cdot,j} \neq Y_{\cdot,j}$ or $Y_{\cdot,j} \neq \Xprox_{\cdot,j}$.
Hence, by the ``triangle inequality'' $(A+B)^2 \le 2A^2 + 2B^2$,
\[
\|(X-\Xprox)(w(X) + w(\Xprox)\|^2 
= \|(X-\Xprox)D_V(w(X) + w(\Xprox)\|^2 
\le 2\left(\|X\|_{\op}^2 + \Xprox\|_{\op}^2\right)\|D_V(w(X) + w(\Xprox))\|^2.
\]
Since $\|X\|_{\op} = O(1)$ with high probability, and we have enforced an upper bound on the operator norm of $\Xprox$ as well, the operator norms are $O(1)$.
We now make use of the fact that $D_V$ is $O(\eps)$-sparse, and the vector $w(X) + w(\hat X)$ is delocalized:
\[
\|D_V(w(X) + w(\Xprox))\|^2 = \sum_i V_i^2 (w(X) + w(\Xprox))_i^2 \le \sqrt{\|V\|_4^4 \cdot \|w(X) + w(\Xprox)\|_4^4} \le \sqrt{\|V\|_4^4\cdot 4\left(\|w(X)\|_4^4 +\|w(\Xprox)\|_4^4\right)}
\]
where the inequalities are Cauchy-Schwarz and $(A+B)^4 \le 4(A^4 + B^4)$.
The Booleanity constraints imply that $V_i^4 = V_i$, and the fact that the fraction of corrupted rows is at most $\eps$ imply that
\[
\|V\|_4^4 = \sum_i V_i = \sum_{i} 1-W_iU_i  \le \sum_i (2- W_i - U_i) \le 2\eps n.
\]
At the same time, $w(X)$ is a $\calS_n$-symmetric, degree-$(d-1)$ vector-valued polynomial in iid subgaussian random variables.
Thus $w$ is delocalized with high probability, in the sense that most entries are on the same order of magnitude and $\|w(X)\|_4^4 =O\left( \frac{1}{n}\right) \cdot  \|w(X)\|_2^4$.
The local statistics hierarchy constraints then enforce that $\|w(\Xprox)\|_4^4 \approx \|w(X)\|_4^4 = O(\frac{1}{n}) \cdot \|w(X)\|_2^4$.
Putting these conclusions together, we have that the first term of \pref{eq:two-term} can be bounded with high probability by
\[
\|(X-\Xprox)(w(X)+w(\Xprox)\|^2 \le \sqrt{O(\eps n) \cdot O\left(\tfrac{1}{n}\right) \cdot \|w(X)\|_2^4} = O(\sqrt{\eps}) \|w(X)\|_2^2.
\]
To handle the second term of \pref{eq:two-term}, we apply the induction hypothesis to $\|w(X) - w(\Xprox)\|^2$ (using as before that the operator norms of $X$ and $\Xprox$ are bounded). 
This case in the induction is then complete by noting that $\|w(X)\|^2$ and $\|u(X)\|^2$ are of the same order (via concentration of low-degree polynomials).

It remains to handle tree polynomials produced by the powering step of AMP; taking the $\ell$th power of a vector $w(X)$ entrywise produces a vector of the form  $w(X)^{\circ \ell} = w(X)\circ w(X)^{\circ \ell-1}$ (here $\circ$ is the ``entry-wise'' or ``Hadamard'' product).
So in the second case of induction, we let $u(X)$ be a tree polynomial of the form $u(X) = w_1(X) \circ w_2(X)$, for $w_1$ and $w_2$ tree polynomials of degree $d_1,d_2 > 0$ satisfying $d_1 + d_2 = d$.
Here, we will have to make use of the coordinate-wise local statistics constraints, and incur logarithmic factors.
Using the decomposition $2(a\circ a'-b\circ b') = (a+b)\circ(a'-b') + (a-b) \circ (a'+b')$,
\begin{align*}
\|u(X) - u(\Xprox)\|^2 
&= \|\tfrac{1}{2}(w_1(X) - w_1(\Xprox))\circ(w_2(X) + w_2(\Xprox)) + \tfrac{1}{2}(w_1(X) + w_1(\Xprox))\circ(w_2(X) - w_2(\Xprox))\|^2\\
&\le \|(w_1(X) - w_1(\Xprox))\circ(w_2(X) + w_2(\Xprox))\|^2 + \|(w_1(X) + w_1(\Xprox))\circ(w_2(X) - w_2(\Xprox))\|^2.
\end{align*}
The argument for bounding these two terms is identical, so we explain the argument for just the latter.
We apply the bound,
\begin{align*}
\|(w_1(X) + w_1(\Xprox))\circ(w_2(X) - w_2(\Xprox))\|^2
&= \sum_i (w_1(X)_i + w_1(\Xprox)_i)^2 (w_2(X)_i - w_2(\Xprox)_i)^2\\
&\le \left(\max_{i \in [n]}(w_1(X)_i + w_1(\Xprox)_i)^2 \right) \cdot \|w_2(X) - w_2(\Xprox)\|_2^2\\
&\le\left(\max_{i \in [n]}(w_1(X)_i + w_1(\Xprox)_i)^2 \right) \cdot \sqrt{\eps} (d_2\log n)^{O(d_2)} \|w_2(X)\|^2,
\end{align*}
where in the final line we have applied the induction hypothesis to the tree polynomial $w_2$.
Since we have included the coordinate-wise local statistics constraints and because of concentration and symmetry, we have that $\left(\max_{i \in [n]}(w_1(X) + w_1(\Xprox))^2_i \right) \le (d_1 \log n)^{O(d_1)}\frac{1}{n}\E[\|w_1(X)\|^2]$.
Since $w_1$ and $w_2$ are low-degree polynomials in iid samples from $\calD$, they concentrate such that $\frac{1}{n}\E[\|w_1(X)\|^2]\|w_2(X)\|^2 = \Theta(1)\cdot \|u(X)\|^2$ with high probability.
Combining these,
\begin{align*}
\|(w_1(X) + w_1(\Xprox))\circ(w_2(X) - w_2(\Xprox))\|^2 
&\le \left((d_1 \log n)^{O(d_1)}\cdot \frac{1}{n}\E[\|w_1(X)\|^2]\right) \cdot \left(\sqrt{\eps} (d_2 \log n)^{O(d_2)} \cdot \|u(X)\|^2\right)\\
& \le \sqrt{\eps} (d\log n)^{O(d)} \|u(X)\|^2.
\end{align*}
Repeating the argument for the other term, we see that the inductive hypothesis holds, completing the proof of \pref{eq:close}.
The proof then easily carries over to the degree-$(O(d),O(d))$ \sos proof system.

It is not clear to us if our for the second case of the inductive argument is tight.
If we had very fine-grained control on the tail of the ``empirical distribution'' of entries of $w_1(\Xprox)$ and $w_2(\Xprox)$, we could potentially avoid paying the logarithmic factors associated with the bound on the maximum coordinate.
We do not know how to obtain such tight control in constant-degree \sos.

\paragraph{Lipschitz, well-conditioned denoisers.}
If the denoisers $\calF$ are not polynomials but instead are $L$-Lipschitz, well-conditioned functions with reasonable derivatives, we can show that each $v_t(X)$ can be approximated up to error $\delta$ by a polynomial $h_t(X)$ of degree $d\le (\frac{L}{\delta})^{O(4^t)}$; we do this using a combination of standard techniques in polynomial approximation theory, and results from the theory of AMP.
The reasoning above then applies as before, except that we now require much larger degree, and we incur an approximation error $\delta$ as well as the robustness error of order $\sqrt{\eps} (d\log n)^{O(d)}$.

\section{Preliminaries}\label{sec:prelims}
\subsection{Notation}
We will use $\circ$ to denote the entry-wise (or Hadamard) product: for $a,b$ of the same dimension, $(a \circ b)_i = a_i b_i$. 
We will also use the entry-wise $k$th power notation $(a^{\circ k})_i = a_i^k$, and entry-wise products over a set $(\bigcirc_{a \in A} a)_i = \prod_{a \in A} a_i$.

We use standard big-$O$ notation.
We will sometimes write $O_k(1)$ to denote a term which is constant as $k\to\infty$ (making similar use of $o_k(1)$, etc.).
If we write $O(1)$, etc. with no subscript, it is understood to mean $O_n(1)$ for $n$ the dimension of the matrix input to the algorithm.

\begin{definition}\label{def:subg}
We say that a scalar random variable $Z$ is $\sigma$-subgaussian if $\E[|Z|^k] \le \sigma^{k}\cdot k^{k/2}$ for each integer $k \ge 1$.
\end{definition}

\subsection{Approximate Message Passing (AMP)}

AMP is a family of iterative alternating-projection algorithms, generalizing matrix power iteration.

\begin{definition}[AMP algorithm]
    \label{def:amp-instance}
    An {\em AMP algorithm} is defined by a collection of functions $\calF = \{f^t:t\in \mathbb N\cup \{-1\}\}$ called the {\em denoiser functions}, with each $f^t: \R^{t + 1}\rightarrow \R$. 
These may naturally be extended to separable functions $f^t: \R^{n\times t}\rightarrow \R^n$. 
We let $f^{-1}(\cdot) = 0$.

        The AMP algorithm on input $X$ produces the ordered list of iterates $\calI(X) = \{x^t:t\in \mathbb N\cup \{-1\}\}$ where $x^{-1} = \vec 0$, $x^0 = \vec{1}$, and each subsequent iterate is defined by the recursion \[x^{t+1} = Xf^t(x^t,x^{t-1},\ldots,x^0) - \sum_{j=1}^{t}b_{t,j}f^{j-1}(x^{j-1},\ldots,x^0).\]
        Here, $b_{t,j}$ is defined as 
        \[b_{t,j} = \frac 1n\sum_{i=1}^{n}\left.\frac{\partial f^t(x_i^{t},\ldots,u_i^{j},\ldots,,x_i^{0})}{\partial u_i^j}\right|_{u^j\rightarrow x^j},\]
        that is, the normalized divergence of $f^t$ with respect to $x^j$. This latter term is commonly called the \emph{Onsager correction}.
    
        An AMP algorithm may employ, as a post-processing step, a Lipschitz ``rounding'' function which maps iterates $x^t$ to a set $K$ (e.g. the set of unit vectors), $g_n: \R^n\rightarrow K$.
\end{definition}

In the above definition, the main takeaway is the application $x^{t+1} = Xf^t(x^t)$ (the $b_{t,j}$ is a correction term to give a precise high-dimensional characterization of the solution). 

\begin{example}[Non-negative PCA]\label{ex:nnpca}
For the \emph{nonnegative PCA} (nnPCA) problem, 
\[
\argmax\{v^\top X v \mid x \ge 0, \|x\|=1\},
\] 
A natural choice of denoiser is $f^t(x) = x_+ \defeq \max(x,0)$, and a natural choice of rounding function is $g(x) = \frac{x_+}{||x_+||_2}$. 
In this case, $b_{t,t} = ||x_+||_0$ and $b_{t,j} = 0$ for $j < t$.
It is shown in \cite{MR15} that when this AMP algorithm is applied to $X \sim \text{GOE}(n)$, with probability $1$ it finds a vector $\hat{v}$ of the optimal value for nnPCA, satisfying  $\lim\limits_{t\rightarrow\infty}\lim\limits_{n\rightarrow\infty}\hat{v}^\top X \hat{v} = \sqrt 2$ (see Theorem 2 of~\parencite{MR15}).
\end{example}

\subsection{Optimization in AMP}

With the AMP algorithm definition and nnPCA example in mind, we define what it means for AMP to converge to an optimal solution.

\begin{definition}[AMP-amenable random quadratic optimization problem]
\label{def:amp-amenable}
A {\em random quadratic optimization problem} $\calP_n(\calD,\calK)$ as defined in \pref{def:rqo} is called {\em AMP-amenable} if when $X \sim \calP_n(\calD,\calK)$, then with probability $1-o_n(1)$,
\[
\lim_{t\rightarrow\infty} \,\, (\hat{x}^{t})^\top X\,\hat{x}^{t} 
=(1\pm o_n(1))\cdot \E_{X'\sim \calP_n}\left[\max_{v\in \calK_n} v^\top X' v
\right],
\]
and $\hat{x}^{t}\in \calK$ for all $n$.	
\end{definition}

Using our example of nnPCA from above, here we would take $\calK$ the all-positive orthant of $\bbS^{n - 1}$, and $\calD = \mathrm{GOE}(n)$ (suppressing the dependence on $n$ in the notation for $\calD$ and $\calK$ for the sake of brevity). 
As mentioned above, $\lim_{n\rightarrow\infty}\E[\max_{v\in \calK}\, v^\top X v] = \sqrt{2}$ and the AMP algorithm from \pref{ex:nnpca} achieves this objective value.

An important tool in the analysis of AMP (which will also be useful for us here) is {\em state evolution}: as $n \to \infty$, any nice coordinate-wise function $\psi$ behaves similarly whether it is applied to the AMP iterates or applied to a specific scalar Gaussian process:

\begin{theorem}[State Evolution]
\label{thm:state-evolution}
	Consider the AMP algorithm defined by $\calF$. 
	Suppose further that each $f^t$ is either Lipschitz or polynomial. 
	Then, for any pseudo-Lipschitz function $\psi: \R^{t+1}\rightarrow \R$,
	\[\operatorname*{p-lim}\limits_{n\rightarrow\infty}\frac 1n\sum_{i=1}^{n}\psi(x^0,x^1,\ldots,x^t) = \E_{U}[\psi(U^0,U^1,\ldots,U^t)]\]
	where $U$ is a centered Gaussian process with covariance matrix $Q$ defined as, for $i \ge j$,
	\[Q_{ij} = \E_{U^0,U^1,\ldots,U^{i-1}}[f^{i-1}(U^0,\ldots,U^{i-1})f^{j-1}(U^0,U^1,\ldots,U^{j-1})].\]
\end{theorem}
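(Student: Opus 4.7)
The plan is to prove this by induction on $t$ via Bolthausen's conditional technique, later systematized by Bayati and Montanari. Write $m_s \defeq f^s(x^s, \ldots, x^0)$ for the AMP message vectors, and let $\Sigma_t$ denote the sigma-algebra generated by $x^0, \ldots, x^t$. The core observation is that when $X$ is drawn from the symmetric Gaussian ensemble, $X \mid \Sigma_t$ is distributed as a GOE matrix conditioned on the linear constraints $X m_s = x^{s+1} + \sum_{j=1}^{s} b_{s,j} m_{j-1}$ for $s < t$. Standard Gaussian conditioning then decomposes $X m_t$, conditional on $\Sigma_t$, as an explicit linear combination of past messages $m_0, \ldots, m_{t-1}$ plus a fresh Gaussian vector orthogonal to $\mathrm{span}(m_0, \ldots, m_{t-1})$.

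For the base case $t = 0$: since $x^0 = \vec{1}$ and $m_0 = f^0(\vec{1})$ are deterministic, $x^1 = X m_0$ is a centered Gaussian vector with i.i.d.\ entries of variance $\tfrac{1}{n}\|m_0\|^2 \to Q_{1,1}$, so for any pseudo-Lipschitz $\psi$ the empirical mean $\tfrac{1}{n}\sum_i \psi(x_i^0, x_i^1)$ concentrates around its expectation under the limiting one-dimensional Gaussian.

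For the inductive step, assume state evolution holds through time $t$. The conditional decomposition yields $X m_t \stackrel{d}{=} \sum_{s < t} \alpha_s m_s + P^\perp_t \tilde{X} m_t$, with $P^\perp_t$ the projection orthogonal to past messages, $\tilde{X}$ an independent fresh GOE, and coefficients $\alpha_s$ determined by projecting $m_t$ onto past messages in the inner product dictated by the inductively established covariance $Q$. The key computation is that $\alpha_s$ agrees, in the $n \to \infty$ limit, with the Onsager coefficient $b_{t,s+1}$: Stein's lemma (integration by parts for Gaussians) identifies the empirical derivative $\tfrac{1}{n}\sum_i \partial_{u^s} f^t(\cdot)\big|_{u \to x}$ defining $b_{t,s+1}$ in \pref{def:amp-instance} with exactly the projection coefficient of $m_t$ onto $m_s$ under $Q$. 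Consequently the Onsager correction precisely cancels the memory term, leaving $x^{t+1}$ equal (up to $o(1)$ fluctuations) to a fresh Gaussian vector of variance $\tfrac{1}{n}\|P^\perp_t m_t\|^2 \to Q_{t+1,t+1}$, jointly Gaussian with $x^1, \ldots, x^t$ as prescribed by $Q$. State evolution at time $t+1$ then follows from concentration of empirical averages of pseudo-Lipschitz functions on Gaussian inputs.

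The main obstacle is quantitative: turning ``approximately Gaussian'' into the $\plim$ statement requires concentration of empirical inner products $\tfrac{1}{n} m_s^\top m_r$ around $Q_{s+1,r+1}$ together with careful control of error propagation across the $t$ iterations. In the Lipschitz regime Gaussian Poincar\'e inequalities supply this control; in the polynomial regime hypercontractivity delivers the required tail bounds. Both rely on $t$ being a fixed constant so that compounded errors stay $o(1)$ as $n \to \infty$. A secondary difficulty is extending from Gaussian $X$ to the subgaussian ensemble $\calD$ permitted by \pref{def:rqo}; this is handled by the universality results of Bayati--Lelarge--Montanari, which show that the state-evolution limit is invariant to the choice of entry distribution provided it is subgaussian with matching second moments, exactly the setting of \pref{def:rqo}. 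The final step---passing from convergence of polynomial test functions to convergence against all pseudo-Lipschitz $\psi$---follows by a standard polynomial approximation argument using the Gaussian integrability of $\psi$.
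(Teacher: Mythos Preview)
The paper does not prove \pref{thm:state-evolution} at all: immediately after the statement it writes ``For a proof of the above, we refer the reader to the AMP literature; see for example~\cite[Proposition 2.1]{Mon21}.'' State evolution is quoted as a black-box input, not established within the paper.

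Your sketch is a reasonable high-level outline of the standard Bolthausen conditioning argument as systematized by Bayati--Montanari, together with the Bayati--Lelarge--Montanari universality step for non-Gaussian entries, and this is indeed the route taken in the references the paper cites. So there is no discrepancy in approach to report; you have simply supplied (a sketch of) a proof where the paper chose to cite one. If you intend this as a self-contained proof rather than a pointer, be aware that several steps you describe as routine---the exact cancellation of the Onsager term with the conditional mean, the error propagation across iterations, and the polynomial-denoiser case---each require nontrivial bookkeeping that your outline does not carry out; but as a plan it matches the literature the paper defers to.
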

The notation $\operatorname{p-lim}\limits_{n\rightarrow\infty}$ means that this statement holds with probability 1 as $n\rightarrow\infty$.
For a proof of the above, we refer the reader to the AMP literature; see for example~\cite[Proposition 2.1]{Mon21}.

\subsection{Robust Optimization}
Our algorithms are robust to adversarial perturbations to a principal minor of bounded size; see also the discussion in \pref{sec:robust-model}.

\begin{definition}[Adversarial $\eps$-principal minor corruption]
    \label{def:corruption-model}
    Let $X\in \R^{n\times n}$ be a symmetric matrix. 
A symmetric matrix $Y \in \R^{n \times n}$ is said to be a {\em $\varepsilon$-principal-minor corruption of $X$} if it differs from $X$ only on an $\eps n \times \eps n$ principal minor.
\end{definition}

\subsection{Tree Polynomials}

Throughout, we will be interested in the vector-valued polynomial $v$ which results from applying the AMP algorithm for $t$ steps as a function of the input $X$.  Here we develop some terminology which will be helpful in analyzing these special polynomials.

\begin{definition}[Interaction graph]
    \label{def:interaction-graph}
    An \emph{interaction graph} is a tuple $G = (V, E, m)$ where $V = \{v_1,\ldots,v_\ell\}$, $E\subseteq V\times V$ is the set of undirected edges (with possible self loops), and multiplicities $m = \{m_e: e\in E\}$ with each $m_e\in \N^+$.
    The total multiplicity of this graph is defined as $M(G) = \sum_{e\in E}m_e$.
\end{definition}

The interaction graph representation is the standard way to set up the \LSH.
The polynomials we are interested in, e.g. $v^\top X v$, are $\calS_n$-symmetric (see~\pref{def:LSH}) and thus may be represented as linear combinations of interaction graphs on vertex set $[n]$. 
However, the structure of AMP allows us to simplify these polynomials one step further to be combinations of tree-like structures; we sacrifice the generality of our SDP to work with these simpler tree-polynomial constraints.

\begin{definition}[Rooted Trees and Tree Polynomials]
A {\em rooted tree} and the corresponding vector-valued polynomial in $\R^n$ are defined recursively according to the following operations:
\begin{itemize}
    \item The ``empty'' tree is a single node, which has corresponding vector valued polynomial $\vec{1}$. 

    \item (Rerooting) Given a rooted tree $T'$, one may produce a new tree $T$ by re-rooting $T'$, extending an edge out of the root of $T'$ to a new root node.
Algebraically, this corresponds to creating the vector-valued polynomial $T(X)$ via matrix multiplication, $T(X) = X T'(X)$.

\tikzset{
itria/.style={
  draw,shape border uses incircle,
  isosceles triangle,shape border rotate=90,yshift=-1.3cm},
rtria/.style={
  draw,shape border uses incircle,
  isosceles triangle,isosceles triangle apex angle=90,
  shape border rotate=-45,yshift=0.3cm,xshift=0.5cm},
ltria/.style={
  draw,shape border uses incircle,
  isosceles triangle,isosceles triangle apex angle=90,
  shape border rotate=-135,yshift=-1.3cm,xshift=0.5cm}
}
\begin{center}
\begin{tikzpicture}[sibling distance=2cm]
\node[circle, fill = black] {}
    child { node[circle,draw, fill=black] {} 
             { node[itria] {$T'$} }                         
        };
\end{tikzpicture}
\end{center}

\item (Grafting) Given two {\bf non-empty} trees $T_1$ and $T_2$, one may produce a new tree $T$ by grafting the root of $T_1$ and $T_2$ together, so that they branch out from the same root. 
Algebraically, this corresponds to creating the vector-valued polynomial $T(X)$ by entry-wise multiplication: $T(X) = T_1(X) \circ T_2(X)$.
    
\tikzset{
itria/.style={
  draw,shape border uses incircle,
  isosceles triangle,shape border rotate=90,yshift=-1.3cm},
ritria/.style={
  draw,shape border uses incircle,
  isosceles triangle,isosceles triangle apex angle=120,
  shape border rotate=-45,yshift=0.0cm,xshift=0.15cm},
litria/.style={
  draw,shape border uses incircle,
  isosceles triangle,isosceles triangle apex angle=120,
  shape border rotate=-135,yshift=0.0cm,xshift=-0.1cm}
}
\begin{center}
\begin{tikzpicture}[sibling distance=2cm]
\node[circle, fill = black] {}
    child{ node[ritria] {$T_1$} }
    child{ node[litria] {$T_2$} };
\end{tikzpicture}
\end{center}
\end{itemize}
\end{definition}

In addition to trees, we must consider what happens upon flattening a tree: as we shall see shortly, this is exactly the form that the Onsager correction term takes.

\begin{definition}[Trunk and Lumber]
	Given a \emph{non-empty} tree $T$\footnote{the tree should be nonempty, otherwise the associated trunk is just $1$.} and associated vector-valued polynomial $T(X)$, define the \emph{trunk} of $T$ to be the {\em scalar} polynomial $\trunk_T(X) = \frac 1n \langle T(X), \vec{1}\rangle$.\footnote{The name comes from noticing that this essentially unroots the tree, thereby turning it into a fallen trunk.} 

We will say that the vector-valued $T(X)$ is {\em lumber} if it is the product of a vector-valued tree $T_1(X)$ and a finite collection of trunks $\trunk_{T_2}(X),\ldots,\trunk_{T_\ell}(X)$, $T(X) = \left(\prod_{i=2}^\ell\trunk_{T_i}(X)\right)\cdot T_1(X)$.
Because of the commutativity of scalar multiplication, the re-rooting and grafting operations act on lumber the same way that they do on trees.
\end{definition}

Lumber has a natural ``inductive'' structure: either a lumber is a simple tree, or it has an accompanying collection of trunks.

\begin{definition}[Forest]
	A vector-valued polynomial $P(X)$ is called a \emph{forest} if it is a weighted sum of lumber: that is, if there exists a collection of lumber $\calL$ such that
	\[P(X) = \sum_{T\in \calL} c_T \cdot T(X)\]
	for some constants $c_T$ which do not depend on $X$. 
We also require that $|c_T| = O_n(1)$ (where $n$ is the dimension of the associated vector) for all $T\in \calL$.
	
	A forest's \emph{degree} is its degree as a polynomial in $X$.
\end{definition}

\begin{fact}
\label{fact:weighted-forest-linear}
	If $P(X)$ and $Q(X)$ are weighted forests of degrees $d_1$ and $d_2$ respectively, then so are $P(X) + Q(X)$ (of degree $\max \{d_1, d_2\}$), $P(X) \circ Q(X)$ (of degree $d_1\cdot d_2$), and $XP(X)$ (of degree $1 + d_1$).
\end{fact}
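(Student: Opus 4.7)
The plan is to verify each of the three closure properties directly from the recursive definitions of trees, lumber, and forests, essentially by checking that each operation can be absorbed into the grafting, re-rooting, and scalar-multiplication operations already permitted in the definition.

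First I would handle addition. Writing $P(X) = \sum_{T \in \calL_P} c_T T(X)$ and $Q(X) = \sum_{T \in \calL_Q} c'_T T(X)$ as weighted sums of lumber, the sum $P(X) + Q(X)$ is manifestly a weighted sum of lumber over $\calL_P \cup \calL_Q$, and the constants $c_T, c'_T$ remain $O_n(1)$. The polynomial degree of the sum is $\max\{d_1, d_2\}$ by inspection.

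Next I would handle the Hadamard product. By bilinearity, $P(X) \circ Q(X) = \sum_{T_1 \in \calL_P, T_2 \in \calL_Q} c_{T_1} c'_{T_2} (T_1(X) \circ T_2(X))$, and the new coefficients $c_{T_1} c'_{T_2}$ remain $O_n(1)$ since there are finitely many summands. It then suffices to show that the Hadamard product of two pieces of lumber is lumber. Writing $T_1(X) = \bigl(\prod_{i} \trunk_{S_i}(X)\bigr) \cdot R_1(X)$ and $T_2(X) = \bigl(\prod_{j} \trunk_{S'_j}(X)\bigr) \cdot R_2(X)$ where $R_1, R_2$ are trees, scalar commutativity gives
\[
T_1(X) \circ T_2(X) = \Bigl(\prod_{i} \trunk_{S_i}(X)\Bigr)\Bigl(\prod_{j} \trunk_{S'_j}(X)\Bigr) \cdot \bigl(R_1(X) \circ R_2(X)\bigr).
\]
If both $R_1$ and $R_2$ are non-empty, then $R_1(X) \circ R_2(X)$ is a tree via the grafting operation, so the whole expression is lumber. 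If one of them (say $R_2$) is the empty tree with $R_2(X) = \vec{1}$, then $R_1(X) \circ R_2(X) = R_1(X)$ and the result is still lumber. The polynomial degree multiplies.

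Finally, for left-multiplication by $X$, bilinearity again reduces to the single-lumber case: $X \cdot \bigl(\prod_{i} \trunk_{S_i}(X)\bigr) \cdot R_1(X) = \bigl(\prod_{i} \trunk_{S_i}(X)\bigr) \cdot (X R_1(X))$, and $X R_1(X)$ is a tree by the re-rooting operation (valid whether or not $R_1$ is empty, since re-rooting is defined on any rooted tree). Hence $X P(X)$ is a weighted sum of lumber with the same coefficients $c_T$, which are still $O_n(1)$, and the degree increases by exactly one. There is no real obstacle here: the fact is essentially a bookkeeping consequence of the definitions, and the only minor point requiring care is handling the empty-tree base case when applying the grafting step in the Hadamard product argument.
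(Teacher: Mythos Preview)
Your proposal is correct and takes exactly the approach the paper has in mind: the paper's entire proof is the single sentence ``The proofs of these facts are immediate from the definitions,'' and what you have written is precisely the routine bookkeeping that sentence is gesturing at. (One small slip: for the Hadamard product the polynomial degree in $X$ \emph{adds}, i.e.\ $d_1+d_2$, not $d_1\cdot d_2$; the multiplicative bound stated in the Fact is harmless for the paper's purposes but is not tight.)
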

The proofs of these facts are immediate from the definitions.

\begin{example}
Consider the AMP algorithm corresponding to denoiser functions \[f^t(x^t,x^{t-1},\ldots,x^0) = (x^t)^{\circ 2}.\] Then, we may compute \[b_{t,t} = \frac 2n \sum_{i=1}^{n} (x^t)_i = 2\cdot \frac 1n \langle x^t, \vec 1\rangle\]
and $b_{t,j} = 0$ for $j < t$.
Using this information, let us compute the first few iterates and their corresponding forests.

\begin{enumerate}
	\item Immediately, $x^1 = X\vec 1$: the corresponding forest just consists of the singular once-rerooted tree $T_1$, with corresponding weight $c_{T_1} = 1$.
	\item More interesting is $x^2 = X f^1(x^1) - b_{1,1} f^0(x^0) = X (X1 \circ X1) - 2\trunk_{T_1} \cdot \vec 1$: the first term is still just a tree by applying grafting and rerooting while the second term is a lumber and not just a tree. Let us denote the first tree here by $T_2$.
	\item $x^3$ introduces correlations between terms: in particular, $f^2(x^2) = (T_2(X) - 2\trunk_{T_1}\vec 1)^{\circ 2}$. By~\pref{fact:weighted-forest-linear}, this still corresponds to a weighted forest and so do all further operations.
\end{enumerate}

\end{example}

\subsection{Basic Sum-of-Squares Proofs}

Throughout, we will use several well-known SoS facts (see e.g. \cite{MSS16}): we tabulate these here.

\begin{fact}[SoS Almost-Triangle Inequality]
\label{fact:sos-almost-triangle}
	Let $f_1, f_2, \ldots, f_r$ be indeterminates and $t\in \N$. Then,
	\[\sststile{2t}{f_1,f_2,\ldots,f_r}\left\{\left(\sum_{i=1}^{r}f_i\right)^{2t} \le r^{2t-1}\left(\sum_{i=1}^{r}f_i^{2t}\right)\right\}.\]
\end{fact}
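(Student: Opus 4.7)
The plan is to prove the inequality by induction on $t$, using only the degree-$2$ Cauchy--Schwarz identity together with the principle that multiplying an SoS inequality by an SoS polynomial preserves the SoS witness.

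For the base case $t = 1$, the gap is explicitly a degree-$2$ sum of squares:
\[
r \sum_i f_i^2 - \Bigl(\sum_i f_i\Bigr)^2 \;=\; \sum_{i<j}(f_i - f_j)^2.
\]

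For the inductive step, assuming the result for $t-1$, I would chain together three degree-$2t$ SoS inequalities. First I would multiply the base case by the square $(\sum_i f_i)^{2(t-1)}$ to obtain
\[
\Bigl(\sum_i f_i\Bigr)^{2t} \;\le\; r\,\Bigl(\sum_i f_i^2\Bigr)\Bigl(\sum_i f_i\Bigr)^{2(t-1)}.
\]
Next, I would multiply the inductive hypothesis by the SoS polynomial $r\sum_i f_i^2$ to deduce
\[
r\,\Bigl(\sum_i f_i^2\Bigr)\Bigl(\sum_i f_i\Bigr)^{2(t-1)} \;\le\; r^{2t-2}\,\Bigl(\sum_i f_i^2\Bigr)\Bigl(\sum_j f_j^{2(t-1)}\Bigr).
\]
Finally, I would establish the ``cross-term inequality'' $\bigl(\sum_i f_i^2\bigr)\bigl(\sum_j f_j^{2(t-1)}\bigr) \le r\sum_i f_i^{2t}$ directly. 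Expanding the product, cancelling diagonal contributions, and pairing each ordered off-diagonal pair $(i,j)$ with $(j,i)$, the gap reduces to $\sum_{i<j}(f_i^2 - f_j^2)\bigl(f_i^{2(t-1)} - f_j^{2(t-1)}\bigr)$. Using the geometric-sum factorization $f_i^{2(t-1)} - f_j^{2(t-1)} = (f_i^2 - f_j^2)\sum_{k=0}^{t-2} f_i^{2(t-2-k)} f_j^{2k}$, this becomes the explicit sum of squares
\[
\sum_{i<j}\sum_{k=0}^{t-2}\bigl((f_i^2 - f_j^2)\, f_i^{t-2-k} f_j^k\bigr)^2,
\]
of degree $2t$. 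Composing the three inequalities yields the claim.

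The main thing to be careful about is the degree and SoS bookkeeping: whenever I multiply an SoS inequality by a polynomial to derive a new SoS inequality, that polynomial must itself be SoS so that the product remains SoS. Here I use that $(\sum_i f_i)^{2(t-1)}$ is a square, that $r\sum_i f_i^2$ is SoS (for $r\ge 1$, with the inequality trivial otherwise), and that products of SoS polynomials are SoS. Beyond this bookkeeping I do not expect any genuine obstacle; the proof is essentially an SoS rendering of the classical power-mean inequality, and every step of the iteration admits a short, elementary SoS certificate of the right degree.
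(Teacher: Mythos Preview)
The paper does not prove this statement; it is listed as a well-known SoS fact with a citation to \cite{MSS16}, so there is no in-paper argument to compare against. Your inductive proof is correct: the base-case identity, the two multiplication steps, and the explicit degree-$2t$ SoS witness for the cross-term inequality (via the geometric-sum factorization of $f_i^{2(t-1)} - f_j^{2(t-1)}$) all check out, and your degree and SoS-preservation bookkeeping is accurate.
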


\begin{fact}[SoS Cauchy-Schwarz Inequality, ]
\label{fact:sos-cauchy-schwarz}
	Let $f_1,f_2,\ldots,f_r$ and $g_1,g_2,\ldots,g_r$ be indeterminates. Then,
	\[\sststile{2}{f_i, g_i}\left\{\left(\sum_{i=1}^{r}f_ig_i\right)^2 \le \left(\sum_{i=1}^{r}f_i^2\right)\left(\sum_{i=1}^{r}g_i^2\right)\right\}.\]
\end{fact}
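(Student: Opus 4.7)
The plan is to exhibit the difference $\left(\sum_i f_i^2\right)\left(\sum_i g_i^2\right) - \left(\sum_i f_i g_i\right)^2$ as an explicit sum of squares of polynomials of degree at most $2$ in the indeterminates, which certifies the inequality in the degree-$2$ SoS proof system. This is the classical Lagrange-identity proof of the Cauchy-Schwarz inequality, and it has the virtue that every step is a polynomial identity (so lifts for free to SoS).

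First I would expand the left-hand side:
\[
\left(\sum_{i=1}^{r} f_i^2\right)\left(\sum_{j=1}^{r} g_j^2\right) - \left(\sum_{i=1}^{r} f_i g_i\right)^2 = \sum_{i,j} f_i^2 g_j^2 - \sum_{i,j} f_i g_i f_j g_j.
\]
Then I would symmetrize the double sum by averaging the $(i,j)$ and $(j,i)$ contributions, which (because multiplication of indeterminates is commutative) yields
\[
\sum_{i,j} f_i^2 g_j^2 = \tfrac{1}{2}\sum_{i,j}\bigl(f_i^2 g_j^2 + f_j^2 g_i^2\bigr), \qquad \sum_{i,j} f_i g_i f_j g_j = \tfrac{1}{2}\sum_{i,j} 2 f_i g_j f_j g_i.
\]
Subtracting and factoring the perfect square,
\[
\left(\sum_i f_i^2\right)\left(\sum_j g_j^2\right) - \left(\sum_i f_i g_i\right)^2 = \tfrac{1}{2}\sum_{i,j} (f_i g_j - f_j g_i)^2.
\]

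Finally, since each $f_i g_j - f_j g_i$ is a polynomial of degree $2$ in the indeterminates $\{f_i, g_i\}$, the right-hand side is manifestly a sum of squares of degree-$2$ polynomials. Hence the inequality admits a degree-$2$ SoS proof as claimed. There is no real obstacle here: the whole argument is a single polynomial identity (Lagrange's identity), and the only thing to keep track of is that the bookkeeping of the expansion and symmetrization really does produce the displayed sum of squares, which it does for any commutative collection of indeterminates.
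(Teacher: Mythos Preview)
Your argument is correct: the Lagrange identity
\[
\left(\sum_i f_i^2\right)\left(\sum_j g_j^2\right) - \left(\sum_i f_i g_i\right)^2 = \tfrac{1}{2}\sum_{i,j}(f_i g_j - f_j g_i)^2
\]
is exactly the standard SoS certificate for Cauchy--Schwarz, and your derivation of it is clean. The paper itself does not give a proof of this fact at all --- it simply records it as well-known and points to an external reference --- so there is no ``paper's own proof'' to compare against; your write-up is the canonical one.

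One small terminological point worth being careful about: you conclude that this is a ``degree-$2$ SoS proof'' because the polynomials being squared, $f_i g_j - f_j g_i$, have degree $2$. Under the more common convention (and the one the paper appears to use elsewhere, e.g.\ in the almost-triangle inequality fact just above, where $\sststile{2t}{}$ labels an inequality of degree $2t$), the degree of an SoS proof refers to the total degree of the sum-of-squares expression, which here is $4$. The paper's $\sststile{2}{}$ label on this fact is thus arguably a typo for $\sststile{4}{}$; in any case your certificate is the right one, and the only thing at stake is which number to attach to it.
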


\section{Robust AMP Recovery with Local Statistics}\label{sec:robust}

In this section, we prove our main theorem: the \LSH can simulate AMP, even in the strong contamination model.

\begin{theorem}[Robust simulation of AMP]
	\label{thm:ramp-recovery-1}
	Suppose that $\calD$ is a $\frac{K}{\sqrt n}$-subgaussian distribution with $K = O(1)$, giving rise to the quadratic optimization problem $\calP_n(\calD,\calK)$.
	Let $\calF$ be an AMP algorithm consisting of polynomial denoiser functions of degree at most $k$.
	Let $\vAMP(X)$ denote the output of the $t$-step AMP algorithm on input $X$, normalized so that $\frac 1n \E[\|\vAMP\|_2^2] = 1$.

	Then there exists an integer $d = O((2k)^{2t})$ such that for any $\eta = \omega(1/\sqrt{n})$, $\eps \ge 0$, a robust version of the \LSH hierarchy of degree-$(d,d)$ can, given as input an $\eps$-principal minor corruption $Y$ of $X\sim \calP_n(\calD,\calK)$, be rounded to a vector $v_{\LSH}(Y)$ which satisfies
	\[\frac{\langle v_{\LSH}(Y), \vAMP(X)\rangle^2}{\left\|\vAMP(X)\right\|^2_2 \cdot \left\|v_{\LSH}(Y)\right\|_2^2}\ge 1 - \sqrt{\eps} \cdot O(\log n)^{(2k)^{t}} - \eta\]
	with probability $1-o_{n}(1)$.
	This algorithm runs in time $n^{O(d)}$.
\end{theorem}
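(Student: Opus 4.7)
The plan is to build a robust SDP by augmenting the non-robust LSH with program variables $\hat X_{ij}$ and $W_i$, enforcing $W_i^2 = W_i$, $W_i(\hat X_{ij}-Y_{ij})=0$, $\sum_i W_i \ge (1-\eps)n$, an operator-norm constraint $\hat X^2 \preceq O(1)\cdot I$, and both $S_n$-symmetric and ``fixed-coordinate'' local-statistics constraints calibrated to the joint law of $(X',\vAMP(X'))$ for $X'\sim \calP_n(\calD,\calK)$. Since the denoisers are polynomials of degree $\le k$, unrolling the recursion (with the Onsager correction absorbed by trunks) realizes $\vAMP(X)$ as a vector-valued forest polynomial of degree at most $d_0 = O(k)^t$; we run the hierarchy at degree $(d,d)$ with $d = O(d_0^2) = O((2k)^{2t})$ so that all sum-of-squares manipulations below are admissible. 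Feasibility of the SDP with high probability is witnessed by the integral assignment $\hat X = X$, $W_i = \Ind[\text{row $i$ uncorrupted}]$.

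Given the program, I would first establish the non-robust ``signal'' inequality. The polynomial $\langle \vAMP(X),v\rangle^2$ is $S_n$-symmetric of degree $(2d_0,2)$, so prior matching gives
\[
\pE\langle \vAMP(X),v\rangle^2 \ge (1-\delta)\E\|\vAMP(X')\|_2^4 \ge (1-\delta)\bigl(\E\|\vAMP\|_2^2\bigr)^2,
\]
while $\pE\|v\|^2 \le (1+\delta)\E\|\vAMP\|^2$ and $\|\vAMP(X)\|^2 = (1\pm o(1))\E\|\vAMP\|^2$ by concentration of low-degree polynomials in subgaussian coordinates. If I can also show $\pE\langle \vAMP(X)-\vAMP(\hat X), v\rangle^2$ is small, then substituting yields $\widehat v^{\top}\pE[vv^\top]\widehat v \ge (1-O(\delta)-\text{(closeness error)})\tr(\pE[vv^\top])$ for $\widehat v = \vAMP(X)/\|\vAMP(X)\|$, and the elementary PSD fact that a large Rayleigh quotient relative to the trace forces alignment with the top eigenvector finishes the rounding.

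The technical core is therefore the closeness lemma: for every forest polynomial $u$ appearing in the unrolling of $\vAMP$,
\[
\sststile{O(d)}{} \pE\|u(X)-u(\hat X)\|^2 \;\le\; \sqrt{\eps}\,(d_0\log n)^{O(d_0)} \cdot \E\|u(X)\|^2.
\]
I would prove this by induction on the construction of $u$ as a forest. The base case $u=\vec{1}$ is trivial. For re-rooting $u = Xw$, I split
\[
\|Xw(X)-\hat X w(\hat X)\|^2 \le \|(X-\hat X)(w(X)+w(\hat X))\|^2 + \|(X+\hat X)(w(X)-w(\hat X))\|^2.
\]
The second term is bounded by the operator-norm constraint and concentration of $\|X\|_{\op}$, multiplied by the induction hypothesis on $w$. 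For the first, letting $V_i = 1 - W_i U_i$ (where $U_i$ is the indicator that row $i$ of $Y$ equals that of $X$), a short SoS derivation gives $(X-\hat X)D_V = X-\hat X$, so that $\|(X-\hat X)(w(X)+w(\hat X))\|^2 \le O(1)\|D_V(w(X)+w(\hat X))\|^2$; the Booleanity and budget constraints yield $\|V\|_4^4 \le 2\eps n$, and $\ell_4$-delocalization of $S_n$-symmetric low-degree polynomials (together with the $S_n$-symmetric prior-matching constraints applied to $\|w(\hat X)\|_4^4$) bounds $\|w(X)+w(\hat X)\|_4^4 = O(\tfrac{1}{n})\|w(X)\|_2^4$ up to lower order. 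Cauchy--Schwarz then gives the claimed $\sqrt{\eps}$ factor. For grafting $u = w_1 \circ w_2$, I would use the splitting $2(a\circ a'-b\circ b') = (a+b)\circ(a'-b') + (a-b)\circ(a'+b')$ and the fixed-coordinate LSH constraints to bound $\max_i(w_j(X)+w_j(\hat X))_i^2 \le (d_0\log n)^{O(d_0)}\cdot \tfrac{1}{n}\E\|w_j(X)\|^2$, then close the recursion using concentration of $\|u(X)\|$ to conclude.

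The main obstacle is the re-rooting case: threading the identity $(X-\hat X)D_V = X-\hat X$, the operator-norm constraint on $\hat X$, and the $\ell_4$-delocalization of $w(\hat X)$ through a constant-degree SoS proof while paying only $\sqrt{\eps}(d_0\log n)^{O(d_0)}$ per inductive level. A secondary nuisance is extending the induction to \emph{lumber} (tree times trunks) produced by the Onsager correction; this reduces to the tree case because trunks are $O_n(1)$-concentrated scalar polynomials and the re-rooting and grafting operations act on lumber exactly as on trees, so the same $D_V$-absorption and max-coordinate arguments carry over with only a bookkeeping cost in the $(d\log n)^{O(d)}$ factor.
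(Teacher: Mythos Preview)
Your proposal is essentially the paper's approach: the same robust SDP augmentation (program variables $\hat X,W$, Booleanity and budget constraints, operator-norm and fixed-coordinate LSH constraints), the same forest/lumber representation of the AMP iterates with the Onsager term handled via trunks, the same inductive closeness lemma split into re-rooting (using $(X-\hat X)D_V = X-\hat X$, the operator-norm bound, and $\ell_4$-delocalization via Cauchy--Schwarz against $\|V\|_4^4 \le 2\eps n$) and grafting (using coordinate-wise maxima to pay $(d\log n)^{O(d)}$), and the same eigenvector-rounding finish via the Rayleigh-quotient/trace claim. The only slip is that in the robust program the prior-matching constraints live on $\hat X$, not on the latent $X$, so your ``signal inequality'' must first be established with $v_{\amp}(\hat X)$ in place of $v_{\amp}(X)$---the paper does this by showing $\pE\|v-P(\hat X)\|_2^2 \le \eta/16$ directly from the LSH constraints---and only then transferred to $X$ via the closeness lemma, which is exactly the route you outline immediately afterward.
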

\begin{remark}
	Note that the case $\eps = 0$ is covered by the theorem above; in fact, a subset of the SDP constraints and variables we use will suffice if we observe $Y = X$.
\end{remark}

\subsection{Description of the algorithm}
\label{sec:alg}

We are in the setting where there is some latent symmetric $X\in \R^{n\times n}\sim \calD_n$ and we have access to symmetric $Y\in \R^{n\times n}$ which is an adversarial $\varepsilon$-principal minor corruption of $X$.
Our algorithm will use a robust version of the degree-$(d,d)$ Local Statistics Hierarchy (\LSH).
The variables of our program will be $\{\tX_{ij}\}_{i\le j \in [n]}$ which are our proxies for $\{X_{ij}\}_{i\le j \in [n]}$, variables $\{W_{j}\}_{j \in [n]}$ where $W_{j}$ is our proxy for the indicator that the $j$th column is not corrupted, $\Ind[X_{ij} = Y_{ij}\, \forall i \in [n]]$, slack variables $\{B_{ij}\}_{i,j \in [n]}$, and variables $\{v_i\}_{i \in [n]}$ which are proxies for the entries of the vector optimizer.

We'll take our polynomial constraint set $\calC = \calC_{\rob} \cup \calC_{\LSH}$.
The first set of constraints, $\calC_{\rob}$, is designed to identify the ``clean'' rows of $Y$:
\begin{equation}
	\label{eqn:intersection-constraints}
	\calC_{\rob} = \left\{\begin{matrix}
		\forall j\in [n], & W_{j}^2 = W_{j} \quad \text{ and } \quad
		W_j\tX_{i,j} = W_jY_{i,j}                                            \\
		\vspace{0.1cm}
		                  & \sum\limits_{i}(1 - W_{i}) = \varepsilon \cdot n \\
		\vspace{0.2cm}
		                  & \tX^\top = \tX                                   \\
	\end{matrix}\right\}
\end{equation}
The constraints $\calC_{\rob}$ are enforced as polynomial constraints, so that for each constraint $f(\Xprox,W,v) = 0$ in $\calC_{\rob}$, we get that $\pE[f(\Xprox,W,v) \cdot q(\Xprox,W)] = 0$ for all polynomials $q$ so that $\deg(f\cdot q) \le d$.

The second set of constraints utilizes the \LSH hierarchy.
As explained in \pref{def:LSH}, although there are infinitely many polynomials of degree at most $d = O(k)^{t+1}$, we need only consider $\exp(O(d^2))$ many polynomials: those corresponding to lumber of degree at most $d$.
An alternative basis, such as an orthonormalized basis of graphical polynomials, would yield the same results, but is less convenient in our proofs.
\begin{claim}\label{claim:nG}
	There are at most $(2d+2)^{d(d+1)}$ lumber with degree at most $d$.
\end{claim}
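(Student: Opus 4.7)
My plan is a direct combinatorial counting argument based on encoding each lumber via a labeled rooted forest. First I would \emph{decompose}: by definition, any lumber of degree at most $d$ can be written as $T_1(X) \cdot \prod_{i=2}^{\ell}\trunk_{T_i}(X)$, where $T_1, \ldots, T_\ell$ are rooted unordered trees. Since a rerooting step adds one edge and one factor of $X$ while grafting adds neither, the polynomial degree in $X$ coincides with the total edge count $\sum_i e(T_i) \le d$. The trunk of the single-node tree equals the constant $1$, so I may drop any zero-edge trunks; this forces $e(T_i) \ge 1$ for $i \ge 2$, and hence $\ell \le d+1$.

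Next I would \emph{bound the node count}: each tree $T_i$ with $e(T_i)$ edges has exactly $e(T_i)+1$ nodes, so the total number of nodes across all trees is $\sum_i (e(T_i)+1) = \sum_i e(T_i) + \ell \le 2d+1$. Thus each lumber corresponds to a rooted forest on at most $2d+1$ nodes with at most $d+1$ components, one of which is marked as the main tree $T_1$.

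Then I would \emph{encode and count}: I assign each lumber's nodes labels in $\{1, \ldots, 2d+1\}$ and, for each label $i$, specify a marker from the set $\{\text{unused}, \text{root}\} \cup (\{1, \ldots, 2d+1\} \setminus \{i\})$, which has $2d+2$ elements---``unused'' if $i$ is not present, ``root'' if $i$ names a tree root, and otherwise the label of its parent. Additionally, I designate one of the roots as the main root. Each lumber admits at least one such encoding (label its nodes arbitrarily), so the number of distinct lumber is at most
\[
(2d+2)^{2d+1}(2d+1) \;\le\; (2d+2)^{2d+2}.
\]
For $d \ge 2$ we have $2d+2 \le d(d+1)$, so this is at most $(2d+2)^{d(d+1)}$, giving the claim. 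The remaining two cases are trivial and I would handle them by inspection: $d=0$ admits only the lumber $\vec{1}$ (matching the bound $1$), and $d=1$ admits the three lumber $\vec{1},\, X\vec{1},\, \trunk_{T_1}\vec{1}$ (well within the bound $16$).

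There is no real obstacle. The argument is routine counting; the only thing to verify carefully is that the labeled encoding validly overcounts each unlabeled lumber at least once, which it does because every rooted unordered forest can be presented by at least one labeling of its vertices.
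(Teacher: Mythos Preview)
Your proof is correct but takes a genuinely different route from the paper's.

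The paper's argument is modular: it first bounds the number of rooted trees on exactly $d+1$ vertices via Cayley's formula (at most $(d+1)^d$), then uses stars-and-bars to count the ways to split the degree budget $d$ among at most $d+1$ trees (at most $\binom{2d}{d+1}\le (2d)^{d+1}$), and finally multiplies to get $(2d)^{d+1}(d+1)^{d(d+1)}\le (2d+2)^{d(d+1)}$.

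Your argument instead encodes the entire rooted forest in one shot via a parent-pointer representation on the fixed label set $\{1,\ldots,2d+1\}$, with an extra choice for the distinguished tree. This is more elementary (no appeal to Cayley's formula) and in fact yields the sharper intermediate bound $(2d+2)^{2d+2}$, which you then relax to match the stated claim. The paper's version has the advantage of separating ``how many components'' from ``which trees,'' which makes it easy to read off where each factor comes from; your version trades that transparency for a tighter and more self-contained count.
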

\begin{proof}
	By Cayley's Theorem, there are $(d + 1)^{d-1}$ unrooted trees with $d + 1$ vertices, and thus loosely at most $(d + 1)^d$ rooted trees with exactly $d + 1$ vertices (not accounting for isomorphism).

	Notice that a lumber of degree at most $d$ can be written as a tuple of trees $T^\ast,T_1,\ldots,T_k$ such that all of the trees $T_1,\ldots,T_k$ appear as trunks, $T^\ast$ is the underlying tree, and the sum of the degrees of all of these trees is at most $d$.

	As a loose upper bound, there are $\binom{2d}{d + 1} \le (2d)^{d+1}$ ways to choose the degrees of these $k + 1$ trees (using stars and bars).
	Then, there are at most $\left((d + 1)^d\right)^{d+1}$ ways to choose the trees themselves, so the total number of such lumber is at most $(2d)^{d+1}\cdot (d+1)^{d(d+1)} \le (2d+2)^{d(d+1)}$ as desired.
\end{proof}

For convenience, we let $\calT$ denote the set of all trees $T$ with degree at most $d$ and $\calL$ the set of all lumber $T$ with degree at most $d$.
Let us also write $N_\calL \le (2d+2)^{d(d+1)}$ as the total number of lumber of degree at most $d$, as shown above.

Denote $\calP'_n$ as the joint distribution on $(X, \vAMP(X))$: that is, it is the joint distribution on matrices and the computed $t$-step AMP solution for them, with the normalization $\frac 1n\E[\|\vAMP(X)\|_2^2] = 1$.

We set slack parameters $c_\sla = \frac{\eta}{s(n) \cdot N_\calL^2}$ for $s(n)$ a slowly-growing function of $n$, and $C_K = O(K^2)$ (recall $K/\sqrt{n}$ is the subgaussian parameter of $X$).
Our \LSH constraints $\calC_{\LSH}$ ask that the joint statistics corresponding to $T(X)$ in $(\Xprox,v)$ match those of $(X,\vAMP(X))$ up to a slack which is satisfied by $(X,\vAMP(X)) \sim \calP'_n$ with high probability.
\begin{equation}
	\label{eqn:losh-constraints}
	\calC_{\LSH} = \begin{cases}
		 & \frac 1n\|v\|^2_2 = 1                                                                                                                                                   \\
		\forall T_1, T_2\in \calL,
		 & \pE\left[\frac 1n\langle T_1(\tX),T_2(\tX)\right] = \operatorname*{\E}\limits_{(Z, v^\ast)\sim \calP'_n}\left[\frac 1n\langle T_1(Z), T_2(Z)\rangle\right] \pm c_{\sla} \\
		\forall T\in \calL,
		 & \pE[\frac 1n \langle T(X), v\rangle] = \operatorname*{\E}\limits_{(Z, v^\ast)\sim \calP'_n}[\frac 1n\langle T(Z),v^\ast\rangle] \pm c_\sla                              \\
		\forall T\in \calT\text{ and } i\in [n],
		 & T(\tX)_i^4 \le (5C_K\deg(T)\log n)^{2\deg(T)}                                                                                                                             \\
		 & \Xprox^2 = 5\Id - BB^\top                                                                                                                                               \\
	\end{cases}
\end{equation}
\begin{remark}
	The ``full'' Local Statistics Hierarchy as defined in \cite{BMR21} would include constraints corresponding to all symmetric polynomials of degree at most $d$;
	here we include only the subset of those constraints used in our proofs.
	In order to deal with the adversarial corruptions we must also incorporate ``infinity norm'' constraints on $T(\Xprox)_i^4$ and an operator norm constraint on the matrix $\Xprox^2$; neither set of constraints is part of the \LSH as originally proposed, but both are enforceable with low-degree \sos.
	The ``infinity norm'' constraints would be implied by a ``coordinate-wise'' version of the \LSH, in which polynomials which are fixed by the action of the symmetric group on $[n]\setminus \{i\}$ for each $i\in[n]$ are also constrainted to be within the typical range of their expectations.
\end{remark}

\paragraph{Feasibility.}
We will set our slack parameters so that the constraint system $\calC$ is feasible if $\Xprox = X$.
Indeed, since we hope $\calC$ will force $\Xprox$ to be have like $X$, $\Xprox \approx X$ must be a feasible solution with high probability.
We define a ``reasonable sample'' to be an $X$ which is feasible for this program, and show that most $X$ are reasonable.

\begin{definition}[Reasonable sample]
	\label{def:reasonable-sample}
	Let $d\in \N$ be fixed, and set $c_{\sla} = \frac{\eta}{s(n)\cdot N_\calL^2}$ for any $s(n)$ so that $\omega(1/\sqrt{n}) = c_\sla = o_n(\eta)$ in the polynomial denoiser case, and in the Lipschitz denoiser case $s(n)$ is the rate of convergence implied by Theorem 1 of \cite{JM13}.
	A sample $X\sim \calD_n$ with $X\sim \R^{n\times n}$ is \emph{reasonable} at degree $d$ if the following two conditions are satisfied:
	\begin{itemize}[noitemsep]
		\item (Concentration of lumber) For all lumber $T_1, T_2 \in \calL$ of degree at most $d$ we have
		      \[\frac 1n \langle T_1(X), T_2(X)\rangle = \E_{(Z, v^\ast)}\left[\frac 1n \langle T_1(Z), T_2(Z)\rangle\right] \pm c_\sla\]
		      and
		      \[\frac 1n \langle T_1(X),\vAMP\rangle = \E_{(Z, v^\ast)}\left[\frac 1n \langle T_1(Z), v^\ast\rangle\right] \pm c_\sla.\]

		\item (Concentration of $\vAMP$) $\frac 1n \|\vAMP\|_2^2 = 1\pm \frac {\eta}{48}$.

		\item (Infinity norm of trees) For all trees $T\in \calT$ of degree $d' \le d$ we have
		      \[\frac 1n \|T(X)\|_4^4 \le \|T(X)\|_{\infty}^4 \le (5C_Kd'\log n)^{2d'}.\]

		\item (Bounded operator norm of $X$) $\|X\|^2_{\mathsf{op}} \le 5$.
	\end{itemize}
\end{definition}

\begin{lemma}[Most samples are reasonable]\torestate{
		\label{lem:most-samples-reasonable}
		Fix $d\in \N$.
Then
		\[\operatorname*{\mathbf{Pr}}\limits_{X\sim \calP_n}[X\text{ is reasonable at degree $d$}] \ge 1 - o_n\left(1\right).
		\]}
\end{lemma}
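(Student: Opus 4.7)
The plan is a union bound over the four events defining a reasonable sample, each of which we show holds individually with probability $1-o_n(1)$. Since the number of lumber/trees of degree at most $d$ is at most $N_\calL \le (2d+2)^{d(d+1)} = O_n(1)$, the union bound is benign.

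First, for the concentration of lumber inner products: for any $T_1, T_2 \in \calL$, the quantity $\frac{1}{n}\langle T_1(X), T_2(X)\rangle$ is a polynomial in the entries of $X$ of degree at most $2d$, and it is $\calS_n$-symmetric. Its expectation under $\calP_n$ agrees (up to a $o_n(1)$ additive term) with the corresponding expectation under $\calP'_n$, since $\vAMP$ does not enter here and the first coordinate is just $X$. The variance of such low-degree polynomials in $\frac{K}{\sqrt n}$-subgaussian entries can be bounded by $\poly(d, K)/n$ using hypercontractivity (e.g.\ standard Efron--Stein or Gaussian/subgaussian concentration for polynomials as invoked in \pref{app:hyp}), so by Chebyshev or a subexponential tail bound the deviation from its mean is at most $c_\sla = \frac{\eta}{s(n) N_\calL^2}$ with probability $1-o_n(1/N_\calL^2)$. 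Since $c_\sla = \omega(1/\sqrt n)$, this is well within reach; union-bounding over the $N_\calL^2$ pairs in $\calL$ gives the first bullet. The second line involving $\vAMP$ is handled the same way once we note that $\langle T_1(X), \vAMP(X)\rangle$ is a (possibly higher but still bounded-degree) polynomial in $X$ whose deviation from $\E_{\calP'_n}[\cdot]$ is controlled identically; for the polynomial-denoiser case this is immediate, and for the Lipschitz case we appeal to the convergence rate $s(n)$ from Theorem 1 of \parencite{JM13}, which is exactly why $s(n)$ was inserted into $c_\sla$.

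Second, the concentration of $\|\vAMP\|_2^2/n$ around $1$ is either (i) a special case of the lumber concentration above (taking $T_1 = T_2$ in the forest decomposition of $\vAMP$) when denoisers are polynomial, or (ii) a direct consequence of state evolution (\pref{thm:state-evolution}) applied to $\psi(u^0,\ldots,u^t) = f^{t-1}(u^{t-1},\ldots,u^0)^2$ in the Lipschitz case; either way we get $\frac{1}{n}\|\vAMP\|^2 = 1 \pm \frac{\eta}{48}$ with probability $1-o_n(1)$.

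Third, for the infinity-norm bound on trees: for any single $i \in [n]$ and any tree $T \in \calT$ of degree $d'$, the scalar $T(X)_i$ is a polynomial of degree $d'$ in $\frac{K}{\sqrt n}$-subgaussian entries of $X$. Polynomials of degree $d'$ in subgaussian variables are subexponential at level $d'$ (in the sense of tail moments $\E[|T(X)_i|^q]^{1/q} \lesssim (C_K d')^{d'/2} q^{d'/2}$), so $\Pr[|T(X)_i| \ge (5 C_K d' \log n)^{d'/2}] \le n^{-\omega(1)}$. A union bound over the $n \cdot |\calT| = n \cdot O_n(1)$ choices of $(i, T)$ gives the desired bound, so $\|T(X)\|_\infty^4 \le (5C_K d' \log n)^{2d'}$ simultaneously for all trees, and the $\|T(X)\|_4^4 \le n \cdot \|T(X)\|_\infty^4$ step is immediate. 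Fourth, the operator norm bound $\|X\|_{\op}^2 \le 5$ is a classical random matrix estimate: for symmetric matrices with independent $\frac{K}{\sqrt n}$-subgaussian entries one has $\|X\|_{\op} \le 2 + o_n(1)$ with probability $1 - o_n(1)$ (for example via the Bai--Yin theorem or a net argument with subgaussian Bernstein). Taking a union bound over all four events closes the argument; the main obstacle is really only the first bullet, and specifically verifying that the deviation rate for $\calS_n$-symmetric polynomials in symmetric subgaussian matrices is at most $c_\sla$, which is the content deferred to \pref{app:hyp}.
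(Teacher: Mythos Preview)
Your proposal is correct and follows the same four-part union-bound skeleton as the paper. The one place where the paper's implementation differs meaningfully is the first bullet (concentration of $\frac{1}{n}\langle T_1(X),T_2(X)\rangle$): rather than treating this as a generic degree-$2d$ $\calS_n$-symmetric polynomial and invoking a black-box variance/hypercontractivity bound, the paper exploits the lumber structure to write it as a \emph{product of at most $2d$ trunks} (since each lumber is a tree times trunks, and $\frac{1}{n}\langle S_1(X),S_2(X)\rangle$ for trees $S_1,S_2$ is itself the trunk of $S_1\circ S_2$). It then proves two combinatorial lemmas: each individual trunk concentrates around its mean with variance $O(1/n)$ (\pref{lem:conc-trunk-poly}), and the expectation of a product of trunks is within $O(1/n)$ of the product of expectations (\pref{lem:expect-trunk-prod}). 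Combining these on the high-probability event that all trunks concentrate gives the result with explicit constants. Your direct variance claim is true, but actually verifying it would require essentially the same edge-multiplicity counting, so the difference is more organizational than substantive; the paper's decomposition just makes the combinatorics transparent.

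One minor overstatement: your tail bound $\Pr[|T(X)_i|\ge (5C_K d'\log n)^{d'/2}]\le n^{-\omega(1)}$ should be $n^{-\Omega(1)}$ (the paper obtains $2/n^2$ via the G\"otze--Sambale--Sinulis polynomial concentration inequality together with an explicit Frobenius-norm bound on the expected derivative tensors, \pref{lem:partial-derivative-expectation}). This is still more than enough for the union bound over $n\cdot|\calT|$ coordinates, so the argument goes through unchanged.
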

This requires us to verify the concentration of the constrained functions of $X$; we give the proof in \pref{app:hyp}.

\paragraph{The algorithm.}
With $\calC$ specified, we are ready to describe our algorithm.

\begin{algorithmSELF}[Robust AMP polynomial recovery]
	\label{algo:robust-perm-recovery}
	\textbf{Input: }
	A scalar $\eta > 0$ and a matrix $Y\in \R^{n\times n}$, given as a $\varepsilon$-corruption of the latent $X\sim \calP_n(\calD,\calK)$.

	\noindent \textbf{Operation: }
	\begin{itemize}[noitemsep]
		\item Compute a pseudodistribution $\xi$ satisfying $\calC = \calC_{\rob} \cup \calC_{\LSH}$.
		\item Return $\vSOS$, the top eigenvector of $\pE_\xi[vv^\top]$, normalized to be unit.
	\end{itemize}
	\noindent \textbf{Output: }
	Vector $\vSOS\in \R^n$ such that with high probability over the choice of $X$, \[\langle \vSOS, \vAMP(X)\rangle^2 \ge 1 - \sqrt{\eps} \cdot O(\log n)^{(2k)^{t}} - \eta.
	\]
\end{algorithmSELF}

Our guarantee on $\vSOS$ is symmetric up to sign.
Note that the above algorithm does not include a rounding procedure for ensuring $\vSOS\in \calK$; this reflects the fact that AMP algorithms typically produce a final iterate $\vAMP$ which has to be rounded to a vector in $\calK$.
\footnote{The rounding depends on the problem: for example, the standard rounding for nnPCA simply sets $\hat{v} = \frac{(\vSOS)_+}{\|(\vSOS)_+\|_2}$, while rounding for SK is more involved.}
Since $\vSOS$ is close to the true $\vAMP$, we can substitute it for $\vAMP$ in the AMP rounding procedures.

We are now ready to prove that \pref{algo:robust-perm-recovery} works, which implies \pref{thm:ramp-recovery-1}.
The proof strategy is as in the overview: we show that $\vAMP$ constructs a weighted forest, and thus decompose it into the constituent lumber.
For each lumber $T$, we then show by induction the closeness of $T(X)$ and $T(\tX)$ and thus complete the proof.
The conclusion is summarized by the following lemma, which we will prove in \pref{sec:cor-basis}.

\begin{lemma}[Reasonable samples guarantee correlation with AMP solution]
	\label{lem:xpp-vec-close}
	If $X$ is a reasonable sample at degree $(2k)^{2t}$, then a degree-$(2k)^{2t}$ pseudodistribution satisfying constraints $\calC$ has
	\[\frac 1{n}\pE\left[\left\langle v,\frac{\vAMP(X)}{\|\vAMP(X)\|}\right\rangle^2\right] \ge 1 - \sqrt{\eps} \cdot O(\log n)^{(2k)^{t}} - \frac \eta 2.
	\]
\end{lemma}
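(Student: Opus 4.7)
The plan is to first decompose $\vAMP(X)$ as a forest: since each AMP step applies $X\cdot$ followed by a degree-$k$ polynomial entrywise (plus Onsager corrections, which contribute trunks and therefore preserve lumber-ness), an induction on $t$ yields $\vAMP(X) = \sum_{T \in \calL_\vAMP} c_T\, T(X)$ for a collection $\calL_\vAMP \subseteq \calL$ of lumber of degree at most $(2k)^t$ with $|\calL_\vAMP| = O(1)$. The heart of the proof is then the following SoS-closeness estimate, which I would establish for every lumber $T \in \calL$ of degree $d$:
\[
\calC \sststile{O(d)}{\Xprox, W, B}\; \Bigl\{\|T(X) - T(\Xprox)\|_2^2 \le \alpha(d)\cdot \|T(X)\|_2^2\Bigr\}, \quad \alpha(d) = \sqrt{\eps}\,(d\log n)^{O(d)}.
\]

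I would prove this by induction on $d$, following the two tree-construction operations. For \emph{re-rooting}, $T(X) = X\,T'(X)$, write $XT'(X) - \Xprox T'(\Xprox) = \tfrac12 (X-\Xprox)(T'(X)+T'(\Xprox)) + \tfrac12(X+\Xprox)(T'(X)-T'(\Xprox))$. The first piece is where the robustness constraints do work: defining the SoS ``bad row'' indicator $V_i = 1 - W_i U_i$ (with $U_i = \Ind[\text{row } i \text{ uncorrupted in } Y]$), the constraints in $\calC_{\rob}$ imply $(X-\Xprox)D_V = X - \Xprox$ by a short SoS derivation. Then $\|(X-\Xprox)D_V(T'(X)+T'(\Xprox))\|_2^2$ is bounded using $\|X\|_{\op}^2 \le 5$ from reasonableness and $\Xprox^2 = 5\Id - BB^\top$ from $\calC_{\LSH}$, together with SoS Cauchy-Schwarz, $\sum_i V_i \le 2\eps n$ (from $\sum_i (1-W_i) = \eps n$ and $\sum_i (1-U_i) \le \eps n$), and the $\ell_4$-delocalization of $T'(X), T'(\Xprox)$ supplied by reasonableness and the infinity-norm constraints on $T(\Xprox)_i^4$. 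The second piece is handled by operator-norm bounds times $\|T'(X) - T'(\Xprox)\|_2^2$, to which induction applies. For \emph{grafting}, $T = T_1 \circ T_2$, apply the identity $2(a\circ a' - b\circ b') = (a+b)\circ(a'-b') + (a-b)\circ(a'+b')$ and bound each summand by pulling out the entrywise maximum of one factor (controlled by the $\ell_\infty$ constraints on $T_j(\Xprox)_i$ in $\calC_{\LSH}$ and reasonableness for $T_j(X)_i$) and recursing on $\|T_j(X) - T_j(\Xprox)\|_2^2$. Concentration of lumber norms on reasonable samples relates $\|T(X)\|_2^2$ to $\E\|T(X)\|_2^2$, closing the induction and explaining the $(d\log n)^{O(d)}$ factors.

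With the lumber closeness in hand, SoS triangle inequality and $|\calL_\vAMP| = O(1)$ give $\pE[\tfrac1n\|\vAMP(X)-\vAMP(\Xprox)\|_2^2] \le \sqrt\eps\,(\log n)^{O((2k)^t)}$. Separately, applying the prior-matching constraints of $\calC_{\LSH}$ lumber-by-lumber and using linearity yields $\pE[\tfrac1n\langle v, \vAMP(\Xprox)\rangle] = \E_{(Z,v^*)\sim \calP'_n}[\tfrac1n\langle v^*, \vAMP(Z)\rangle] \pm O(N_\calL c_\sla) = 1 \pm o(\eta)$, because $v^* = \vAMP(Z)$ and $\tfrac1n\E\|\vAMP\|_2^2 = 1$; the same reasoning gives $\pE[\tfrac1n\|\vAMP(\Xprox)\|_2^2] = 1 \pm o(\eta)$. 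Combined with the \LSH constraint $\tfrac1n\pE\|v\|_2^2 = 1$, this forces $\pE[\tfrac1n\|v - \vAMP(\Xprox)\|_2^2] = o(\eta)$. One more SoS triangle inequality yields $\pE[\tfrac1n\|v - \vAMP(X)\|_2^2] \le \sqrt\eps\,(\log n)^{O((2k)^t)} + \eta/8$, and expanding this together with reasonableness $\tfrac1n\|\vAMP(X)\|_2^2 = 1 \pm \eta/48$ gives $\tfrac{2}{n}\pE[\langle v, \vAMP(X)\rangle] \ge 2 - \sqrt\eps\,(\log n)^{O((2k)^t)} - \eta/4$. Finally the SoS variance inequality $\pE[\langle v,\vAMP(X)\rangle^2] \ge \pE[\langle v,\vAMP(X)\rangle]^2$ together with reasonableness on $\|\vAMP(X)\|_2^2$ deliver the desired bound on $\tfrac1n \pE[\langle v, \vAMP(X)/\|\vAMP(X)\|\rangle^2]$.

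The main obstacle is the re-rooting case of the induction: it must simultaneously convert $X-\Xprox$ into a matrix acting on an $\eps n$-sparse coordinate set via $\calC_{\rob}$, uniformly control the operator norms of both $X$ and $\Xprox$, and invoke $\ell_4$-delocalization for $\Xprox$ through the infinity-norm constraints, all inside a constant-degree SoS derivation. The grafting step is cleaner but is responsible for the $(d\log n)^{O(d)}$ factor; sharpening this would require much finer SoS control on the empirical tail of $T(\Xprox)$ than our coordinate-wise constraints provide.
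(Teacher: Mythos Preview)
Your proposal is correct and follows essentially the same route as the paper: decompose $\vAMP$ as a forest, prove an SoS closeness statement for each lumber by structural induction (re-rooting using the matching variables and operator-norm constraints; grafting using the coordinate-wise $\ell_\infty$ constraints), then combine with the \LSH moment-matching constraints to show $\pE[\tfrac1n\|v-\vAMP(\Xprox)\|_2^2]$ is small, apply the SoS triangle inequality, and finish with the variance inequality. One technical point worth flagging: the paper runs the SoS induction on $\|T(X)-T(\Xprox)\|_2^4$ rather than $\|\cdot\|_2^2$, because the SoS Cauchy--Schwarz you invoke in the re-rooting step only yields $\bigl(\sum_i V_i^2 w_i^2\bigr)^2 \le \|V\|_4^4\,\|w\|_4^4$ (a quartic bound), and the paper extracts the squared bound only at the end via the pseudo-expectation inequality $\pE[A]\le \sqrt{\pE[A^2]}$; you should either mirror this, or replace Cauchy--Schwarz by the $\ell_\infty$ bound $\sum_i V_i^2 w_i^2 \le \|w\|_\infty^2\sum_i V_i$, which does give a direct $\ell_2^2$ bound (with $\eps$ instead of $\sqrt\eps$, in fact). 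Also note that the paper treats the trunk factors in lumber by a separate short induction (Lemma~\ref{lem:lumber-close}) on the number of trunks, which you should make explicit since grafting and re-rooting alone only cover trees.
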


\subsection{Analysis of rounding scheme given correlation with AMP solution}
Before proving \pref{lem:xpp-vec-close}, we will use it to give a simple proof that the eigenvector rounding succeeds (establishing \pref{thm:ramp-recovery-1}).

\begin{proof}[Proof of Theorem~\ref{thm:ramp-recovery-1}]
	We first establish that the SDP is feasible with high probability, and then argue that the eigenvector rounding step succeeds.
	By \pref{lem:most-samples-reasonable}, $X$ is reasonable with high probability; we condition on the reasonableness of $X$ from here on.

	\paragraph{Feasibility.}
	Since $X$ is reasonable, the program defined by $\calC$ is feasible with $\Xprox = X$ for all $n\ge n_0$.
	Indeed, $\calC_{\rob}$ are satisfied by taking an arbitrary $\varepsilon$ fraction of rows and setting $W_{i} = 1$.
	Choosing $v = \vAMP(X)$, the \LSH constraints then follow by reasonableness.

	\paragraph{Correctness.}
	Define $R = \frac 1n\pE[vv^\top]$.
	We have that $\tr(R) = 1$ by $\calC_{\rob}$ and let $w = \frac{\vAMP}{\|\vAMP\|_2}$.
	Since $R\succeq 0$ and $\tr(R)=1$,
	\[1 - \kappa \le w^\top Rw\le 1\]
	for $\kappa = \sqrt{\eps} \cdot O(\log n)^{(2k)^{t}} + \frac \eta 2$ (the lower bound is from \pref{lem:xpp-vec-close}).
	We assume $\kappa < \frac 12$ (which can be done by decreasing $\eps$ by a factor of 4).
	Now, let $u$ be the top eigenvector of $R$ with corresponding eigenvalue $\lambda$.
	Since $u^\top Ru \ge w^\top R w$, it follows that $\lambda \ge 1 - \kappa$ and the sum of all other eigenvalues is at most $\kappa$.
	Hence, by looking at the eigendecomposition of $R = \sum_{i=1}^n \lambda_i v_i v_i^\top$ we have

	\[\langle w,u\rangle^2 = \frac 1\lambda\left(w^\top R w - \sum_{i=2}^{n}\lambda_i \langle w,v_i\rangle^2\right) \ge 1\left(w^\top R w - \kappa\right) \ge 1 - 2\kappa.
	\]

	Recalling that $w = \frac{\vAMP}{\|\vAMP\|_2}$ and $u=\vSOS$, we have established the claim of Theorem~\ref{thm:ramp-recovery-1}.
	\qedhere

\end{proof}

\subsection{Correlation of SDP with AMP output}
\label{sec:cor-basis}
The rest of this section is devoted to a proof of \pref{lem:xpp-vec-close}.
To begin, it will be vital to us that the result of running AMP is a weighted forest.

\begin{theorem}[AMP creates a forest]
	Suppose the denoisers $\calF = \{f^i\}$ define an AMP algorithm with polynomial $f^i$ having all coefficients independent of $n$ and degree bounded by $k$, and with starting iterate $x^0 = \vec{1}$.
	Then as a function of $X$, $x^t$ is a weighted forest of degree at most $(2k)^t$.
	Furthermore, $\vAMP = P(X)$ is also a weighted forest, with probability $1 - o_n(1)$.
\end{theorem}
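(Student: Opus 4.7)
The plan is to prove the claim by induction on $t$, establishing that each iterate $x^s$ is a weighted forest of degree at most $(2k)^s$. The base cases $x^{-1} = \vec{0}$ (the trivial forest) and $x^0 = \vec{1}$ (the empty tree, of degree $0 \le 1 = (2k)^0$) are immediate. For the inductive step I rely on the closure properties of \pref{fact:weighted-forest-linear} together with two supplementary properties to verify: (i) applying a scalar polynomial of degree at most $k$ entry-wise to weighted forests produces a weighted forest whose degree is at most $k$ times the maximum input degree, which unfolds using grafting (Hadamard product), scaling, and addition; and (ii) multiplying a lumber by a trunk is still a lumber, since by definition the trunk is absorbed into the existing product of trunks. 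Granting (i), the vector $f^t(x^t,\ldots,x^0)$ is a weighted forest of degree at most $k(2k)^t$, and left-multiplication by $X$ (rerooting) then gives a weighted forest of degree at most $1+k(2k)^t \le (2k)^{t+1}$, handling the principal term of the AMP recursion.

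The main obstacle I anticipate is the Onsager correction $\sum_{j=1}^t b_{t,j}\,f^{j-1}(x^{j-1},\ldots,x^0)$. Here the scalar $b_{t,j}$ is exactly the trunk of the vector $\partial_{u^j}f^t(x^t,\ldots,x^0)$: since $\partial_{u^j}f^t$ is a polynomial of degree at most $k-1$, property (i) makes $\partial_{u^j}f^t(x^t,\ldots,x^0)$ a weighted forest of degree at most $(k-1)(2k)^t$, and by linearity of the trunk functional, $b_{t,j}$ is a weighted sum of products of trunks. Multiplying this scalar by the weighted forest $f^{j-1}(x^{j-1},\ldots,x^0)$ and distributing over the constituent lumber reduces, via property (ii), to a weighted sum of lumber, i.e.\ a weighted forest. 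The $j$-th term has degree at most $(k-1)(2k)^t + k(2k)^{j-1} < (2k)^{t+1}$, so the Onsager contribution remains under the required degree bound; closure under addition then yields $x^{t+1}$ as a weighted forest of degree at most $(2k)^{t+1}$, completing the induction. I expect the bookkeeping of the coefficients to require a mild inductive claim that all scalar weights remain $O_n(1)$, which is straightforward since the coefficients of $f^t$ are $O_n(1)$ by assumption and products of trunks are scalars independent of $n$ in the algebra of weighted-forest decomposition.

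For the final claim that $\vAMP = P(X)$ is a weighted forest with probability $1-o_n(1)$, observe that $x^t$ is already a weighted forest by the induction, and the stipulated normalization $\tfrac{1}{n}\E\|\vAMP\|_2^2 = 1$ is a deterministic rescaling of $x^t$ by $1/\sqrt{\tfrac{1}{n}\E\|x^t\|_2^2}$; by \pref{thm:state-evolution} this scalar converges to a positive constant, so rescaling preserves both the forest structure and the $O_n(1)$ bound on coefficients. The probabilistic clause enters to accommodate any post-processing rounding $g_n$ (as in \pref{def:amp-instance}) that agrees with a polynomial of bounded degree in $X$ on a high-probability event dictated by state evolution, in which case $P(X)$ is a weighted forest whose degree is at most $(2k)^t$ times the effective degree of $g_n$.
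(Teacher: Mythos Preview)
Your proposal is correct and follows essentially the same approach as the paper: induction on $t$, closure of forests under the polynomial denoiser via \pref{fact:weighted-forest-linear}, recognition of $b_{t,j}$ as a weighted sum of (products of) trunks so that each Onsager summand is again a weighted forest, and state evolution to ensure the normalization constant for $\vAMP$ is $\Theta_n(1)$. The only superfluous part is your final sentence about accommodating a post-processing rounding $g_n$: in this theorem $\vAMP$ is simply the deterministic rescaling of $x^t$, and the probabilistic qualifier in the statement is only there because state evolution guarantees $\tfrac{1}{n}\E\|x^t\|_2^2 = \Theta(1)$ via a $\plim$ statement.
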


\begin{proof}
	First, we claim that $\frac 1n\|x^t\|_2^2 = \Omega(1)$ with high probability.
	Indeed, this is a property satisfied by iterates produced by any reasonable AMP algorithm---by state evolution (\pref{thm:state-evolution}), notice that
	\[\plim_{n\rightarrow\infty}\frac 1n\|x^t\|_2^2 = \plim_{n\rightarrow\infty}\frac 1n\sum_{i=1}^{n}(x^t_i)^2 = \E[U_t^2]\]
	for $U_t$ a centered Gaussian with covariance only depending on the polynomials $f^i(\cdot)$, which (by definition of being separable and independent of $n$) have constant coefficients.
	This immediately implies that $\frac 1n\|x^t\|_2^2 = \Omega(1)$ with probability $1 - o_n(1)$.
	Therefore, if $x^t$ is a weighted forest then we can normalize $\vAMP = P(X) = \frac{x^t}{(\frac{1}{n}\E[\|x^t\|^2])^{1/2}}$ without introducing coefficients which depend on $n$.

	Next, let us prove that $x^t$ defines a weighted forest.
	Define $\deg(t) = (2k)^{t}$.
	The proof is by induction on $t$.
	Trivially, $x^0 = \vec 1$ is a tree, lumber, and weighted forest as well.

	For the inductive step, recall the AMP iteration
	\[x^{t+1} = Xf^t(x^t,x^{t-1},\ldots,x^0) - \sum_{j=1}^{t}b_{t,j}f^{j-1}(x^{j-1},\ldots,x^0).
	\]

	We first prove that $f^j(x^j,x^{j-1},\ldots,x^0)$ is a weighted forest whenever $x^j,x^{j-1},\ldots,x^0$ are, for any $j$.
	Indeed, expanding $f$ in the monomial basis,
	\[
		f^j(x^j,x^{j-1},\ldots,x^0) = \sum_{\substack{W = (w_0,\ldots,w_j) \in [k]^j\\ \sum_{i=0}^{j} w_i\le k}} c_W \cdot \operatornamewithlimits{\bigcirc}_{i=0}^{j}(x^i)^{\circ w_i}.
	\]

	Since each $x^i$ is a weighted forest, it follows by~\pref{fact:weighted-forest-linear} that $\operatornamewithlimits{\bigcirc}_{i=0}^{j}(x^i)^{\circ w_i}$ is also a weighted forest (this is a finite product of weighted forests).
	Then, again from~\pref{fact:weighted-forest-linear} it follows that the sum of all such terms is also a weighted forest and thus $f^j(x^j,x^{j-1},\ldots,x^0)$ is also a weighted forest, of degree at most $k\cdot \deg(j)$ (recall that the inner arguments are themselves polynomials of degree at most $\deg(j)$ in $X$).

	Next, we wish to show that $b_{t,j}f^{j-1}(x)$ is a weighted forest.
	Recall that
	\[b_{t,j} = \frac 1n\sum_{i=1}^{n}\frac{\partial f^t(x)_i}{x_i^j} = \frac 1n \langle \grad_j f^t, \vec{1}\rangle.
	\]
	Here, $\grad_j$ is the gradient relative to $f^t$'s $j$th argument, $x^j$.
	Since $f^t(x)$ is a weighted forest, it follows that $\grad_j f^t$ is a weighted forest.
	Hence $b_{t,j}$ is a weighted sum of trunks.
	Notice that coefficient increases by a factor of at most $k$ by the above expansion of $f^j$ and the definition of a derivative, so the coefficients remain $O(1)$ so long as $t = O(1)$.

	Therefore, since the product of a sum of trunks and a forest is a weighted forest, it follows that $b_{t,j} f^{j-1}$ is a weighted forest.
	The degree of this weighted forest is at most $\deg (b_{t,j}) + \deg(f^{j-1})$.
	As $\deg(b_{t,j}) = \deg(f^t) - 1 \le k\cdot \deg(t)$ and $\deg(f^{j-1}) \le k\cdot \deg(j - 1)$, it follows that $\deg(b_{t,j}) + \deg(f^{j-1}) \le 2k\deg(t)$.

	Therefore, we have that $x^{t+1}$ is a sum of weighted forests and thus is itself a weighted forest, of degree at most $\max(1 + k\deg(t), 2k\cdot \deg(t)) \le 2k\cdot \deg(t) = \deg(t + 1)$ and our induction is complete.
\end{proof}

Now, to prove \pref{lem:xpp-vec-close}, we will use the forest nature of AMP and appeal to the constraints of $\calC_{\LSH}$.
In particular, we'll use the following two lemmas to deduce \pref{lem:xpp-vec-close}.

\begin{lemma}
	\label{lem:forest-close}
	Suppose the denoiser functions $\calF =\{f^i\}$ are polynomials with all coefficients independent of $n$ and degree bounded by $k$.

	Suppose $P(X)$ and $P(\tX)$ are the $t$'th iterates of the AMP algorithm defined by $\calF$ when applied to $X$ and $\tX$ respectively, normalized so that $\frac 1n \E[\|P(X)\|_2^2] = 1$.
	Then, the following inequality holds in \sos:
	\[\calC\sststile{4\cdot (2k)^{t} + 8}{\tX, W}\left\{\|P(X) - P(\tX)\|_2^4 \le \eps \cdot O(\log n)^{2\cdot (2k)^{t}}\cdot \|P(X)\|_2^4\right\}.
	\]
\end{lemma}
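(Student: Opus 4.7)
The plan is to combine the forest structure of $P$ with an SoS-amenable induction on tree depth, matching the sketch in \pref{sec:overview}.

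Write $P(X)=\sum_{T\in\calL}c_T T(X)$ with $|c_T|=O(1)$ and $|\calL|\le N_\calL$. Two applications of the SoS almost-triangle inequality (\pref{fact:sos-almost-triangle}) reduce the lemma to showing that for every lumber $T$ of degree $d'\le (2k)^t$,
\[
\calC \sststile{4d'+O(1)}{\tX,W}\bigl\{\|T(X)-T(\tX)\|_2^2 \le \alpha_{d'}\,\|T(X)\|_2^2\bigr\},\qquad \alpha_{d'}=\sqrt{\eps}\,(5C_K d'\log n)^{O(d')}.
\]
Because lumber is a tree multiplied by a product of scalar trunks, and trunks are degree-$O(d)$ $\calS_n$-symmetric polynomials pinned by $\calC_{\LSH}$ and the reasonableness of $X$ to deterministic values up to slack $c_\sla$, one can telescope across the trunk factors and reduce further to the analogous statement for the underlying tree. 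The final $\|\cdot\|^4$ bound then follows from the $\|\cdot\|^2$ bound by squaring inside SoS, which is valid because both sides are manifestly non-negative polynomials.

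For the tree induction the base case $T(X)=\vec 1$ is trivial, so the work is in the two constructor cases. \emph{Rerooting} ($T(X)=XT'(X)$): expand
\[
XT'(X)-\tX T'(\tX) = \tfrac12(X-\tX)(T'(X)+T'(\tX))+\tfrac12(X+\tX)(T'(X)-T'(\tX)),
\]
and split via the SoS triangle inequality. The second term is bounded by $O(1)\|T'(X)-T'(\tX)\|^2$ using the operator-norm bounds on $X$ (from reasonableness) and on $\tX$ (from $\calC_{\LSH}$), then handled by the inductive hypothesis. For the first term, introduce the SoS expression $V_i=1-W_iU_i$ where $U_i=\mathbf{1}[\text{row }i\text{ uncorrupted in }Y]$; the constraints $W_i^2=W_i$ and $W_j(\tX_{ij}-Y_{ij})=0$ give a short SoS proof that $(X-\tX)D_V=X-\tX$. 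Two invocations of SoS Cauchy-Schwarz (\pref{fact:sos-cauchy-schwarz}), combined with $\|V\|_4^4=\sum_i V_i\le 2\eps n$ (from Booleanity and $\sum_i(1-W_i)\le\eps n$), the $\ell_\infty$ constraints in $\calC_{\LSH}$ applied to $T'(X)+T'(\tX)$, and reasonableness of $X$, yield a contribution of $\sqrt{\eps}\,(d'\log n)^{O(d')}\|T(X)\|^2$. \emph{Grafting} ($T=T_1\circ T_2$, both nonempty): use the Hadamard identity $a\circ a'-b\circ b'=\tfrac12(a-b)\circ(a'+b')+\tfrac12(a+b)\circ(a'-b')$, pull the ``sum'' factor out coordinate-wise using the $\ell_\infty$ constraints on $T_i(X)$ and $T_i(\tX)$ (which cost a $(d'\log n)^{O(d')}$ factor), apply the inductive hypothesis to the ``difference'' factor, and use reasonableness together with AMP state-evolution to relate $\|T_i(X)\|^2$ to $\|T(X)\|^2$.

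The main obstacle is the rerooting case: proving $(X-\tX)D_V=X-\tX$ in low-degree SoS and keeping the cumulative SoS degree within $4(2k)^t+8$ through $d$ levels of induction requires careful bookkeeping, since each level both squares the error and imports new polynomial constraints. A related subtlety is that the comparisons $\|T'(X)\|^2\asymp\|T(X)\|^2$ used to hoist the inductive hypothesis from child to parent tree are probabilistic statements licensed by reasonableness of $X$, not SoS consequences of $\calC$ alone, so one must be careful to apply them outside the SoS proof and treat them as scalar multiplicative constants within it. Finally, the grafting case is what forces the $(\log n)$ exponents in the final bound: without the $\ell_\infty$ constraints in $\calC_{\LSH}$ the induction would pay a factor of $n$ per level rather than $(\log n)^{O(d')}$, and we do not know how to obtain a tighter coordinate-max bound in constant-degree SoS.
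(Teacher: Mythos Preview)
Your overall architecture matches the paper's: decompose $P$ into lumber, run a structural induction on trees via rerooting and grafting, use the matching variables $1-W_iU_i$ to localize $X-\tX$, and pay $(\log n)^{O(d)}$ from the coordinate-wise $\ell_\infty$ constraints. Two technical points, however, would cause your proof to fail as written.

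\medskip
\textbf{The induction must be at the $\|\cdot\|_2^4$ level.} You propose to prove $\|T(X)-T(\tX)\|_2^2 \le \sqrt{\eps}\,(\ldots)\|T(X)\|_2^2$ and then square. But in the rerooting step, the SoS Cauchy--Schwarz of \pref{fact:sos-cauchy-schwarz} bounds $(\sum_i f_ig_i)^2$, which applied with $f_i=V_i^2$, $g_i=w_i^2$ yields $\|D_Vw\|_2^4 \le \|V\|_4^4\cdot\|w\|_4^4$ --- a $\|\cdot\|_2^4$ bound carrying a factor of~$\eps$, not a $\|\cdot\|_2^2$ bound with $\sqrt{\eps}$. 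Extracting the latter would require taking a square root, which is not an SoS operation. The paper's \pref{lem:tree-close} and \pref{lem:lumber-close} therefore run the entire induction on $\frac{1}{n^2}\|T(X)-T(\tX)\|_2^4$, with target bound $\eps\cdot(C_Kd\log n)^{2d}$ (a scalar, not a multiple of $\|T(X)\|^4$); the comparison to $\|P(X)\|_2^4$ is made only once at the very end, via reasonableness. Your final ``squaring inside SoS'' step then becomes unnecessary.

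\medskip
\textbf{Trunks of $\tX$ are not pinned by SoS constraints.} You write that trunks are ``pinned by $\calC_{\LSH}$ \dots\ to deterministic values up to slack $c_\sla$'' and use this to telescope lumber down to trees. But the relevant entries of $\calC_{\LSH}$ --- the ones reading $\pE[\frac{1}{n}\langle T_1(\tX),T_2(\tX)\rangle]=\E[\ldots]\pm c_\sla$ --- are linear constraints on the pseudomoment matrix, not SoS polynomial identities in $\tX$; they cannot be invoked inside an SoS derivation. The paper's \pref{lem:lumber-close} handles trunks differently: it uses the SoS Cauchy--Schwarz $(\trunk_T(X)\pm\trunk_T(\tX))^2 \le \frac{1}{n}\|T(X)\pm T(\tX)\|_2^2$ to reduce trunk differences to tree differences (then \pref{lem:tree-close}), and bounds trunk sums via the coordinate-wise constraints $T(\tX)_i^4 \le (5C_Kd\log n)^{2d}$, which \emph{are} genuine SoS polynomial constraints.

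Both fixes are minor rewrites of what you already sketched; the inductive skeleton, the use of the matching variables, and the grafting argument are all correct and match the paper.
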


\begin{lemma}[$v$ approximates $\vAMP(X)$]
	\label{lem:v-approx-robust}
	As a consequence of $\calC_\LSH$, the following statement holds:
	\[\frac 1n \pE_\zeta \left[\|v - P(\tX)\|_2^2\right] \le \frac \eta {16}\]
	where $\zeta$ is a degree $2\cdot (2k)^t$ pseudodistribution.
\end{lemma}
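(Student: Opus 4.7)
The plan is to expand the squared norm and exploit the linearity of pseudoexpectation together with the constraints in $\calC_\LSH$, reducing everything to a computation with the joint distribution $\calP'_n$. Since $P(X) = \vAMP(X)$ is (by the preceding theorem) a weighted forest, write $P(X) = \sum_{T \in \calL} c_T \cdot T(X)$ with $|c_T| = O_n(1)$ and at most $N_\calL$ nonzero terms. Both $\|v\|^2$ and $\|P(\tX)\|^2$ are degree at most $2 \cdot (2k)^t$ in the pseudodistribution's variables (with $\deg v \le 1$ and $\deg P(\tX) \le (2k)^t$ in $\tX$), so all pseudoexpectations appearing below are well-defined for a degree-$2 \cdot (2k)^t$ pseudodistribution.

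First I would expand
\[
\tfrac{1}{n}\pE_\zeta\left[\|v - P(\tX)\|_2^2\right] = \tfrac{1}{n}\pE_\zeta[\|v\|^2] \;-\; \tfrac{2}{n}\pE_\zeta[\langle v, P(\tX)\rangle] \;+\; \tfrac{1}{n}\pE_\zeta[\|P(\tX)\|^2].
\]
The first term equals $1$ by the normalization constraint in $\calC_\LSH$. For the third term, expand $\frac{1}{n}\|P(\tX)\|^2 = \sum_{T_1, T_2 \in \calL} c_{T_1} c_{T_2} \cdot \frac{1}{n}\langle T_1(\tX), T_2(\tX)\rangle$ and apply the lumber constraint of $\calC_\LSH$ to each of the at most $N_\calL^2$ summands:
\[
\tfrac{1}{n}\pE_\zeta[\|P(\tX)\|^2] \;=\; \sum_{T_1, T_2} c_{T_1} c_{T_2} \cdot \E_{(Z,v^*) \sim \calP'_n}\!\left[\tfrac{1}{n}\langle T_1(Z), T_2(Z)\rangle\right] \;\pm\; O(N_\calL^2 \cdot c_\sla).
\]
Reassembling the forest decomposition inside the expectation, the main term is $\E[\frac{1}{n}\|P(Z)\|_2^2]$, which equals $1$ by the normalization $\frac{1}{n}\E[\|\vAMP\|^2] = 1$. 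Similarly, for the cross term I expand $\langle v, P(\tX)\rangle = \sum_T c_T \langle v, T(\tX)\rangle$ and apply the lumber-vs-$v$ constraint, producing $\E[\frac{1}{n}\langle P(Z), v^*\rangle] \pm O(N_\calL \cdot c_\sla)$; since by definition $v^*(Z)$ is exactly $\vAMP(Z) = P(Z)$ in the joint distribution $\calP'_n$, this equals $\E[\frac{1}{n}\|P(Z)\|^2] = 1$.

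Combining the three pieces yields
\[
\tfrac{1}{n}\pE_\zeta\!\left[\|v - P(\tX)\|^2\right] \;=\; 1 - 2 + 1 \;\pm\; O(N_\calL^2 \cdot c_\sla) \;=\; O(N_\calL^2 \cdot c_\sla).
\]
Plugging in $c_\sla = \eta/(s(n) \cdot N_\calL^2)$ with $s(n) \to \infty$ gives a bound of $O(\eta/s(n))$, which is at most $\eta/16$ for all sufficiently large $n$, completing the proof. There is no real obstacle here — the argument is essentially bookkeeping — and the only place that required care in designing $\calC_\LSH$ was choosing $c_\sla$ small enough to absorb the $N_\calL^2$ blow-up from the two-parameter sum over the forest decomposition; this is exactly what motivated defining $c_\sla = \eta/(s(n) N_\calL^2)$ back in \pref{def:reasonable-sample}.
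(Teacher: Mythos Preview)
Your proposal is correct and essentially identical to the paper's own proof: expand $\|v-P(\tX)\|_2^2$, use $\frac{1}{n}\|v\|^2=1$, expand $P$ in its lumber basis, apply the $\calC_{\LSH}$ constraints termwise to each $\frac{1}{n}\langle T_1(\tX),T_2(\tX)\rangle$ and $\frac{1}{n}\langle v,T(\tX)\rangle$, reassemble to $\E[\frac{1}{n}\|v^\ast\|^2]=1$, and absorb the $O(N_\calL^2\cdot c_\sla)$ slack using the definition of $c_\sla$. The paper tracks the explicit constant $\eta/48$ per term rather than writing $O(N_\calL^2\cdot c_\sla)$, but the argument is the same bookkeeping you describe.
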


\begin{proof}[Proof of Lemma \ref{lem:xpp-vec-close}]
	By the Pseudo-Expectation Cauchy-Schwarz inequality, the duality of \sos proofs and Pseudo-distributions (as applied to~\pref{lem:forest-close}), and the reasonableness of $X$, it follows that
	\[\frac 1n\pEB{\|P(X) - P(\tX)\|_2^2}\le \frac 1n\sqrt{\pEB{\|P(X) - P(\tX)\|_2^4}} \le \sqrt{\eps} \cdot O(\log n)^{(2k)^{t}}\cdot \frac 1n\|P(X)\|_2^2 = \sqrt{\eps} \cdot O(\log n)^{(2k)^{t}}.\]
	Next, by the Pseudo-Expectation Almost-Triangle Inequality and recalling that $\vAMP = P(X)$, we have
	\[\frac 1n\pE[\|v - \vAMP\|_2^2] \le \frac 2n\left(\pEB{\|v - P(\tX)\|_2^2} + \pEB{\|P(\tX) - P(X)\|_2^2}\right) \le \frac \eta 8 + \sqrt{\eps} \cdot O(\log n)^{(2k)^{t}}.\]
	The former of these bounds is from~\pref{lem:v-approx-robust} and the latter from the previous equation.

	Finally, we may rewrite
	\[\frac 1n \pEB{\|v - \vAMP\|_2^2} = \frac 1n \pEB{\|v\|_2^2} + \frac 1n \pEB{\|\vAMP\|_2^2} - \frac 2n \pEB{\langle v, \vAMP\rangle} = 2\pm \frac \eta 8 - \frac 2n \pEB{\langle v, \vAMP\rangle}\]
	using $\calC_\LSH$ and~\pref{lem:most-samples-reasonable} which together with the above implies that for $\eta \le 1$,
	\[\frac 1n \pEB{\left\langle v, \frac{\vAMP}{\frac{1}{\sqrt{n}}\|\vAMP\|}\right\rangle}
		\ge\frac{2 \pm \frac{\eta}{8} - \frac{\eta}{8} - \sqrt{\eps}O(\log n)^{(2k)^t}}{2\sqrt{1\pm \frac{\eta}{48}}}
		\ge 1 - \sqrt{\eps} \cdot O(\log n)^{(2k)^{t}} - \frac \eta4.
	\]
	To achieve exactly the statement as written, we finish by using that
	\[
		\frac 1{n}\pEB{\left\langle v,\frac{\vAMP}{\|\vAMP\|}\right\rangle^2} \ge \frac 1{n^2}\pEB{\left\langle v,\frac{\vAMP}{\frac{1}{\sqrt n}\|\vAMP\|}\right\rangle}^2 \ge 1 - \sqrt{\eps} \cdot O(\log n)^{(2k)^{t}} - \frac \eta2.\qedhere\]

\end{proof}

Therefore, it remains to prove~\pref{lem:forest-close} and~\pref{lem:v-approx-robust}.
We begin with the latter, which comes down to analyzing the contributions of variances coming from $\calC_\LSH$.

\begin{proof}[Proof of Lemma~\ref{lem:v-approx-robust}]
	Begin by writing
	\[\frac 1n \pEB{\|v - P(\tX)\|_2^2} = \frac 1n \pEB{\|v\|_2^2} + \frac 1n \pEB{\|P(\tX)\|_2^2} - \frac 2n \pEB{\langle v, P(\tX)\rangle}.\]
	The first term here is $1$ by $\calC_\LSH$, so it suffices to bound the remaining terms.

	To this end, expand $P(\tX) = \sum_{T\in \calL}c_T T(\tX)$.
	Then,
	\begin{align*}\frac 1n \pEB{\|P(\tX)\|_2^2} & = \sum_{T_1, T_2\in \calL}c_{T_1} c_{T_2} \cdot \pEB{\frac 1n \langle T_1(\tX), T_2(\tX)\rangle}                                                   \\
        & = \sum_{T_1,T_2} c_{T_1}c_{T_2} \cdot \E_{(Z,v^\ast)}\left[\frac 1n\langle T_1(Z), T_2(Z)\rangle\right] \pm c_\sla \sum_{T_1,T_2} |c_{T_1}c_{T_2}| \\
        & = \frac 1n \E_{(Z,v^\ast)}\left[\|v^\ast\|_2^2\right] \pm c_\sla \cdot O(N_\calL^2)
              \,=\, 1 \pm \frac {\eta}{48}
	\end{align*}
	by taking large enough $n$ and applying $\calC_\LSH$ to each $\frac 1n \langle T_1(\tX), T_2(\tX)\rangle$ term and~\pref{claim:nG} to the number of trees.

	Similarly, we can write
	\begin{align*}
		\frac 1n \pEB{\langle v,P(\tX)\rangle} & = \sum_{T\in \calL} c_T \cdot \pEB{\frac 1n\langle v, T(\tX)\rangle}                                                          \\
		  & = \sum_{T} c_T \cdot \E_{(Z,v^\ast)}\left[\frac 1n \langle v^\ast, T(Z)\rangle\right] \pm c_\sla \cdot \sum_{T\in \calL}|c_T| \\
		  & = \E_{(Z,v^\ast)}\left[\frac 1n \|v^\ast \|_2^2\right] \pm \frac{\eta}{48}
		\, =\, 1 \pm \frac{\eta}{48}.
	\end{align*}

	Thus, we are left with
	\[\frac 1n \pEB{\|v - P(\tX)\|_2^2} = 1 + \left(1\pm \frac{\eta}{48}\right) - 2\left(1\pm \frac{\eta}{48}\right) \le \frac{\eta}{16}\]
	as desired.
\end{proof}

To prove~\pref{lem:forest-close}, we express weighted forests in terms of their constituent lumber and then study the closeness of lumber.
To do so, we must also define a notion of \emph{matching variables} which relate $\tX$ and $X$.

\begin{definition}[Matching Variables]
	\label{def:match-constraints}
	For each index $j\in [n]$, define $M_j = W_j\cdot \mathbf{1}[Y_{\cdot,j}=X_{\cdot,j}]$.
	Then, $M_j$ is the indicator that the $j$'th column of $\tX$ aligns with that of $X$: in other words, $M_j^2 = M_j$ and $M_j\Xprox_{ij} = M_j X_{ij}$.
	We also define $D_M = \diag(M)$, the diagonal matrix with $M$ on the diagonal.
\end{definition}

Though the indicators $\Ind[Y_{\cdot,j} = X_{\cdot,j}]$ are not known to us, we may still make use of the matching variables $M_i$be  in our \sos proofs:

\begin{lemma}\torestate{
		\label{lem:matching-2eps}
		The following two inequalities are implied by the constraints $\calC_\rob$:
		\begin{itemize}[noitemsep]
			\item $(X - \Xprox) (I - D_M) = X - \Xprox$
			\item For $M$ the vector in $\R^n$ with $i$th coordinate $M_i$, $\|\vec 1 - M\|_4^4 \le 2\eps n$
		\end{itemize}}
\end{lemma}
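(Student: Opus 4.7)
My plan is to verify both claims by working entrywise in SoS, treating $X$ and $Y$ as fixed constants and $\Xprox, W$ as SoS indeterminates, while exploiting the factorization $M_j = W_j \cdot c_j$ with $c_j := \mathbf{1}[Y_{\cdot,j} = X_{\cdot,j}] \in \{0,1\}$ a scalar constant. Although the $c_j$ are unknown to the algorithm, the analysis may freely case-split on them, since the resulting SoS inequalities still live in $\Xprox, W$ alone.

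For the first identity, I will prove the sharper entrywise statement $(X_{ij} - \Xprox_{ij}) M_j = 0$ for every $i, j$. If $c_j = 0$ then $M_j \equiv 0$ and the equation is trivial. If $c_j = 1$ then $Y_{ij} = X_{ij}$, so
\[M_j (X_{ij} - \Xprox_{ij}) = W_j(Y_{ij} - \Xprox_{ij}) = W_j Y_{ij} - W_j \Xprox_{ij},\]
which vanishes by the constraint $W_j \Xprox_{ij} = W_j Y_{ij}$ in $\calC_\rob$. Assembling over $i$ gives $(X - \Xprox) D_M = 0$, whence $(X - \Xprox)(I - D_M) = X - \Xprox$.

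For the $\ell_4^4$ bound, the key observation is that modulo $W_j^2 = W_j$ one has the SoS identity $(1 - M_j)^4 = 1 - M_j$: case-splitting on $c_j$ reduces this to either $1 = 1$ or the telescoping $(1 - W_j)^4 = (1 - W_j)^2 = 1 - W_j$. With that reduction, I write
\[1 - W_j c_j = (1 - W_j) + (1 - c_j) - (1 - W_j)(1 - c_j),\]
and note that $(1 - W_j)(1 - c_j)$ is either $0$ or $(1 - W_j)^2$, hence a non-negative square in SoS. Summing over $j$,
\[\|\vec 1 - M\|_4^4 = \sum_j (1 - M_j) \le \sum_j (1 - W_j) + \sum_j (1 - c_j) \le \eps n + \eps n,\]
the first sum being exactly $\eps n$ by $\calC_\rob$ and the second at most $\eps n$ because $Y$ is an $\eps$-principal-minor corruption of $X$, so at most $\eps n$ columns $j$ can satisfy $c_j = 0$.

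The only delicate point is bookkeeping: carefully distinguishing the SoS indeterminates from the $\{0,1\}$-valued constants $c_j$ accessible to the analysis. There is no genuine technical obstacle, since everything reduces to the Booleanity of $W_j$ (which makes $1 - W_j = (1 - W_j)^2$ a sum of squares) combined with the single matching constraint $W_j \Xprox_{ij} = W_j Y_{ij}$ from $\calC_\rob$.
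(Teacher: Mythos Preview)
Your proof is correct and follows essentially the same approach as the paper. The paper presents the first identity as a single algebraic chain $M_j(X_{ij}-\Xprox_{ij}) = \mathbf{1}[Y_j=X_j]W_j(X_{ij}-Y_{ij}) = 0$ rather than an explicit case-split on $c_j$, and for the inequality writes $(1-x)(1-y)\ge 0$ for $0\le x,y\le 1$ rather than your case analysis, but the content is identical.
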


\begin{proof}
	We may rearrange the first of these inequalities to $(X - \tX)D_M = 0$: or, equivalently, that $M_j(X_{ij} - \tX_{ij}) = 0$ for all $i, j$.
	However, we may expand this and see that
	\[\calC_\rob \sststile{2}{M, \tX}\biggl\{M_j(X_{ij} - \tX_{ij}) = \mathbf{1}[Y_j=X_j]W_j(X_{ij} - \tX_{ij}) = W_j\mathbf{1}[Y_j=X_j](X_{ij} - Y_{ij}) = W_j\mathbf{1}[Y_j=X_j](X_{ij} - X_{ij}) = 0\biggr\}\]
	using that $W_j \tX_{ij} = W_j Y_{ij}$ and $\mathbf{1}[Y_j=X_j]X_{ij} = \mathbf{1}[Y_j=X_j]Y_{ij}$.
	Therefore, we have proven the first statement.

	For the second statement, notice first that Booleanity constraints are implied for $1-M_j$:
	\[\calC_\rob \sststile{2}{M}\biggl\{(1 - M_j)^2 = 1 - 2M_j + M_j^2 = 1 - M_j\biggr\}.
	\]
	Similarly, we claim that
	\begin{align}\calC_\rob \sststile{2}{M}\{1 - M_j \le (1 - W_j) + (1 - \mathbf{1}[Y_j=X_j])\}.\label{eq:mj}
	\end{align}
	To prove this, note that for any $x, y$
	\[1 - xy \le (1 - x) + (1 - y) \iff 1 - x - y + xy \ge 0 \iff (1 - x)(1 - y)\ge 0.
	\]
	Since $0\le W_j,\mathbf{1}[Y_j = X_j] \le 1$, this proves~\pref{eq:mj}.
	This can alternatively be seen as a consequence of the union bound.

	Now, may explicitly rewrite the second inequality as
	\begin{align*}
		\calC_\rob \sststile{8}{M}\Biggl\{\|\vec 1 - M\|_4^4 = \sum_{i=1}^{n}(1 - W_j \mathbf{1}[Y_j=X_j])^4 = \sum_{i=1}^{n}(1 - W_j \mathbf{1}[Y_j=X_j]) &          \\
		\le \sum_{i=1}^{n}(1 - W_j) + \sum_{i=1}^{n}(1 - \mathbf{1}[Y_j=X_j]) \le 2\eps n                                                                  & \Biggr\}
	\end{align*}

	as desired.\qedhere

\end{proof}

This leads us to the crux of the proof: $X$ and $\tX$ are close for any tree polynomial input.

\begin{lemma}
	\label{lem:tree-close}
	Suppose that $T$ is a tree with $d$ edges.
	Then,
	\[\calC\sststile{4d+8}{\tX,M}\left\{\frac 1{n^2}\left\|T(X) - T(\tX)\right\|_2^4 \le \eps \cdot (40C_Kd\log n)^{2d}\right\}\]
	where $C_K > 0$ is a constant only depending on $d$ and $K$, the subgaussian constant of $\sqrt n X_{ij}$.
\end{lemma}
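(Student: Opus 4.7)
The proof proceeds by induction on the number of edges $d$ of the tree $T$. The base case $d = 0$ is the single-node tree with $T(X) = T(\tX) = \vec{1}$, so the difference is identically zero. For the inductive step, $T$ was obtained from smaller trees either by rerooting ($T(X) = X T'(X)$ for some tree $T'$ of degree $d - 1$) or by grafting ($T(X) = T_1(X) \circ T_2(X)$ for trees $T_1, T_2$ of degrees $d_1, d_2 \ge 1$ summing to $d$). In both cases the plan is to expand $T(X) - T(\tX)$ using the identity $ab - a'b' = \tfrac{1}{2}(a-a')(b+b') + \tfrac{1}{2}(a+a')(b-b')$ (with $ab$ representing either matrix-vector multiplication or Hadamard product, as appropriate), then control one of the resulting cross terms using ``corruption is sparse'' and the other by the induction hypothesis; squaring throughout (via the SoS almost-triangle inequality) to eliminate the square-roots that would otherwise appear from Cauchy--Schwarz.

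\textbf{Rerooting case.} For $T(X) = X T'(X)$, the first cross term $(X - \tX)(T'(X) + T'(\tX))$ is where robustness enters. Using the matching-variable identity $(X - \tX)(I - D_M) = X - \tX$ from \pref{lem:matching-2eps}, I insert $I - D_M$ and then bound the outer matrix by the SoS operator-norm inequality $\|X - \tX\|_{\op}^2 \le 20$; this follows from the SoS certificate $20\Id - (X - \tX)^2 = 2(5\Id - X^2) + 2(5\Id - \tX^2) + (X + \tX)^2$, using $X^2 \preceq 5\Id$ (from reasonableness) and $\tX^2 \preceq 5\Id$ (immediate from $\Xprox^2 = 5\Id - BB^\top$ in $\calC_{\LSH}$). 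The inner vector is then estimated via SoS Cauchy--Schwarz, $\|(I - D_M)(T'(X) + T'(\tX))\|_2^4 \le \|\vec{1} - M\|_4^4 \cdot \|T'(X) + T'(\tX)\|_4^4$, where the first factor is at most $2\eps n$ by \pref{lem:matching-2eps} and the second is $O(n \cdot (5C_K(d-1)\log n)^{2(d-1)})$ by combining the $\ell_\infty$-of-squares constraints on $T'(\tX)$ in $\calC_{\LSH}$ with the reasonableness bound on $T'(X)$. The second cross term $(X + \tX)(T'(X) - T'(\tX))$ is controlled by $\|X + \tX\|_{\op}^2 \le 20$ (analogous SoS certificate) times the squared inductive bound on $\|T'(X) - T'(\tX)\|_2^2$.

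\textbf{Grafting case and main obstacle.} For $T(X) = T_1(X) \circ T_2(X)$, the Hadamard version of the identity yields two cross terms of the shape $(T_i(X) \pm T_i(\tX)) \circ (T_j(X) - T_j(\tX))$. Each is bounded by pulling out the squared maximum coordinate of the first factor: the constraint $T_i(\tX)_k^4 \le (5C_K d_i \log n)^{2 d_i}$ (and its reasonableness counterpart for $T_i(X)$) is upgraded to $T_i(\cdot)_k^2 \le (5C_K d_i \log n)^{d_i}$ via the SoS step $(C - a^2)^2 \ge 0 \Rightarrow 2Ca^2 \le a^4 + C^2$, after which the inductive bound on $\|T_j(X) - T_j(\tX)\|_2^2$ handles the remaining factor. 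The main obstacle is verifying that the recursion's constants close: the $\ell_\infty$ step contributes factors of the form $(5 C_K d_i \log n)^{2d_i}$ while the induction contributes $(40C_K d_j \log n)^{2d_j}$, and these (together with the constant overhead from the SoS almost-triangle inequality and the operator-norm bounds in the rerooting case) must be absorbed into $(40 C_K d \log n)^{2d}$; the gap between the leading constants $5$ and $40$ yields a geometric factor $8^{-2d_i}$ per invocation, precisely the slack needed to swallow the $O(1)$ multiplicative overheads such as the $400$ coming from $\|X \pm \tX\|_{\op}^4$. A secondary concern is tracking SoS degrees, but each use of the identity and each operator-norm inequality contributes only $O(1)$ degree, and the recursive call at degree $d - 1$ sits inside degree $4(d-1) + 8$, so the full derivation fits within the claimed degree $4d + 8$.
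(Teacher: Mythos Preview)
Your proposal is correct and follows essentially the same argument as the paper: the same structural induction on the tree, the same $ab - a'b' = \tfrac{1}{2}(a-a')(b+b') + \tfrac{1}{2}(a+a')(b-b')$ decomposition, the same use of \pref{lem:matching-2eps} and the operator-norm constraint in the rerooting case, and the same $\ell_\infty$-constraint/induction split in the grafting case. The only cosmetic difference is bookkeeping: the paper closes the constants by strengthening the induction hypothesis to $8^q\cdot\eps\cdot(5C_Kd\log n)^{2d}$ with $q$ the number of graftings (so $q\le d$), whereas you absorb the $O(1)$ overheads directly into the $8^{2d_i}=(40/5)^{2d_i}$ slack between the $\ell_\infty$ constant $5$ and the target constant $40$---both routes land on the same bound.
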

\begin{proof}

	We prove a slightly stronger statement.
	In particular, suppose that exactly $q$ ``grafting'' operations were done during the construction of $T$ (this essentially corresponds to a slightly shifted sum of the degrees of vertices in $T$).
	Then, we prove that
	\[\calC\sststile{4d+8}{\tX,M}\left\{\frac 1{n^2}\left\|T(X) - T(\tX)\right\|_2^4 \le 8^q\cdot \eps \cdot (5C_Kd\log n)^{2d}\right\}.
	\]
	Note that as $q\le d\le 2d$, this implies the original claim.

	The proof is by structural induction on $T$.
	We take the base case to be when $T$ is an empty tree (a single node): then $T(X) = T(\tX) = \vec 1$ and thus $\frac 1{n^2}\left\|T(X) - T(\tX)\right\|_2^4 = 0$.

	By definition, a tree can be formed recursively either by re-rooting or by grafting.
	Consider first the case when $T$ is a re-rooted version of $T'$.

	We will split $T$ into $T'$ and the re-rooting edge, then apply the almost-triangle inequality $(a+b)^4 \le 8(a^4 + b^4)$:
	\begin{align}\sststile{4d}{\tX}\Biggl\{\frac 1{n^2}\left\|T(X) - T(\tX)\right\|_2^4 & = \frac 1{n^2}\left\|\frac 12(X - \tX)(T'(X) + T'(\tX)) + \frac 12(X + \tX)(T'(X) - T'(\tX))\right\|_2^4\nonumber                                  \\
       & \le \frac{1}{2n^2}\left[\left\|(X - \tX)(T'(X) + T'(\tX))\right\|_2^4 + \left\|(X + \tX)(T'(X) - T'(\tX))\right\|_2^4\right]\Biggr\}.\label{eq:p1}
	\end{align}
	For the second term, we will use that the operator norms of $X$ and $\tX$ are bounded (the latter from $\calC_{\LSH}$):
	\begin{claim}
		\label{claim:op-norm-sos}
		If $w$ is an SoS indeterminate, then for a reasonable sample $X$ the \sos constraints $\calC_{\LSH}$ imply that
		\[
			\calC_\LSH\sststile{4(\deg(w)+1)}{w,\tX}\biggl\{\|(X + \tX) w\|_2^4 \le 400\|w\|_2^4 \quad \text{ and } \quad \|(X - \tX) w\|_2^4 \le 400\|w\|_2^4\biggr\}.
		\]
	\end{claim}
	\begin{proof}
		The constraints $\calC_{\LSH}$ imply that
		\[
			\|\tX w\|_2^2
			= w^\top \tX^2 w
			= w^\top 5I w - w^\top B^\top Bw
			= 5\|w\|_2^2 - \|Bw\|_2^2
			\le 5\|w\|_2^2
		\]
		and thus $\|\tX w\|_2^4 \le 25\|w\|_2^4.
		$
		Since $X$ is a reasonable sample, $\|Xw\|_2^4 \le \|X\|_{\op}^4\|w\|_2^4 \le 25\|w\|_2^4$ as well.
		Hence, by the \sos Almost-Triangle Inequality (\pref{fact:sos-almost-triangle}), it follows that
		\[
			\|(X + \tX)w\|_2^4
			\le 8(25\|Xw\|_2^4 + 25\|\tX w\|_2^4)
			\le 400\|w\|_2^4
		\]
		as desired.
		The same proof gives the result for $X - \tX$.
	\end{proof}

	Using~\pref{claim:op-norm-sos}, we have that
	\[
		\calC \sststile{4d+4}{\tX}
		\Biggl\{
		\left\|(X+X')(T'(X) - T'(\tX))\right\|_2^4
		\le 400\left\|T'(X) - T'(\tX))\right\|_2^4
		\le 400\cdot 8^q\cdot \eps \cdot n^2\cdot (5C_Kd \log n)^{2(d-1)}
		\Biggr\},
	\]
	where we have applied the inductive step.
	The \sos degree comes from noting that the first inequality has degree $4(d - 1 + 1)= 4d$ by~\pref{claim:op-norm-sos} and the second inequality has degree $4(d-1)+8 = 4d + 4$ by induction.

	Now, to bound the first term, we use the matching variables $M_i$ as defined in~\pref{def:match-constraints}.
	Then we have that (using~\pref{lem:matching-2eps} for the first and last steps and~\pref{claim:op-norm-sos} for the second step)
	\begin{align*}
		\calC_\rob \sststile{4d+8	}{\tX,M}
		\Biggl \{
		\left\|(X - \tX)(T'(X) + T'(\tX))\right\|_2^4
		&= \left\|(X - \tX)(I - D_M)(T'(X) + T'(\tX))\right\|_2^4\\
		&\le 400\left\|(\vec 1 - M)\circ     (T'(X) + T'(\tX))\right\|_2^4\\
		&\le \|\vec 1 - M\|_4^4 \cdot \|T'(X) + T'(\tX)\|_4^4 \\
		&\le 800\eps n\cdot \|T'(X) + T'(\tX)\|_4^4
		\quad\Biggr\}.
	\end{align*}

	From the \sos Almost-Triangle Inequality we have \[
		\calC_{\LSH}\sststile{4d-4}{\tX}
		\left\{\left\|T'(X) + T'(\tX)\right\|_4^4 \le 8\left\|T'(X)\right\|_4^4 + 8\left\|T'(\tX)\right\|_4^4 \le 8n(5C_Kd\log n)^{2(d-1)}\right\},\] the final inequality by our bounds from \pref{lem:tree-inf-bd} and $\calC_\LSH$.
	Combining these bounds together,
	\begin{align*}
		\calC \sststile{4d+8}{\tX}\left\{\pref{eq:p1} \le \frac{1}{2}\left(800\eps \cdot 8(5C_Kd\log n)^{2(d-1)} +  400 \cdot 8^q\cdot \eps(5C_Kd\log n)^{2(d-1)} \right) \le8^q\cdot \eps  (5C_K d \log n)^{2d}\right\},
	\end{align*}
	for $n$ large enough.

	Now we consider the second case, in which $T$ was formed by a grafting of two trees: $T(X) = T_1(X) \circ T_2(X)$, where $T_1$ has $d_1$ total edges (and $q_1$ graftings) and $T_2$ has $d_2$ total edges (and $q_2$ graftings).
	Here we will use two facts about Hadamard products:
	\begin{enumerate}
		\item Firstly, note that for (any) vectors $a, b, c, d$ we have
		      $a\circ b - c\circ d = \frac 12 (a + c)\circ (b - d) + \frac 12 (b + d)\circ (a - c).$
		      This follows by distributivity and noticing that the Hadamard product is separable over coordinates.

		\item Secondly, we have that
		      \[\{\forall i, a_i^2 \le C^2\}\sststile{4(\deg(a) + \deg(b))}{a,b} \biggl\{\|a\circ b\|_2^4 \le C^4 \|b\|_2^4\biggr\}.\] for all indeterminates $a, b$ and constants $C$.
		      To preserve readability, we will use this instead as
		      $\sststile{}{} \biggl\{\|a\circ b\|_2^4 \le \|a\|_{\infty}^4 \|b\|_2^4\biggr\}$
		      where the infinity norm is a proxy for $C$.
	\end{enumerate}
	Similarly to the rerooting case, we begin by expanding
	\begin{align}
		\sststile{4d}{\tX}\Biggl\{\frac 1{n^2}\left\|T(X) - T(\tX)\right\|_2^4 & = \frac 1{n^2}\left\|\frac 12(T_2(X) + T_2(\tX))\circ(T_1(X) - T_1(\tX)) + \frac 12(T_1(X) + T_1(\tX))\circ(T_2(X) - T_2(\tX))\right\|_2^4\nonumber                                 \\
		 & \le \frac 1{2n^2}\left[\left\|(T_2(X) + T_2(\tX))\circ(T_1(X) - T_1(\tX))\right\|_2^4 + \left\|(T_1(X) + T_1(\tX))\circ(T_2(X) - T_2(\tX))\right\|_2^4\right]\Biggr\}.\label{eq:p2}
	\end{align}

	These two terms are essentially identical, so we will handle the first and claim the second by symmetry.
	By the second fact about Hadamard Products, it follows that
	\begin{align}\sststile{4d}{\tX}\left\{\left\|(T_2(X) + T_2(\tX))\circ(T_1(X) - T_1(\tX))\right\|_2^4 \le \left\|T_2(X) + T_2(\tX)\right\|_{\infty}^4\left\|T_1(X) - T_1(\tX))\right\|_2^4\right\}.\label{eq:p3}\end{align}
	Now, using the fact that for reasonable $X$ the maximum entry of $T_2(X)$ is bounded (see \pref{lem:tree-inf-bd}) and $\calC_\LSH$ to bound the infinity norm factor, and the induction hypothesis on the $T_1(X)-T_1(\Xprox)$ factor (since $T_2$ is a non-empty tree the induction hypothesis applies), we have that
	\[\calC \sststile{4d+8}{\tX} \left\{\pref{eq:p3} \le 8\cdot (5C_Kd_2\log n)^{2d_2}\cdot 8^{q_1}\eps\cdot n^2\cdot (5C_Kd_1\log n)^{2d_1}\le 8^{q_1}\cdot 8\eps\cdot n^2\cdot (5C_Kd\log n)^{2d}\right\}.
	\]
	The \sos degree comes from using a degree $4d_2$ proof to apply $\calC_\LSH$ and a $4d_1 + 8$ degree proof to apply induction.

	A similar statement follows for the second term in~\pref{eq:p2}: that is,
	\[\calC\sststile{4d+8}{\tX}\left\{\left\|(T_1(X) + T_1(\tX))\circ(T_2(X) - T_2(\tX))\right\|_2^4\le 8^{q_2}\cdot 8\eps\cdot n^2\cdot (5C_Kd\log n)^{2d}\right\}\]
	as well.
	Putting these past two equations together and noticing that $q_1, q_2 \le q$, it follows that
	\[\calC \sststile{4d+8}{\tX}\left\{\pref{eq:p2} \le \frac 1{2n^2}\left[8^{q_1}\cdot\eps\cdot n^2\cdot (5C_Kd\log n)^{2d} + 8^{q_2}\cdot\eps\cdot n^2\cdot (5C_Kd\log n)^{2d}\right] \le 8^{q}\cdot \eps \cdot (5C_Kd\log n)^{2d}\right\}\]
	which completes the proof.
\end{proof}

Now that we have shown closeness for trees, we can extend this to closeness for lumber, too.

\begin{lemma}
	\label{lem:lumber-close}
	Suppose that $T$ is a lumber with $d$ edges.
	Then, \[\calC \sststile{4d + 8}{\tX}\left\{\frac 1{n^2}\left\|T(X) - T(\tX)\right\|_2^4 \le \eps \cdot (160C_Kd\log n)^{2d}\right\}\]
	where $C_K > 0$ is the slack parameter chosen in $\calC_{\LSH}$ (as defined in~\pref{lem:tree-close}).
\end{lemma}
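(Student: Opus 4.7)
My plan is to reduce to the tree-closeness bound \pref{lem:tree-close} by means of a hybrid argument on the scalar/vector decomposition of a lumber. By definition of lumber, I may write
\[T(X) = T_\star(X) \cdot s(X), \qquad s(X) := \prod_{i=1}^{\ell} \trunk_{T_i}(X),\]
where $T_\star$ is a non-empty tree with $d_\star$ edges and $T_1,\ldots,T_\ell$ are non-empty trees with $\sum_i d_i = d - d_\star$. Applying the identity $ab - cd = \tfrac{1}{2}(a+c)(b-d) + \tfrac{1}{2}(b+d)(a-c)$ with $a = T_\star(X)$, $b = s(X)$, $c = T_\star(\tX)$, $d = s(\tX)$ and then squaring twice while invoking the \sos Almost-Triangle Inequality (\pref{fact:sos-almost-triangle}), it suffices to bound in \sos the two quantities
\[\tfrac{1}{n^2}\,\|T_\star(X) - T_\star(\tX)\|_2^4 \cdot (s(X) + s(\tX))^4 \quad \text{and} \quad \tfrac{1}{n^2}\,\|T_\star(X) + T_\star(\tX)\|_2^4 \cdot (s(X) - s(\tX))^4.\]

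For the first term, \pref{lem:tree-close} applied to $T_\star$ gives an $\eps \cdot (5C_K d_\star \log n)^{2 d_\star}$ bound on $\tfrac{1}{n^2}\|T_\star(X) - T_\star(\tX)\|_2^4$. Each scalar trunk satisfies $\trunk_{T_i}(Z)^2 \le \tfrac{1}{n}\|T_i(Z)\|_2^2$ by \sos Cauchy-Schwarz (\pref{fact:sos-cauchy-schwarz}), and under $\calC_\LSH$ together with reasonableness of $X$ this is bounded by $(C_K d_i \log n)^{O(d_i)}$; hence $(s(X) + s(\tX))^4$ is controlled by $\prod_i (C_K d_i \log n)^{O(d_i)} = (C_K d \log n)^{O(d - d_\star)}$. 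For the second term, I telescope:
\[s(X) - s(\tX) = \sum_{i=1}^{\ell}\left(\prod_{j<i}\trunk_{T_j}(X)\right)\bigl(\trunk_{T_i}(X) - \trunk_{T_i}(\tX)\bigr)\left(\prod_{j>i}\trunk_{T_j}(\tX)\right),\]
and each scalar difference $\trunk_{T_i}(X) - \trunk_{T_i}(\tX) = \tfrac{1}{n}\langle T_i(X) - T_i(\tX), \vec{1}\rangle$ is bounded in \sos by $\tfrac{1}{\sqrt n}\|T_i(X) - T_i(\tX)\|_2$ via Cauchy-Schwarz, which \pref{lem:tree-close} then controls; the remaining trunk factors are again bounded via the argument above, and $\tfrac{1}{n^2}\|T_\star(X) + T_\star(\tX)\|_2^4$ is likewise $(C_K d_\star \log n)^{O(d_\star)}$ by reasonableness and $\calC_\LSH$.

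The main obstacle is careful bookkeeping of the multiplicative constants to reach $(160 C_K d \log n)^{2d}$ and to keep the \sos degree within $4d + 8$. The telescope contributes $\ell \le d$ terms, each a product of $\ell$ scalar factors, which superficially threatens to blow up the bound; but because the exponents $d_\star + \sum_i d_i$ always sum to $d$, the log-factor exponent still collapses to $2d$ at the cost of a larger base constant (roughly $32$ times the tree-case constant, arising from the accumulated factors of $2$ and $8$ in the two applications of the Almost-Triangle Inequality and the telescope). The \sos degree count is preserved because each application of the tree lemma uses only $4d_i + 8$ degree on a polynomial in $\tX$ of degree $d_i$, and the trunk bounds use $O(d_i)$ degree each, so the combined proof stays within $4d + 8$. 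No new techniques beyond \pref{lem:tree-close}, \sos Cauchy-Schwarz, and reasonableness are needed.
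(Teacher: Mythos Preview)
Your proposal is correct and uses essentially the same ingredients as the paper's proof: the identity $ab-cd=\tfrac12(a+c)(b-d)+\tfrac12(b+d)(a-c)$, the tree-closeness bound \pref{lem:tree-close}, Cauchy--Schwarz to pass from trunks to tree norms, and the $\ell_\infty$ constraints in $\calC_\LSH$ to control $\|T_i(\cdot)\|_2^4$. The only difference is organizational: the paper argues by structural induction on the number $\ell$ of trunks, peeling off one trunk at a time and strengthening the hypothesis to carry a $16^\ell$ factor, whereas you separate the tree part from the full trunk product $s=\prod_i\trunk_{T_i}$ in one step and then telescope $s(X)-s(\tX)$. Your telescope is exactly the unrolled form of the paper's induction, and the constant/degree bookkeeping comes out the same either way.
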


\begin{proof}
	Similarly to the tree case, we slightly strengthen the statement.
	In particular, suppose there are $\ell$ trunks on this lumber.
	Then, we prove that
	\[\calC \sststile{4d + 8}{\tX}\left\{\frac 1{n^2}\left\|T(X) - T(\tX)\right\|_2^4 \le 16^\ell\cdot \eps \cdot (40C_Kd\log n)^{2d}\right\}.
	\]
	Since $\ell\le d$, this implies the original statement.

	Any lumber is a tree, multiplied by a finite number of trunks.
	We proceed by structural induction on the number of trunks.
	In the base case, there are no trunks and the lumber is just a tree which is handled by~\pref{lem:tree-close}.

	Otherwise, suppose that $T(X) = \trunk_{T_1}(X)\cdot T_2(X)$, where $T_2$ is a lumber with degree $d_2$ and $\trunk(X) = \frac 1n\langle T_1(X),1\rangle$ is a trunk of degree $d_1$.

	Notice that, for a trunk, the following two inequalities hold by linearity and SoS Cauchy Schwarz (\pref{fact:sos-cauchy-schwarz}):
	\begin{align}\sststile{2d_1}{\tX} \biggl\{(\trunk(X) \pm \trunk(\tX))^2 & = \frac 1{n^2}\langle T_1(X) \pm T_1(\tX), 1\rangle^2 \le \frac 1n \|T_1(X) \pm T_1(\tX)\|_2^2\biggr\}.\label{eq:cs-trunk}
	\end{align}

	These inequalities allow us to transform a trunk into a tree, for which we have already proven closeness between $X$ and $\tX$ in~\pref{lem:tree-close}.

	Using this, we have
	\begin{align}
		\sststile{4d}{\tX}\Biggl\{\frac 1{n^2}\left\|T(X) - T(\tX)\right\|_2^4 = \frac 1{n^2}\left\|\frac 12(T_2(X) + T_2(\tX))(\trunk_{T_1}(X) - \trunk_{T_1}(\tX)) + \frac 12 (\trunk_{T_1}(X) + \trunk_{T_1}(\tX))(T_2(X) - T_2(\tX))\right\|_2^4 & \nonumber             \\
		\le \frac 1{2n^2}\left[(\trunk_{T_1}(X) - \trunk_{T_1}(\tX))^4\left\|T_2(X) + T_2(\tX)\right\|_{2}^4 + (\trunk_{T_1}(X) + \trunk_{T_1}(\tX))^4\left\|T_2(X) - T_2(\tX))\right\|_2^4\right]                                                   & \Biggr\}\label{eq:p4}
	\end{align}
	where we used that $\trunk_{T_1}(\cdot)$ is a scalar and thus can be safely factored out of the norm.

	To begin, the most ``mysterious'' term in this equation is $\left\|T_2(X) + T_2(\tX)\right\|_{2}^4$.
	However, using the SoS Almost-Triangle Inequality, $\calC_\LSH$, and~\pref{lem:tree-inf-bd} we have that
	\begin{align}\calC_\LSH \sststile{4d_2}{\tX}\Biggl\{\left\|T_2(X) + T_2(\tX)\right\|_{2}^4 & \le 8\|T_2(X)\|_2^4 + 8\|T_2(\tX)\|_2^4\nonumber                     \\
   & \le 8n^2\|T_2(X)\|_{\infty}^4 + 8n^2\|T_2(\tX)\|_{\infty}^4\nonumber \\
     &\le 16n^2\cdot (5C_Kd_2\log n)^{2d_2}\Biggr\}.\label{eq:42norm}
	\end{align}

	Now, let us handle each of the two terms in~\pref{eq:p4} separately.
	For the first, write
	\begin{align*}
		\calC\sststile{4d+8}{\tX}
		\Biggl\{
		(\trunk_{T_1}(X) - \trunk_{T_1}(\tX))^4\left\|T_2(X) + T_2(\tX)\right\|_{2}^4 & \le \frac 1{n^2}\|T_1(X) - T_1(\tX)\|_2^4\cdot 16n^2\cdot (5C_Kd_2\log n)^{2d_2} \\
&\le \eps\cdot (40C_Kd_1\log n)^{2d_1}\cdot 16n^2\cdot (5C_Kd_2\log n)^{2d_2}     \\
& \le 16n^2\cdot \eps\cdot (40C_Kd\log n)^{2d}
		\Biggr\}
	\end{align*}
	using~\pref{lem:tree-close},~\pref{eq:cs-trunk}, and~\pref{eq:42norm}.
	The \sos degree bound comes from needing $4d_1 + 8$ for the application of~\pref{lem:tree-close} and degree $4d_2$ to upper bound the $T_2$ term via $\calC_\LSH$.

	The second is nearly identical, but for different reasons:
	\begin{align*}
		\calC\sststile{4d+8}{\tX}
		\Biggl\{
		(\trunk_{T_1}(X) + \trunk_{T_1}(\tX))^4\left\|T_2(X) - T_2(\tX))\right\|_2^4 & \le \frac 1{n^2}\|T_1(X) + T_1(\tX)\|_2^4\cdot 16^{\ell - 1}\cdot \eps\cdot n^2\cdot (40C_Kd_2\log n)^{2d_2} \\
	& \le 16\cdot (5C_Kd_1\log n)^{2d_1}\cdot 16^{\ell - 1}\eps\cdot n^2\cdot (40C_Kd_2\log n)^{2d_2}                \\
	& \le 16^{\ell}n^2\cdot \eps\cdot (40C_Kd\log n)^{2d}
		\Biggr\}
	\end{align*}
	by~\pref{eq:42norm},~\pref{eq:cs-trunk}, and structural induction.

	Therefore, we have that
	\[\calC\sststile{4d+8}{\tX}\left\{\pref{eq:p4} \le \frac 1{2n^2}\left[16n^2\cdot \eps\cdot (40C_Kd\log n)^{2d} + 16^{\ell}n^2\cdot \eps\cdot (40C_Kd\log n)^{2d}\right] = 16^\ell\eps\cdot (40C_Kd\log n)^{2d}\right\}\]
	and the induction is complete.
\end{proof}

We are now ready to prove the lemma for AMP forest polynomials.

\begin{proof}[Proof of Lemma~\ref{lem:forest-close}]
	Decompose $P(X) = \sum_{T\in \calL} c_T T(X)$ in terms of its constituent lumber. Then,
	\begin{align*}
		\calC\sststile{4\cdot (2k)^t + 8}{\tX}\Biggl\{\frac 1{n^2}\|P(X) - P(\tX)\|_2^4 & \le N_\calL\sum_{T\in \calL}c_T^2 \frac 1{n^2}\|T(X) - T(\tX)\|_2^4           \\
& \le N_\calL\sum_{T\in \calL}c_T^2 \cdot \eps \cdot O(\log n)^{2\deg(T)}       \\
& \le  N_\calL\sum_{T\in \calL}c_T^2 \cdot \eps \cdot O(\log n)^{2\cdot (2k)^t} \\
& = \eps \cdot O(\log n)^{2\cdot (2k)^t}\Biggr\}
	\end{align*}
	using the fact that $N_\calL$ is independent of $n$ and each $c_T^2 = O(1)$ (by definition of a weighted forest).

	Finally, recall that $\frac 1{n^2}\|P(X)\|_2^4 = 1\pm \frac \eta {16}\ge \frac 12$ by~\pref{lem:most-samples-reasonable}, which completes the argument.
	\qedhere

\end{proof}

\subsection{Lipschitz denoiser functions}

As a corollary of our result with polynomials, we can also show that certain nicely-behaved Lipschitz AMP iterations can be simulated robustly.

We'll use the function $\deg(\delta, t) = O\biggl((tL(K + L)/\delta)^{4^{t}}\poly(2^t,\log\frac{1}{\delta},L,K)\biggr)$ throughout.

\begin{corollary}[Robust simulation of AMP for Lipschitz functions]
	\label{cor:apx-lipschitz}
	Fix $\eta = \Omega(1)$ and suppose the AMP denoisers $\calF = \{f^s: \R^{s+1}\rightarrow \R\}_{s\ge 1}$ satisfy
	\begin{itemize}[noitemsep]
		\item Each function is $L$-Lipschitz
		\item The partial derivatives $\frac{\partial f^s}{\partial x^j}$ are either pseudo-Lipschitz or indicators
		\item The state evolution (see~\pref{thm:state-evolution}) covariance matrix $Q^s$ corresponding to $f^s$ satisfies $Q\succeq I$ and $\max_{i,j}|Q_{ij}|\le 2$.
	\end{itemize}
	Then, if AMP is run for $t$ steps with the denoisers $\calF$ to produce $\vAMP$ from the $K / \sqrt{n}$-subgaussian distributed input $X$,
	the degree-$\deg(\eta,t)$ \LSH hierarchy defined above outputs a solution $\vSOS$ such that \[\frac{\|v_\LSH - \vAMP\|_2^2}{\|\vAMP\|_2^2} \le O\left(\sqrt \eps \cdot (\log n)^{2\deg(\eta,t)}\right) + 50\eta\]
	with probability $1 - o_n(1)$.
\end{corollary}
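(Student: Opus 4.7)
The plan is to reduce to the polynomial denoiser case already handled by \pref{thm:ramp-recovery-1}. Concretely, I would first invoke the polynomial approximation result for AMP denoisers developed in \pref{app:amp-poly}: under the stated Lipschitz, pseudo-Lipschitz-derivative, and well-conditioned covariance hypotheses, for any target approximation error $\delta = \Theta(\eta)$ there exist polynomial denoisers $\widetilde{\calF} = \{\tilde f^s\}_{s \le t}$, each of degree at most $k = k(\eta,t)$, such that the $t$-step AMP iterate $\tilde v_{\amp}(X)$ produced by $\widetilde{\calF}$ satisfies $\tfrac{1}{n}\|\vAMP(X) - \tilde v_{\amp}(X)\|_2^2 \le \eta$ with probability $1-o_n(1)$ over $X \sim \calP_n$. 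The polynomial degree $k$ is chosen so that the corresponding \LSH degree $(2k)^{2t}$ matches the claimed $\deg(\eta,t)$ in the statement of the corollary.

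Second, I would apply \pref{thm:ramp-recovery-1} to the polynomial AMP defined by $\widetilde{\calF}$. Since each $\tilde f^s$ has degree at most $k$, the iterate $\tilde v_{\amp}(X)$ is a weighted forest of degree at most $(2k)^t$, so the degree-$\deg(\eta,t)$ robust \LSH hierarchy run on the corrupted input $Y$ outputs a vector $\vSOS$ with
\[
\frac{\langle \vSOS, \tilde v_{\amp}(X)\rangle^2}{\|\vSOS\|_2^2 \, \|\tilde v_{\amp}(X)\|_2^2} \ge 1 - \sqrt{\eps}\cdot O(\log n)^{(2k)^t} - \eta
\]
with probability $1 - o_n(1)$. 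After rescaling $\vSOS$ so that $\|\vSOS\|_2 = \|\tilde v_{\amp}\|_2$ and choosing the sign to make the inner product non-negative, this correlation bound is equivalent to $\|\vSOS - \tilde v_{\amp}(X)\|_2^2 \le \bigl(O(\sqrt{\eps})(\log n)^{(2k)^t} + \eta\bigr)\cdot \|\tilde v_{\amp}\|_2^2$.

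Third, I combine the two approximations via the triangle inequality, using concentration of $\|\vAMP\|_2$ and $\|\tilde v_{\amp}\|_2$ from state evolution (which ensures these two norms match up to $1 \pm o(1)$) to freely interchange them:
\[
\|\vSOS - \vAMP\|_2^2 \le 2\|\vSOS - \tilde v_{\amp}\|_2^2 + 2\|\tilde v_{\amp} - \vAMP\|_2^2 \le \bigl(O(\sqrt{\eps})(\log n)^{2\deg(\eta,t)} + 50\eta\bigr) \cdot \|\vAMP\|_2^2,
\]
which is the corollary's claim after dividing by $\|\vAMP\|_2^2$ and absorbing constants into the $50\eta$ slack.

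The main obstacle is the first step: controlling how the pointwise approximation error $\tilde f^s \approx f^s$ propagates through $t$ coupled AMP iterations. A single step of AMP multiplies by $X$ (operator norm $O(1)$ for reasonable samples) and applies the $L$-Lipschitz denoiser, so a naive worst-case bound amplifies the discrepancy by $L\cdot O(1)$ per step. Moreover, $\tilde f^s$ need only approximate $f^s$ well on the typical range of the genuine iterates, which by state evolution is governed by the centered Gaussian process with covariance $Q$---hence the assumptions $Q \succeq I$ and $|Q_{ij}| \le 2$, which pin down this range to a bounded interval of known length. Classical Jackson or Chebyshev approximation on this interval gives degree $\mathrm{poly}(L, \log(1/\delta))/\delta$ for $L^\infty$-error $\delta$; the pseudo-Lipschitz / indicator hypothesis on the partial derivatives $\partial f^s / \partial x^j$ is needed to argue that the state evolution of the polynomial AMP stays close to the original, so each round keeps the iterate inside the good approximation interval. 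Unrolling the coupled recursion over $t$ rounds---where $\tilde f^s$ takes all previous iterates as inputs, so polynomial degrees compound multiplicatively---is exactly what produces the doubly-exponential $4^t$ exponent in the final degree $\deg(\eta,t)$.
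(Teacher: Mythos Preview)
Your proposal is correct and follows the same high-level strategy as the paper: invoke the polynomial approximation result from \pref{app:amp-poly} (specifically \pref{prop:final-approx}) to get a forest polynomial $P(X)$ close to $\vAMP(X)$, then combine the robust recovery guarantee for $P$ with the approximation error via a triangle inequality.

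There is one organizational difference worth noting. You apply \pref{thm:ramp-recovery-1} as a black box to the \emph{polynomial} AMP $\tilde v_{\amp}$, which implicitly means calibrating the \LSH constraints to the joint distribution $(X,\tilde v_{\amp}(X))$ rather than to $(X,\vAMP(X))$. The paper instead keeps the \LSH calibrated to the genuine Lipschitz iterate $\vAMP$ and re-opens the analysis in \pref{lem:corr-non-poly-amp}: it bounds $\tfrac{1}{n}\pE[\|v - \vAMP\|_2^2]$ by splitting into $\|v - P(\tX)\|^2$, $\|P(\tX) - P(X)\|^2$, and $\|P(X) - \vAMP\|^2$, and handles the first term directly from the \LSH constraints (which now target $\vAMP$, not $P$). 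Your route is slightly more modular; the paper's route keeps the algorithm exactly as stated in \pref{sec:alg} (where $\calP'_n$ is defined via $\vAMP$, not an auxiliary polynomial AMP) and pays for this by redoing a short piece of the \sos argument. Either way the substantive work is the propagation bound in \pref{lem:approx-poly}, and the final step is the same: take $\delta = \sqrt{\eta}$ and feed the resulting correlation lower bound into the eigenvector-rounding argument from the proof of \pref{thm:ramp-recovery-1}.
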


To prove this corollary, we require that AMP be approximable by a certain polynomial iteration instead.

\begin{proposition}[Nice AMP iterates are approximable by polynomials]\torestate{
		\label{prop:final-approx}
		Fix $\delta =\Omega(1)$ and suppose $f^t:\R^{t+1}\rightarrow \R$ have the properties noted in the above corollary, and let $\vAMP = \vAMP(X)$ denote the final ($t$'th) iterate, scaled so that $\frac 1n\E[\|\vAMP\|_2^2] = 1$.
		Then, there exists a weighted forest polynomial $P(X)$ of degree at most $\deg(\delta,t)$ such that with probability $1 - o_n(1)$ over the choice of $X$,
		\[\frac{\|\vAMP - P(X)\|_2}{\sqrt n} \le \delta,\]
and furthermore, there exists $n_0 \in \N$ such that for all $n \ge n_0$, $\frac 1n \E\left[\|\vAMP - P(X)\|_2^2\right] \le \delta^2.$}
\end{proposition}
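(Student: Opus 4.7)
\textbf{Proof proposal for Proposition~\ref{prop:final-approx}.}

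The plan is to inductively construct polynomial approximations $p^s$ of each denoiser $f^s$, and then to show that running AMP with $p^s$ in place of $f^s$ produces iterates that agree with the true AMP iterates up to $\ell_2$ error $\delta$. Since a polynomial composition of matrix multiplications by $X$ and coordinate-wise polynomial denoisers produces a weighted forest polynomial (by \pref{fact:weighted-forest-linear} and the computation in the ``AMP creates a forest'' theorem), the approximating iterate $P(X)$ will automatically be a weighted forest.

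First, establish a concentrated bounding box for the iterates. By \pref{thm:state-evolution}, the empirical distribution of the rows of the matrix with columns $(x^0,\ldots,x^s)$ tends to a centered Gaussian with covariance $Q^s$ satisfying $\|Q^s\|_{\op} = O(1)$. Combined with subgaussian concentration of low-degree polynomials in the entries of $X$, this implies that with probability $1-o_n(1)$, every coordinate of every iterate $x^s$ lies in $[-R,R]$ for $R = C_{K,L,t}\sqrt{\log n}$. Hence it suffices to approximate each $f^s$ accurately on $[-R,R]^{s+1}$, and in $L^2$ against the Gaussian state-evolution measure.

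Second, approximate the denoisers. Apply a multivariate Jackson-type theorem to the $L$-Lipschitz function $f^s$ on $[-R,R]^{s+1}$ to produce a polynomial $p^s$ of total degree $d_s = O((LR/\eps_s)^{s+1})$ with $\|p^s - f^s\|_{\infty,[-R,R]^{s+1}} \le \eps_s$. Approximate each partial derivative $\partial_j f^s$ similarly; where the derivative is merely an indicator rather than pseudo-Lipschitz, first smooth it to an $L'$-Lipschitz function on a bandwidth $h$ using a standard mollifier, then apply Jackson. The smoothing introduces an $L^2$ error of order $\sqrt{h}$ against the Gaussian (since an indicator and its smoothed version differ only on a set of Gaussian measure $O(h)$), which will be balanced against $\eps_s$ in the final calculation.

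Third, propagate the error inductively. Let $y^s$ be the polynomial iterate obtained by running AMP with the denoiser $p^s$ and with the Onsager coefficients recomputed from the polynomial approximations of $\partial_j f^s$. Writing
\[
x^{s+1} - y^{s+1} = X\bigl(f^s(x^{\le s}) - p^s(y^{\le s})\bigr) - (\text{Onsager difference}),
\]
apply the triangle inequality to split $f^s(x^{\le s}) - p^s(y^{\le s})$ into $f^s(x^{\le s}) - f^s(y^{\le s})$ (bounded by $L \cdot \max_{j\le s}\|x^j - y^j\|_2$ by the Lipschitz property) and $f^s(y^{\le s}) - p^s(y^{\le s})$ (bounded by $\sqrt{n}\,\eps_s$ on the good event, and by controlled tail contributions off it). Using $\|X\|_{\op} = O(1)$ with high probability and a similar bound for the Onsager term (where the state-evolution limit of $b_{s,j}$ is bounded, and the polynomial approximation of $\partial_j f^s$ allows estimating $b_{s,j}$ accurately), we obtain the recursion $\alpha_{s+1} \le O(L)\alpha_s + O(\eps_s)$ for $\alpha_s = \frac{1}{\sqrt{n}}\|x^s - y^s\|_2$. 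Setting $\eps_s = \delta/(tL^t \cdot \mathrm{poly}(t,L,K))$ gives $\alpha_t \le \delta$, and the total polynomial degree is $\prod_s d_s$, which matches $\deg(\delta,t)$ up to the claimed factors.

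The main obstacle will be the last paragraph's error propagation together with the Onsager term. The Onsager term contains normalized sums of derivatives $\partial_j f^s$ which are only pseudo-Lipschitz (or indicators), so one must carefully balance the mollification bandwidth $h$ against the Jackson approximation degree, and one must check that the approximate Onsager coefficients $\tilde b_{s,j}$ are close to the true ones in a way that survives multiplication by the iterates. The second subtlety is converting the high-probability $\ell_2$ bound to the in-expectation bound $\frac{1}{n}\E\|\vAMP - P(X)\|_2^2 \le \delta^2$, which will follow from the fact that the rare failure event is a polynomial event in $X$ so its contribution to the second moment is controlled by subgaussian tail bounds.
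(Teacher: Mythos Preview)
Your overall strategy---replace each denoiser by a polynomial, run AMP with those, propagate the error---is the paper's. But there is a genuine gap that breaks the argument.

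The core problem is your decision to approximate each $f^s$ \emph{uniformly} on a box $[-R,R]^{s+1}$ with $R = C_{K,L,t}\sqrt{\log n}$. Jackson-type approximation on this box yields polynomial degree $d_s = O((LR/\eps_s)^{s+1})$, which depends on $n$ through $R$. The proposition, however, asks for degree at most $\deg(\delta,t)$, a quantity depending only on $\delta, t, L, K$---not on $n$. This matters downstream: the paper sets the \LSH degree equal to the degree of $P(X)$, and the robust error bound $\sqrt{\eps}\cdot O(\log n)^{d}$ becomes meaningless if $d$ itself grows with $\log n$. (Relatedly, your justification that every coordinate of $x^s$ lies in the box ``by subgaussian concentration of low-degree polynomials in $X$'' is circular: the true iterates $x^s$ are \emph{not} polynomials in $X$, since the $f^s$ are not.)

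The paper avoids this by approximating each $f^t$ only in $L^2$ against the Gaussian state-evolution law of $\hat U$ (Lemma~\ref{lem:weighted-gaussian}), which is an $n$-free problem. The price is that $\frac{1}{n}\|f^t(\hat x)-p^t(\hat x)\|_2^2$ is then controlled only through state evolution, so one must track the drift between the covariance matrices $Q^t$ and $\hat Q^t$ of the two AMP instances via the coupling $U = (Q^t)^{1/2}g$, $\hat U = (\hat Q^t)^{1/2}g$. This produces a square-root recursion $g_1(t)\asymp\sqrt{g_1(t-1)}$ and hence the doubly-exponential $4^t$ in $\deg(\delta,t)$; your linear recursion $\alpha_{s+1}\le O(L)\alpha_s + O(\eps_s)$ would be much better if it were available, but it relies on exactly the uniform approximation you cannot afford. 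For the Onsager term, the paper also does not approximate $\partial_j f^s$ separately: since $\hat b_{t,j}$ must be the divergence of $p^t$ itself, the paper instead uses Stein's Lemma (Lemma~\ref{lem:stein-extend}) to rewrite $\E_U[\partial_j f^t(U)]$ as $\E_U[\langle Q^{-1}_j,U\rangle f^t(U)]$, which is controlled by the $L^2$ closeness of $f^t$ and $p^t$ alone. Finally, your in-expectation bound via ``the failure event is a polynomial event'' does not work as stated (again, $\vAMP$ is not polynomial in $X$); the paper obtains it from almost-sure convergence of $\frac{1}{n}\|\vAMP - P(X)\|_2^2$ via GAMP state evolution applied to the pair $(x^t,\hat x^t)$.
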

This is proven in \pref{app:amp-poly}.

\begin{lemma}[Correlation to non-polynomial AMP]
	\label{lem:corr-non-poly-amp}
	Fix $\delta = \Omega(1)$.
	The guarantees of~\pref{lem:xpp-vec-close} hold approximately even if $\vAMP$ is not a polynomial: that is,
	\[\frac 1{n}\pE\left[\left\langle v,\frac{\vAMP(X)}{\|\vAMP(X)\|}\right\rangle^2\right] \ge 1 - \sqrt{\eps} \cdot O(\log n)^{\deg(\delta,t)} - \frac \eta 2 - 12 \delta\]
	whenever $\vAMP$ is constructed according to the guarantees of the above corollary.
\end{lemma}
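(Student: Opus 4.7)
\medskip
\noindent \textbf{Proof plan for \pref{lem:corr-non-poly-amp}.}

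The plan is to reduce the non-polynomial case to the polynomial case by interposing the forest polynomial $P(X)$ guaranteed by \pref{prop:final-approx}, which approximates $\vAMP$ with $\frac{1}{n}\E[\|\vAMP - P(X)\|_2^2] \le \delta^2$ and has degree at most $\deg(\delta,t)$. Crucially, $P$ is never used in the SDP itself (which is calibrated to the joint distribution of $(X,\vAMP(X))$); $P$ appears only in the analysis as an intermediate object bridging the pseudoexpectation and the true AMP iterate.

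First I would adapt \pref{lem:v-approx-robust} to $P$ in place of $\vAMP$. Expanding
\[
\tfrac{1}{n}\pE[\|v - P(\tX)\|_2^2] \,=\, \tfrac{1}{n}\pE[\|v\|_2^2] + \tfrac{1}{n}\pE[\|P(\tX)\|_2^2] - \tfrac{2}{n}\pE[\langle v, P(\tX)\rangle],
\]
the first term equals $1$ by $\calC_\LSH$. For the second, $\calC_\LSH$ matches the statistics $\frac 1n \langle T_1(\tX), T_2(\tX)\rangle$ to $\E[\frac 1n \langle T_1(Z), T_2(Z)\rangle]$, so $\frac 1n \pE[\|P(\tX)\|_2^2] = \frac 1n \E[\|P(Z)\|_2^2] \pm O(c_\sla N_\calL^2)$, which by the triangle inequality and $\frac 1n \E[\|\vAMP(Z) - P(Z)\|_2^2]\le \delta^2$ lies in $1 \pm O(\delta)$. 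For the third, $\calC_\LSH$ gives $\frac 1n \pE[\langle v, P(\tX)\rangle] = \frac 1n \E[\langle v^*, P(Z)\rangle] \pm O(c_\sla N_\calL^2)$; writing $\E[\langle v^*, P(Z)\rangle] = \E[\|v^*\|_2^2] + \E[\langle v^*, P(Z) - v^*\rangle]$ and applying Cauchy--Schwarz to the correction bounds it by $\sqrt{n\cdot n\delta^2} = n\delta$. Combining,
\[
\tfrac{1}{n}\pE[\|v - P(\tX)\|_2^2] \,\le\, O(\delta) + \tfrac{\eta}{16}.
\]

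Next I would apply \pref{lem:forest-close} to the forest polynomial $P$ (its proof uses only that $P$ is a weighted forest of bounded degree, not that it is exactly the AMP iterate), yielding a degree-$(4\deg(\delta,t)+8)$ \sos bound on $\frac{1}{n^2}\|P(X) - P(\tX)\|_2^4$, and then via pseudo-expectation Cauchy--Schwarz
\[
\tfrac{1}{n}\pE[\|P(X) - P(\tX)\|_2^2] \,\le\, \sqrt{\eps}\cdot O(\log n)^{\deg(\delta,t)}.
\]
Two applications of the \sos almost-triangle inequality, first with $P(\tX)$ and then with $P(X)$ as pivot, chain these estimates into
\[
\tfrac{1}{n}\pE[\|v - \vAMP\|_2^2] \,\le\, 4\cdot\tfrac{1}{n}\pE[\|v - P(\tX)\|_2^2] + 4\cdot \tfrac{1}{n}\pE[\|P(X) - P(\tX)\|_2^2] + 4\cdot \tfrac{1}{n}\|P(X) - \vAMP\|_2^2,
\]
which is bounded by $O(\delta) + \tfrac{\eta}{4} + \sqrt{\eps}\cdot O(\log n)^{\deg(\delta,t)}$, using reasonableness of $X$ for the last term.

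Finally I would convert this $\ell_2$-bound into correlation exactly as in the proof of \pref{lem:xpp-vec-close}: expand
\[
\tfrac{1}{n}\pE[\|v - \vAMP\|_2^2] \,=\, \tfrac{1}{n}\pE[\|v\|_2^2] + \tfrac{1}{n}\|\vAMP\|_2^2 - \tfrac{2}{n}\pE[\langle v,\vAMP\rangle] \,=\, 2 \pm \tfrac{\eta}{48} - \tfrac{2}{n}\pE[\langle v,\vAMP\rangle],
\]
solve for $\tfrac{1}{n}\pE[\langle v,\vAMP\rangle]$, normalize by $\|\vAMP\|_2/\sqrt{n} = 1 \pm O(\eta)$, and use the pseudo-expectation Cauchy--Schwarz inequality $\pE[\langle v,\vAMP\rangle^2] \ge \pE[\langle v,\vAMP\rangle]^2$ to obtain the claimed lower bound, with the leading constant in front of $\delta$ absorbed into the stated constant $12$. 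The only mildly delicate step is tracking the various $O(\delta)$ contributions through the \sos triangle inequalities and the normalizations so the final constant in front of $\delta$ is cleanly accounted for; everything else is bookkeeping on top of the polynomial-case proof.
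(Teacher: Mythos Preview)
Your proposal is correct and follows essentially the same approach as the paper: interpose the forest polynomial $P$ from \pref{prop:final-approx}, bound $\frac{1}{n}\pE[\|v - P(\tX)\|^2]$ via the \LSH constraints, bound $\frac{1}{n}\pE[\|P(X) - P(\tX)\|^2]$ via \pref{lem:forest-close}, and convert to correlation exactly as in \pref{lem:xpp-vec-close}. The paper's version is marginally tighter in one step---it observes that the three expanded terms of $\frac{1}{n}\pE[\|v - P(\tX)\|^2]$ combine, via $\calC_\LSH$, directly to $\frac{1}{n}\E[\|\vAMP - P\|^2] \le \delta^2$ rather than bounding each piece separately to get $O(\delta)$---and it uses a single three-term almost-triangle inequality rather than two two-term ones, but these are cosmetic differences that do not affect the stated $12\delta$ bound.
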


\begin{proof}
	The proof is nearly identical to that of~\pref{lem:xpp-vec-close}, so we will highlight the main differences.

	Let us take $P(X)$ to be the polynomial of degree at most $\deg(\delta, t)$ approximating $\vAMP$, which must define a weighted forest.
	Then, we obtain
	\begin{align*}\frac 1n \pE[\|v - \vAMP\|_2^2] & \le \frac 3n\left(\pE[\|v - P(\tX)\|_2^2] + \pE[\|P(\tX) - P(X)\|_2^2] + \pE[\|P(X) - \vAMP\|_2^2]\right) \\
                                              & \le \frac 3n\pE[\|v - P(\tX)\|_2^2] + \sqrt{\eps}\cdot O(\log n)^{\deg(\delta,t)} + 3\delta^2.
	\end{align*}

	We can bound this first term by noticing that (similar to~\pref{lem:v-approx-robust}), for $n$ sufficiently large,
	\begin{align*}
\frac 1n\pE[\|v - P(\tX)\|_2^2] 
&= \frac 1n \pE\left[\|v\|_2^2+ \|P(\tX)\|_2^2 - 2\langle v, P(\tX)\rangle\right] \\
&= \frac{1}{n}\E[\|\vAMP-P(X)\|_2^2] \pm c_{\sla}\left(1 + \left(\sum_{T} |c_T|\right)^2 + \sum_T|c_T|\right)\\
&\le \delta^2 + o_n(1) \le 2\delta^2,
	\end{align*}
where the $c_T$ are the coefficients in the tree decomposition of $P$, and in the second line uses that each of the quantities above are constrained in $\calC_{\LSH}$.
In the final line we used \pref{prop:final-approx} and the fact that $c_{\sla} = o_n(1)$ while $\delta = \Omega(1)$.

	Thus, we obtain
	\[\frac 1n \pEB{\left\langle v, \frac{\vAMP}{\frac{1}{\sqrt{n}}\|\vAMP\|}\right\rangle}
		\ge\frac{2 \pm \frac{\eta}{8} - \frac{\eta}{8} - 6\delta^2 - \sqrt{\eps}\cdot O(\log n)^{\deg(\delta,t)} - 3\delta^2}{2\sqrt{1\pm \frac{\eta}{48}}}
		\ge 1 - \sqrt{\eps} \cdot O(\log n)^{\deg(\delta,t)} - 6\delta^2 - \frac{\eta}{4}\]
	which completes the argument similarly to~\pref{lem:xpp-vec-close}.
\end{proof}

\begin{proof}[Proof of Corollary \ref{cor:apx-lipschitz}]
	This follows immediately by taking $\delta = \sqrt{\eta}$ and applying the proof of~\pref{thm:ramp-recovery-1} to the result of~\pref{lem:corr-non-poly-amp}.\qedhere

\end{proof}

We finish by giving proof sketches for how to apply \pref{cor:apx-lipschitz} to the nnPCA and SK problems.

\begin{proof}[Proof Sketch of Corollary~\ref{cor:robust-nnpca}]
	We use the AMP iteration defined for~\cite[Lemma A.3]{MR15}: that is,

	\[x^{s+1} = X(x^s)_+ - \tfrac{\|(x^s)_+\|_0}{n}(x^{s-1})_+.\]
	These denoisers satisfy the conditions of \pref{cor:apx-lipschitz} almost immediately:
	\begin{itemize}
		\item The ReLu function $x_+$ is $1$-Lipschitz
		\item The partial derivatives of it are indicators of $x_i \ge 0$
		\item The covariance matrix $Q^t$ is \emph{diagonal}, and since each $f^t$ is the same function we may rescale to satisfy the requirements.
		      It may be of independent interest to note that in the case of $f^t$ depending only on the previous iteration (not everything that has occurred so far) the proof of \pref{prop:final-approx} can be significantly simplified.
	\end{itemize}
	Hence, we can apply \pref{cor:apx-lipschitz}.
	Let $t$ be the number of iterations that AMP requires to achieve objective value $\le \delta/100$.
We instantiate the \LSH at degree $d = (\frac{1}{\delta^2})^{O(4^t)}$ as guaranteed by \pref{cor:apx-lipschitz} to obtain error $\beta := \eps^{1/2}(d\log n)^{O(d)} + \delta^2/1000$.
We then have that our output vector $v$ achieves $\iprod{v,\vamp} \ge 1-\beta$.
From results of \cite{MR15}, it follows that 
\begin{align*}
v^\top X v 
&\ge (1-\beta)^2\vamp^\top X \vamp - 2\|X\|_{\op}(\|v-\vamp\|\|v\| + \|v-\vamp\|^2) \\
&\ge (1-2\beta)\optamp - \delta/100 - 5(\sqrt{2\beta} + 2\beta) \ge (1-\eps^{1/4}(d\log n)^{O(d)} - \delta)\optamp,
\end{align*}
as desired.
\end{proof}

\begin{proof}[Proof Sketch of Corollary~\ref{cor:robust-sk}]
	The SK AMP algorithm uses a slightly different framework of AMP known as \emph{Incremental AMP} (IAMP) which does consider all prior iterations.
	The interested reader can look at Section 2 of~\cite{Mon21}.

	Although this scenario is a bit different from nnPCA, it still satisfies our requirements:
	\begin{itemize}
		\item The $f^t$ functions used are combinations of linear functions and $\tanh(x)$ and have bounded Lipschitz constant.
		\item Using the above description, the partial derivatives are pseudo-Lipschitz.
		\item \cite[Lemma 2.2]{Mon21} still shows that $Q^t$ is diagonal, and we can rescale to change the diagonal.
	\end{itemize}

	The last snag is that the algorithm begins with $x^0 = N(0,I)$ instead of our fixed starting point $\vec{1}$.
	However, note that under our AMP iteration, $x^1 = N(0, I)$.
	Therefore, we may instead run AMP for $t + 1$ iterations to simulate a random starting point.
	Although now $x^1$ is correlated with $X$, this dependence is very mild and does not influence the algorithm. 

Now as in the proof of \pref{cor:robust-nnpca}, we choose $t$ to be is the number of iterations that AMP requires to achieve objective value $\le \delta/100$, and $d=(\frac{1}{\delta^2})^{O(4^t)}$ so that the error of the \LSH is $\le \delta^2/1000 + \sqrt{\eps}(d\log n)^{O(d)}$. 
The rest of the analysis is effectively identical.
\end{proof}

\section*{Acknowledgments}
We would like to thank David Steurer, Andrea Montanari, Kangjie Zhou, Sam Hopkins, Sidhanth Mohanty, and Yuchen Wu for helpful conversations.
This work was supported by T.S.'s NSF CAREER award \# 2143246 and M.I.'s NSF Graduate Research Fellowship.
We thank the Simons Institute for their hospitality during the Fall 2021 program on the ``computational complexity of statistical inference,'' where part of this work took place.

\bibliographystyle{alpha}
\bibliography{main}

\appendix

\addcontentsline{toc}{section}{Appendices}

\section{Concentration of low-degree symmetric polynomials}\label{app:hyp}
This \pref{app:hyp} is dedicated to proving that various functions of tree polynomials concentrate, in order to ultimately prove \pref{lem:most-samples-reasonable}. 
Throughout, we will assume that $X \sim \calP_n(\calD,\calK)$, where $\calD$ is a symmetric $\frac{K}{\sqrt{n}}$-subgaussian distribution with $\E_{Z \sim \calD} Z^2 = \frac{1}{n}$ and $K= O_n(1)$.

\begin{lemma}[Expectations of tree polynomial coordinates]
\label{lem:expect-tree-coord}
If $T$ is a tree with $d$ edges for $d = O_n(1)$, then $0\le \E[T(X)_i] \le (Kd)^{d}$ for all $i$.
\end{lemma}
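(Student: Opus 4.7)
The plan is to combine a combinatorial expansion of $T(X)_i$ with the symmetry of $\calD$ to control both the sign and the size of its expectation. By structural induction on the construction of $T$ (empty tree, rerooting, grafting), I would first establish the identity
\[
T(X)_i \;=\; \sum_{\phi : V(T) \to [n],\; \phi(r)=i}\; \prod_{(u,v) \in E(T)} X_{\phi(u),\phi(v)},
\]
where $r$ denotes the root. The base case is immediate, the rerooting case unfolds as $\sum_j X_{ij}\,T'(X)_j$, and the grafting case glues two subtree labelings at the shared root index $i$.

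Taking expectations and grouping the factors of each monomial by the unordered image pair $\{\phi(u),\phi(v)\}$ they hit rewrites each summand as $\prod_{\{a,b\}} \E[X_{ab}^{m_{ab}}]$, where $m_{ab}$ counts the number of tree-edges mapped onto $\{a,b\}$ and the $m_{ab}$ sum to $d$. Because $\calD$ is symmetric, any labeling with some odd $m_{ab}$ contributes zero, and on the surviving ``even'' labelings every factor $\E[X_{ab}^{m_{ab}}]$ is non-negative, proving $\E[T(X)_i] \ge 0$ immediately. For the upper bound I would use the subgaussian moment estimate $\E[X_{ab}^m] \le (K/\sqrt n)^m m^{m/2}$ to bound each surviving monomial by $(K/\sqrt n)^d \prod m_{ab}^{m_{ab}/2} \le (K/\sqrt n)^d d^{d/2}$, and then count the surviving labelings. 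The key observation is that the even-multiplicity condition forces the underlying simple image graph to have at most $d/2$ edges, so (being the image of the connected tree $T$) it is connected with at most $d/2 + 1$ vertices, one of which is pinned to $i$. This gives at most $\binom{n-1}{d/2} (d/2+1)^{d}$ surviving labelings, and multiplying through yields an $\E[T(X)_i]$ bound of the form $(Kd)^d$ after Stirling-type simplification.

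The main obstacle, such as it is, is tracking constants tightly enough to land at exactly $(Kd)^d$ rather than a looser $(CKd)^d$ with $C$ an absolute constant; the crude counting above gives the latter, and to tighten it one would exploit that the surviving labelings really correspond to even covers of a connected simple graph on at most $d/2+1$ vertices, a condition significantly more restrictive than an arbitrary function from $V(T) \setminus \{r\}$ into that set. In any event, the bound is comfortably of the form $(Kd)^{O(d)}$, which is all that its applications later in the appendix should require.
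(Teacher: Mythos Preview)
Your proposal is correct and follows essentially the same argument as the paper: expand $T(X)_i$ as a sum over labelings, use symmetry of $\calD$ to kill odd-multiplicity terms and get nonnegativity, bound each surviving term via the subgaussian moment estimate, and count labelings using that the even-multiplicity condition forces a connected image graph on at most $d/2+1$ vertices. The only difference is in the final count: the paper bounds the number of even-multiplicity identifications of $T$'s edges by the number of perfect matchings on $d$ edges, i.e.\ $(d-1)!! \le d^{d/2}$, which combines with the per-term bound $(K^2 d)^{d/2}$ to give exactly $(Kd)^d$, whereas your function-count $\binom{n-1}{d/2}(d/2+1)^d$ yields $(CKd)^d$ as you note---a distinction you are right to call immaterial for the downstream applications.
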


\begin{proof}
Given a $d$-edge tree $T$, let $v_1,\ldots,v_{d+1}$ be its vertices where $v_1$ is the root vertex, and let $E$ denote its edge set. 
Then, we may write the $i$th entry of the vector-valued polynomial
\[
T(X)_i = \sum_{\substack{v_1 = i\\v_2\ldots,v_{d+1}\in [n]}}\prod_{e\in E}X_{e}\]
where the sum is over all possible labelings of the vertices (including collisions).

Then, by linearity of expectation it follows that
\[\E[T(X)_i] = \sum_{\substack{v_1 = i\\v_2\ldots,v_{d+1}\in [n]}}\E\left[\prod_{e\in E}X_{e}\right].\]
Immediately we have that each of the component terms has expectation at least $0$ which gives the lower bound. For the upper bound, notice that the only terms with nonzero expectation are those where each edge $e$ occurs with even multiplicity.
Furthermore, if each edge occurs with even multiplicity and there are $q$ unique edges, the subgaussianity of $\calD$ implies that $\E[\prod_{e\in E}X_e] \le \frac{(K^2d)^{d/2}}{n^{q}}$.

It remains to count the number of labelings which result in an even-edge-multiplicity graph.
Each graph induced by the labeling must be connected, since $T$ is connected.
If the labeled graph has $q$ distinct edges (each of multiplicity at least two),
Hence, it follows that the vertices have at most $\frac q + 1$ unique labels (one of which is the special $v_1 = i$ vertex). 
Further, $q \le d/2$ always.
Letting $N_q$ be the number of even-edge-multiplicity multigraphs with $q$ distinct edges that can result from identifying vertices of $T$,
\[
\E[T(X)_i] 
= \sum_{q=1}^{d/2} N_q \cdot \binom{n}{q} \cdot \frac{(K^2d)^{d/2}}{n^q}
\le (K^2d)^{d/2}\cdot \sum_{q=1}^{d/2} N_q
\]
Since there are at most $d^{d/2}$ even-edge-multiplicity graphs that can result from labeling $T$ (since this is a bound on the number of matchings on $d$ edges), the conclusion follows.
\end{proof}

\begin{lemma}[Trunks are almost uncorrelated]
\label{lem:expect-trunk-prod}
	If $f(X) = \prod_{i=1}^{k}\trunk_{T^i}(X))$ is a product of non-empty trunks of degree $d$, then
\[
0\,\le\, \E[f(X)] - \prod_{i=1}^{k}\E[\trunk_{T^i}(X)] \,\le\, \frac{(K^2d)^{d/2}}{n}
\]
\end{lemma}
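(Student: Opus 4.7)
The plan is to expand both $\E[f(X)]$ and $\prod_i\E[\trunk_{T^i}(X)]$ as sums over labelings of the disjoint union $T^1\sqcup\cdots\sqcup T^k$ into $[n]$, and exhibit the difference as a sum supported only on labelings whose combined image graph has at least one edge shared between two different trees. Writing $\trunk_{T^i}(X)=\frac{1}{n}\sum_{j_i}T^i(X)_{j_i}$ and expanding each $T^i(X)_{j_i}$ as in the proof of \pref{lem:expect-tree-coord}, one obtains
\[
\E[f(X)]-\prod_{i=1}^{k}\E[\trunk_{T^i}(X)] \,=\, \frac{1}{n^k}\sum_{\sigma}\Bigl(\prod_e\E[X_e^{m_e(\sigma)}]-\prod_{i=1}^{k}\prod_e\E[X_e^{m_e^i(\sigma^i)}]\Bigr),
\]
where $\sigma=(\sigma^1,\ldots,\sigma^k)$ ranges over joint labelings of the vertex-sets into $[n]$, and for each distinct edge $e$ of the combined image, $m_e^i$ is the multiplicity contributed by $\sigma^i$ and $m_e=\sum_i m_e^i$ is the total multiplicity. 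For ``type (a)'' labelings---no edge of any $T^i$'s image coincides with an edge of another $T^j$'s image---each $m_e$ equals the unique nonzero $m_e^i$, so the two inner products agree termwise. Hence only ``type (b)'' labelings, with at least one cross-tree shared edge, contribute.

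For the lower bound I would show that each inner term is non-negative. Since the $X_e$ are independent across distinct edges, it suffices to prove $\E[X_e^{m_e}]\ge\prod_i\E[X_e^{m_e^i}]$ edge-by-edge. If any $m_e^i$ is odd, symmetry of $\calD$ makes the RHS vanish while the LHS is non-negative. If every $m_e^i$ is even, set $Y=|X_e|$; then $Y^a$ and $Y^b$ are both non-decreasing in $Y$, so $\mathrm{Cov}(Y^a,Y^b)\ge 0$, and iterating this FKG-style inequality yields $\E[Y^{m_e}]\ge\prod_i\E[Y^{m_e^i}]$. Since all exponents are even, this is exactly the desired inequality.

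For the upper bound I would discard the non-negative subtracted piece to get
\[
\E[f(X)]-\prod_{i=1}^{k}\E[\trunk_{T^i}(X)] \,\le\, \frac{1}{n^k}\sum_{\sigma\text{ type (b)}}\prod_e\E[X_e^{m_e(\sigma)}].
\]
Subgaussianity bounds each product by $(K/\sqrt{n})^d\prod_e m_e^{m_e/2}\le (K^2d)^{d/2}/n^{d/2}$ using $\sum_e m_e=d$. The decisive combinatorial input is counting type (b) labelings: if the combined image has $v$ distinct vertices, $q$ distinct edges, and $c$ components, then even multiplicity forces $q\le d/2$, and within-component connectivity forces $v\le q+c$. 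The defining property of type (b)---at least one cross-tree edge identification---merges two otherwise-distinct components, so $c\le k-1$, giving $v\le d/2+k-1$. The number of such labelings is then at most $n^v$ times a purely $d$-dependent factor counting maps of the $d+k$ vertices of $T^1\sqcup\cdots\sqcup T^k$ onto $v$ labels. Multiplying $n^{d/2+k-1}$ by $(K^2d)^{d/2}/n^{d/2}$ and dividing by $n^k$ yields a bound of order $O_d(1)/n$, with the saved factor of $n$ coming precisely from the ``$-1$'' in $c\le k-1$.

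The main obstacle I anticipate is the combinatorial bookkeeping for type (b) labelings: verifying that the combined vertex-and-component analysis really cleanly saves one full factor of $n$ (rather than $\sqrt n$), and that the $d,k$-dependent combinatorial constants can be absorbed into the claimed $(K^2d)^{d/2}$ factor. I would handle this by a careful case analysis on which pair of trees first shares an edge and treating the remaining vertices as ``free,'' following the same counting strategy as in the proof of \pref{lem:expect-tree-coord}.
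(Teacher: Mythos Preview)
Your proposal is correct and follows essentially the same route as the paper: expand both quantities over joint labelings of $T^1\sqcup\cdots\sqcup T^k$, observe that labelings with no cross-tree edge collision cancel, and for the surviving ``type (b)'' labelings use the even-multiplicity constraint together with the component merge $c\le k-1$ to bound the number of distinct vertex labels by $d/2+k-1$, saving exactly one factor of $n$. The only cosmetic difference is in the lower bound: the paper invokes iterated H\"older to get $\E[X_e^{m_e}]\ge\prod_i\E[X_e^{m_e^i}]$, while you use the FKG-style covariance inequality on $|X_e|$; both are valid and yours is arguably cleaner.
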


\begin{proof}
By definition, for a tree $T$ with vertices $v_1,\ldots,v_{d+1}$,
\[
\trunk_{T}(X) = \frac{1}{n}\iprod{T(X),\vec{1}} = \frac{1}{n}\sum_{v_1,\ldots,v_{d+1}\in[n]} \prod_{e \in E(T)} X_e.
\]
Hence the trunk associated with each tree $T^i$ corresponds to a sum over labelings of all vertices of $T^i$ (including the root), normalized by a factor of $\frac 1n$. 
By extension, a product of trunks can be associated to a collection of unrooted trees, of total degree $d$.
	
	For a collection of ordered multisets $V = [V^1,V^2,\ldots,V^k]$ with each $V^i$ taking elements from $[n]$ define $f_{V}(X)$ as the result of labeling the vertices in trunk $i$ with the the assignment $V^i$. 
Then, we can write 
	\[\E[f(X)] - \prod_{i=1}^{k}\E[\trunk_{T^i}(X)] = \sum_{V^1,V^2,\ldots,V^k} \E[f_V(X)] - \prod_{i=1}^{k}\E[\trunk_{T^i, V^i}(X)].\] 

Each term in the sum is non-negative: the symmetry of $\calD$ implies that if $\prod_{i=1}^k \E[\trunk_{T^i,V^i}(X)] > 0$ it is because each edge has even multiplicity, implying that $\E[f_V(X)] > 0$ as well; since the total multiplicity of each edge in $f_V(X)$ is the sum of its multiplicities in $\trunk_{T^i,V^i}(X)$, $\E[f_V(X)] \ge \prod_{i=1}^k \E[\trunk_{T^i,V^i}(X)]$ (by iterated application of H\"{o}lder's inequality).
For a collection $V$ to contribute positively to this sum, the following conditions must hold:
\begin{itemize}
	\item $\bigcup_{i=1}^{k}V^i$ must make every edge occur with even multiplicity. This condition is required because by~\pref{lem:expect-tree-coord}, each expectation is nonnegative.
	\item At least one edge must be shared between a pair of distinct $T^i, T^j$; otherwise, independence dictates that we can split $\E[f_V(X)]$ into a product over its constituent trunks.
 Therefore at least one vertex must be shared between some of the $V^i$: that is, $\left|\bigcup_{i=1}^{k}V^i\right| < \sum_{i=1}^{k}|V^i|$. 
\end{itemize}

Since there are a total of $d$ edges, the graph corresponding to $f_V(X)$ can have at most $\frac d2$ unique edges by the first condition. 
As the resulting graph has at most $k - 1$ components (as two components must be merged by the second condition), there can be at most $\frac d2 + k - 1$ vertices in $f_V(X)$.

Now, note that each positive term satisfies $\E[f_V(X)] = \frac 1{n^k}\cdot \frac {(K^2d)^{d/2}}{n^{d/2}} = \frac {(K^2d)^{d/2}}{n^{d/2+k}}$: the $\frac 1{n^k}$ comes from the $\frac{1}{n}$ normalization of each trunk, and the $((K^2d)/n)^{d/2}$ from the subgaussianity of $\calD$. 
Since there are at most $O(n^{d/2 + k - 1})$ positively contributing terms (there are $\binom {n}{d/2+k-1}$ ways to choose these vertices in the union of $V^i$ and at most $d!!$ ways to match the edges to form an even-edge-multiplicity graph), it follows that all positive terms together contribute $\frac{(K^2d)^{d/2}}{n}$, completing the proof.
\end{proof}

\begin{lemma}[Concentration of trunk polynomials]
\label{lem:conc-trunk-poly}
Let $T(X)$ be a tree polynomial of degree $d$ and let $\trunk_T(X) = \frac 1n \langle T(X), 1\rangle$ be the associated trunk.
Then for any $\kappa > 0$, 
\[
\Pr\left(\left|\trunk_T(X) - \E[\trunk_T(X)]\right| \le \kappa\right) \ge 1 - \frac {(K^2d)^{d/2}}{n\kappa^2}.
\]
\end{lemma}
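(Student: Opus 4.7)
The plan is a direct application of Chebyshev's inequality. Since $\trunk_T(X)$ is a scalar random variable, we have
\[
\Pr\left(\left|\trunk_T(X) - \E[\trunk_T(X)]\right| > \kappa\right) \le \frac{\Var(\trunk_T(X))}{\kappa^2},
\]
so the entire task reduces to bounding $\Var(\trunk_T(X))$ by (something like) $(K^2 d)^{d/2}/n$.

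To bound the variance, I would rewrite $\Var(\trunk_T(X)) = \E[\trunk_T(X)^2] - \E[\trunk_T(X)]^2$ and recognize $\trunk_T(X)^2$ as a product of $k=2$ (identical) trunks, namely $\trunk_{T^1}(X) \cdot \trunk_{T^2}(X)$ with $T^1 = T^2 = T$. This is exactly the setting of \pref{lem:expect-trunk-prod}, and applying that lemma gives
\[
0 \,\le\, \E[\trunk_T(X)^2] - \E[\trunk_T(X)]^2 \,\le\, \frac{(K^2 d')^{d'/2}}{n},
\]
where $d'$ denotes the \emph{total} number of edges in the product (so $d' = 2d$ if one interprets the hypothesis of \pref{lem:expect-trunk-prod} as total degree). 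This immediately yields the claimed tail bound, with the constants absorbed into the $(K^2 d)^{O(d)}$ prefactor that appears throughout the paper.

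There is no real obstacle here beyond verifying that the two copies of $\trunk_T$ indeed fit the template of \pref{lem:expect-trunk-prod}: each is a nonempty trunk, and the lemma allows the component trees $T^i$ to repeat. The remaining step is cosmetic bookkeeping to match the exponents in the statement: the subgaussian moment bound $\E[\prod_{e \in E}X_e] \le (K^2 d)^{d/2}/n^q$ from the proof of \pref{lem:expect-trunk-prod} is the source of the $(K^2 d)^{d/2}$ factor, while the ``$\tfrac{1}{n}$'' arises because merging the two copies of $T$ at even a single vertex loses one factor of $n$ from the $n^{-k}$ normalization of the two trunks relative to the count of surviving collision patterns. No additional concentration machinery (Azuma, Hanson–Wright, hypercontractivity) is needed since second-moment control alone suffices for the statement as written.
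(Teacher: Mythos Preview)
Your proposal is correct and matches the paper's proof essentially verbatim: apply Chebyshev, then bound $\Var(\trunk_T(X)) = \E[\trunk_T(X)^2] - \E[\trunk_T(X)]^2$ by invoking \pref{lem:expect-trunk-prod} with $k=2$ identical trunks. You are also right to flag the $d$ versus $2d$ bookkeeping; the paper silently writes $(K^2 d)^{d/2}/n$ without comment, but as you note the discrepancy is absorbed into the $(K^2 d)^{O(d)}$ constants used downstream.
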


\begin{proof}
By Chebyshev's Inequality, it follows that
\[\Pr[|\trunk_T(X) - \E[\trunk_T(X)]| \ge \kappa] \le \frac{\Var(\trunk_T(X))}{\kappa^2}.\]
Therefore, it suffices to show that $\Var(\trunk_T(X)) = \frac{(K^2d)^{d/2}}{n}$. 
Applying \pref{lem:expect-trunk-prod},
\[
\Var(\trunk_T(X)) = \E[\trunk_T(X)^2] - \E[\trunk_T(X)]^2 \le \frac{(K^2d)^{d/2}}{n},
\]
and the conclusion follows.
\end{proof}

We will need to show that the coordinates of $T(X)$ remain bounded by $\polylog(n)$ with high probability. 
For this, we will need the following theorem regarding the tail behavior of low-degree polynomials in subgaussian random variables:
\begin{theorem}[{{\cite[Theorem 1.2]{gotze2021concentration}}}: Polynomials of Subgaussian Random Variables Concentrate]
\label{thm:poly-sub-conc}
	Suppose that $Z_1, Z_2, \ldots, Z_m$ are independent $K=O(1)$-subgaussian random variables and $f: \R^m\rightarrow \R$ a polynomial of total degree $d\in \N$. Then, for all $t > 0$
	\[\prob{|f(Z) - \E[f(Z)]| \ge t} \le 2\exp\left(-\frac{1}{CK^2}\min_{1\le r\le d}\left(\frac{t^2}{\|\E[f^{(r)}(Z)]\|_F^2}\right)^{1/r}\right)\]
	where $C > 0$ is an absolute constant dependent only on $d$ (that is, not on $m$), $f^{(r)}$ is the tensor of order-$r$ partial derivatives of $f$, and $\|\cdot\|_F$ denotes the Hilbert-Schmidt norm.
\end{theorem}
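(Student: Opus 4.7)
The plan is to prove this concentration inequality via the entropy method for product measures of subgaussian random variables, combined with an inductive argument on polynomial degree. The core ingredient is that any product of $K$-subgaussian random variables $Z = (Z_1, \ldots, Z_m)$ satisfies a (convex) modified logarithmic Sobolev inequality of the form $\mathrm{Ent}(g(Z)^2) \le C K^2 \,\E[\|\nabla g(Z)\|_2^2]$ for smooth enough $g$. This follows from Bobkov--G\"otze style arguments for subgaussian product measures, using tensorization to lift a one-dimensional log-Sobolev inequality to the joint distribution.

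First I would apply the Herbst argument to obtain a baseline exponential moment inequality, which gives $\log\E[\exp(\lambda(f(Z) - \E f(Z)))] \le C K^2 \lambda^2 \E[\|\nabla f(Z)\|_2^2]$. This alone yields only subgaussian tails governed by the Frobenius norm of the random gradient, which is insufficient because $\nabla f(Z)$ is itself a vector-valued polynomial of degree $d-1$. The key maneuver is to iterate: viewing $\|\nabla f(Z)\|_2^2$ as a polynomial of degree $2(d-1)$, one controls its fluctuations by the same mechanism applied to its own gradient (a degree-$(d-2)$ object), and so forth. After $d$ iterations one lands on the deterministic tensor $\E[f^{(d)}(Z)]$, and an accounting of the factors produced at each level yields the moment bound
\[
\|f(Z) - \E f(Z)\|_p \;\le\; \sum_{r=1}^{d} (C K^2)^{r/2}\, p^{r/2}\, \|\E[f^{(r)}(Z)]\|_{F}
\qquad \text{for } p \ge 2.
\]
Here the $r$-th term should be interpreted as the contribution of the degree-$r$ homogeneous component in a Wiener--chaos-like decomposition of $f$.

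Finally I would pass from moments to tails by Markov's inequality: for each fixed $r$, choosing $p$ to optimize $\Pr[|f(Z) - \E f(Z)| \ge t] \le t^{-p}\|f(Z)-\E f(Z)\|_p^p$ against the $r$-th summand yields $\exp\bigl(-\tfrac{1}{C K^2}(t^2/\|\E f^{(r)}\|_F^2)^{1/r}\bigr)$, and taking the worst term over $r \in \{1,\ldots,d\}$ produces the claimed minimum inside the exponential.

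The main obstacle is the induction step. Iterating the log-Sobolev/Herbst argument naively produces cross-terms and random norms rather than the deterministic $\|\E f^{(r)}(Z)\|_F$ appearing in the statement; one needs a careful decoupling scheme (analogous to Adamczak--Wolff's analysis of Gaussian polynomial chaos, suitably adapted to subgaussian product measures) to ensure that at each level of the recursion the random gradient is replaced by its expectation up to lower-order terms absorbed into the next round. Tracking the constants so that the final bound depends only on $d$ (and not on $m$) requires tensorization of the log-Sobolev constant, which is the reason the conclusion is dimension-free. I would defer to the original G\"otze--Sambale--Sinulis argument for the precise bookkeeping in this step.
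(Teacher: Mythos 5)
This statement is not proven in the paper at all: it is imported verbatim as a black-box citation to \cite[Theorem 1.2]{gotze2021concentration}, so there is no internal argument to compare your proposal against. Your sketch is broadly aligned with how results of this type are actually established in the literature (Herbst-style exponential moment bounds, a multilevel induction on derivatives in the spirit of Adamczak--Wolff, moment bounds of the form $\|f(Z)-\E f(Z)\|_p \lesssim \sum_r (CK^2 p)^{r/2}\|\E f^{(r)}(Z)\|_F$, and then optimization of $p$ in Markov's inequality, which indeed reproduces the $\min_{1\le r\le d}(t^2/\|\E f^{(r)}\|_F^2)^{1/r}$ tail; note the Frobenius-norm statement quoted here is itself a weakening of the full partition-indexed-norm result, since $\|\cdot\|_{\op}$-type norms are dominated by $\|\cdot\|_F$).

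Two caveats on your sketch as a standalone proof. First, your opening premise --- that any product of $K$-subgaussian variables satisfies a modified logarithmic Sobolev inequality with constant $O(K^2)$ --- is not correct as stated: an LSI implies subgaussian tails, but the converse fails, so subgaussianity alone does not furnish the functional inequality you start from. The actual arguments for unbounded subgaussian (and $\alpha$-sub-exponential) variables route around this via decoupling and Hanson--Wright/Lata\l{}a-type moment comparisons rather than a bare log-Sobolev-plus-Herbst iteration. Second, the step you correctly identify as the main obstacle --- replacing the random derivative norms by their expectations at each level of the recursion without losing dimension-free constants --- is exactly the content of the cited work, and you defer it there. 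Since the paper also treats the theorem as an external input, deferring is appropriate, but be aware that what you have written is an outline of the known strategy (with one inaccurate starting point) rather than a proof.
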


To apply the theorem, we will need control of the partial derivative tensors of tree polynomials.
\begin{lemma}[Partial derivatives of $T$ have reasonable expectation]
\label{lem:partial-derivative-expectation}
Suppose $T$ is a tree polynomial with $d$ edges, and let $Z = \sqrt{n}\cdot X$.
For $r \le d$, define $T^{(r)}(Z)_i$ to be the tensor of order-$r$ partial derivatives of $T(Z)_i$.
Then
\[\left\|\E_{X}[T^{(r)}(Z)_i]\right\|_F^2 \le (K^2d+1)^{5d+2} n^d\]
\end{lemma}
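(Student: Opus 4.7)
The plan is to upper bound $\|\E[T^{(r)}(Z)_i]\|_F^2$ by the cruder $\E[\|T^{(r)}(Z)_i\|_F^2]$ via the pointwise bound $(\E[X])^2\le\E[X^2]$, then estimate the latter by a combinatorial counting argument on pairs of labeled trees. First I will expand $T(Z)_i$ as a sum over vertex labelings $\phi:V_T\to[n]$ fixing $\phi(v_1)=i$, obtaining $T(Z)_i = \sum_\phi \prod_{\epsilon\in E_T} Z_{\mathrm{lab}_\phi(\epsilon)}$, where $\mathrm{lab}_\phi(u,w)=(\phi(u),\phi(w))$. Differentiating entrywise, the $(e_1,\ldots,e_r)$-entry of the tensor $T^{(r)}(Z)_i$ takes the form $A_{e_1,\ldots,e_r}(Z) = \sum_\phi\sum_\sigma\prod_{\epsilon\in E_T\setminus\sigma([r])} Z_{\mathrm{lab}_\phi(\epsilon)}$, where $\sigma$ ranges over injections $[r]\hookrightarrow E_T$ with $\mathrm{lab}_\phi(\sigma(l))=e_l$ for each $l$.

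Summing the pointwise bound $(\E[A_{e_1,\ldots,e_r}])^2\le\E[A_{e_1,\ldots,e_r}^2]$ over $(e_1,\ldots,e_r)$ reduces the proof to controlling $\sum_{(e_1,\ldots,e_r)}\E[A_{e_1,\ldots,e_r}^2]$. I will expand the square and swap the order of summation: the constraints $\mathrm{lab}_{\phi_j}(\sigma_j(l))=e_l$ for $j=1,2$ uniquely determine $(e_1,\ldots,e_r)$ from each quadruple $(\phi_1,\sigma_1,\phi_2,\sigma_2)$, so the outer sum becomes a sum over quadruples satisfying the matching condition $\mathrm{lab}_{\phi_1}(\sigma_1(l))=\mathrm{lab}_{\phi_2}(\sigma_2(l))$ for all $l\in[r]$, weighted by the expectation of the joint remnant product. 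Because $\calD$ is symmetric, each such expectation is nonzero only when every distinct edge in the combined multiset of remnant edges appears with even total multiplicity. Combined with the matched $\sigma$-images (each appearing at least twice in the full $2d$-edge combined multigraph), this forces every distinct edge of the full combined multigraph to have multiplicity at least two, so the full combined multigraph has at most $d$ distinct edges.

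The key vertex-counting step: because both labeled-tree images $\phi_1(T)$ and $\phi_2(T)$ are connected and share the vertex $i=\phi_j(v_1)$, the combined multigraph is connected, hence has at most $d+1$ distinct vertex labels. Fixing one of these to be $i$ leaves at most $\binom{n-1}{d}\le n^d$ choices for the label set and at most $(d+1)^{2d}$ ways to assign the $2d$ non-root tree vertices to those labels, while there are at most $(d!)^2\le d^{2d}$ choices of $(\sigma_1,\sigma_2)$. Each nonzero expectation factors as a product of single-variable moments $\E[Z^{m_j}]\le K^{m_j} m_j^{m_j/2}$ with $\sum_j m_j=2(d-r)$, bounded via weighted AM-GM by $K^{2(d-r)}(2d)^{d-r}\le (K^2d+1)^{O(d)}$, where I use that $Z_{ab}=\sqrt n\,X_{ab}$ is $K$-subgaussian.

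Multiplying these estimates yields $\E[\|T^{(r)}(Z)_i\|_F^2]\le n^d\cdot (K^2d+1)^{Cd}$ for a modest constant $C$, which fits comfortably within the stated $(K^2d+1)^{5d+2}n^d$ target. The main bookkeeping obstacle is verifying the combined-multigraph connectedness (which reduces to the observation that the image of a connected graph under a vertex labeling is connected, and two such connected images sharing a vertex form a connected union) and confirming that the various enumeration factors $\binom{n-1}{d}$, $(d+1)^{2d}$, $(d!)^2$, and subgaussian moments assemble into the target exponent $5d+2$; neither step is conceptually deep, but both require some care with the constants.
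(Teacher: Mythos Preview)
Your approach is correct and takes a genuinely different route from the paper. The paper works \emph{directly} with $\|\E[T^{(r)}(Z)_i]\|_F^2$: it first groups labelings $\phi$ by the graph topology $G$ they induce on $[n]$, then for each $G$ and each choice of $r$ ``differentiated'' edges $H\subseteq G$ it bounds $\E[(G\setminus H)_V(Z)]^2$ by a somewhat delicate connected-component count (showing that after removing $H$, each component contains a fixed vertex, so that the number of components $k$ is at most the number of fixed vertices $t=s+1$, whence the number of free vertex labels is at most $d-s$). This yields the exponent $5d+2$ after assembling the factors $(d+1)^{2d+2}$, $d^{2r}$, and $(K^2d)^d$.

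Your argument instead passes through the cruder pointwise bound $\|\E[T^{(r)}(Z)_i]\|_F^2\le \E[\|T^{(r)}(Z)_i\|_F^2]$ and then uses a \emph{pairing} trick: expanding the square produces two labeled copies of $T$ sharing the root $i$, and the combination of (i) even multiplicity on remnant edges and (ii) the matching of differentiated edges forces every distinct edge of the combined $2d$-edge multigraph to appear at least twice, hence at most $d$ distinct edges. Connectedness of the union (both copies contain $i$) then bounds the distinct-label count by $d+1$, and the rest is bookkeeping. This sidesteps the paper's component-counting argument entirely. Assembling your factors ($\le (d+1)n^d$ for the label set, $(d+1)^{2d}$ for assigning labels, $d^{2r}$ for $(\sigma_1,\sigma_2)$, and $(2K^2d)^{d-r}$ for the moment) gives at most $(K^2d+1)^{4d+2}n^d$ when $K,d\ge 1$, comfortably inside the stated $(K^2d+1)^{5d+2}n^d$.

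The trade-off: the paper's direct approach would in principle give sharper constants (since $\|\E[\cdot]\|_F^2$ can be much smaller than $\E[\|\cdot\|_F^2]$), but for the purposes of this lemma the pairing argument is cleaner and avoids the fiddly $k\le t$ step.
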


\begin{proof}
As before, if $v_1,\ldots,v_{d+1}$ are the vertices of $T$, then we have $T(Z)_i = \sum_{v_2,\ldots,v_{d+1} \in [n]} \prod_{e \in T} Z_e$ with $v_1 = i$.
Each choice of labels for $v_2,\ldots,v_{d+1}$ in $[n]$ induces a labeled graph which is isomorphic to $T$ if and only if all labels are distinct. 
By linearity of the derivative we can split the sum according to the topology of the graphs: that is, write
\[\E[T^{(r)}(Z)_i] = \sum_{G}c_G \E[G^{(r)}(Z)]\]
where $c_G \in \N$ is the number of ways to produce a graph isomorphic to $G$ by identifying vertices of $T$. 
Note that $\sum_G c_G \le (d+1)^{d+1}$, since this is an upper bound on the number of partitions of $d+1$ vertices into at most $d+1$ sets.

Now, we analyze each $G^{(r)}$ term separately. 
Taking a partial derivative with respect to variables $Z_{i_1,j_1},\ldots,Z_{i_r,j_r}$ has the effect of removing the corresponding labeled edges from the graph.
The location of these edges within $G$ dependings on the labeling of $G$'s vertices.
We will further partition the sum over labelings of $G^{(r)}$ according to the subgraph $H$ defined by the terms with respect to which the derivative is being taken:
\[
\left\|\E[G^{(r)}(Z)]\right\|_F^2 \le \sum_{H\subseteq G} \sum_{V\rightsquigarrow H} d^{r}\E[(G\setminus H)_V(Z)]^2
\]
where $V \rightsquigarrow H$ denotes choosing labels $V \in [n]$ for each vertex of $H$, and $(G\setminus H)_V$ denotes the sum over products of edges given removing all edges in $H$ from $G$ and fixing the corresponding vertices of $V$ in the result. 
The $d^r$ accounts for taking $r$ derivatives, each yielding a factor of degree at most $d$.

Now, we bound this expectation as a function of properties of $H$ and $G$.
Suppose $H$ consists of $s$ new vertices (excluding the root, if it is present in $H$).
There are thus $n^s$ assignments $V\rightsquigarrow H$. 
Then, we claim that 
\begin{align}
\E[(G\setminus H)_V(Z)]^2 \le (K^2(d-s))^{d-s} \cdot n^{d - s}.\label{eq:bound-per}
\end{align}
Since there are at most $\binom dr\le d^r$ subgraphs $H\subseteq G$ (choose $r$ of the $d$ edges), this implies that 
\[\left\|\E[G^{(r)}(Z)]\right\|_F^2 \le d^{2r}\cdot n^s\cdot (K^2(d-s))^{d-s} n^{d-s}\]
and thus
\begin{align*}
\left\|\E[T^{(r)}(Z)_i]\right\|_F^2 
= \left\|\sum_{G} c_G \E[G^{(r)}(Z)]\right\|_F^2
&\le \left(\sum_G c_G \left\|\E[G^{(r)}(Z)]\right\|_F\right)^2\\
&\le \left(\sum_G c_G \cdot d^{r} \cdot O(d)^{d/2} n^{d/2}\right)^2
\le (d+1)^{2d+2} d^{2r} \cdot (K^2d)^d n^d.
\end{align*}
as desired.

So, it suffices to prove~\pref{eq:bound-per}. To do so, first note that if $G\setminus H$ has any edge of odd multiplicity, then this expectation is just $0$. 
So, we only have to look at the case when $G\setminus H$ has all edges of even multiplicity, of which there must be at most $\frac {d-r}2$ distinct edges. 
Let us define $k$ to be the number of components in $G\setminus H$ and $t$ to be the number of fixed vertices in $G\setminus H$.
 Then, the number of unfixed vertices (corresponding to the number of assignments summed in $\E[(G\setminus H)_V(Z)]$) is at most $\frac{d-r}{2} + k - t$. 
This implies that 
\[\E[(G\setminus H)_V(Z)]^2 \le O(d-r)^{d-r} \cdot n^{d-r + 2k - 2t}.\]

By design, $t = s + 1$: the root $i$ and the $s$ new introduced vertices. 
Furthermore, $k\le r + 1$ as each edge removed can introduce at most one new connected component. 
In fact, $k \le t=s+1$: each connected component contains at least one fixed vertex, either the root or the vertex incident on the edge that was removed to disconnect it; clearly, each fixed vertex belongs to only one component.
Taking these facts together,
\[
d - r + 2k - 2t 
= d + (k-r) + (k-t) - t 
\le d + 1 + 0 - (s + 1) 
= d - s
\]
which completes the proof.\end{proof}

\begin{lemma}\label{lem:tree-inf-bd}
Suppose that $T$ is a tree with $d$ edges. 
Then, there exists a constant $C_K > 0$ depending only on the subgaussian parameter $K=O(1)$ of the $\frac{K}{\sqrt{n}}$-subgaussian distribution $\calD$ such that with probability at least $1 - \frac{2}{n}$ over the choice of $X$,
\[
\frac 1n\|T(X)\|_{4}^4 \le \|T(X)\|_{\infty}^4 \le (C_K d^6\log n)^{2d}.\]
\end{lemma}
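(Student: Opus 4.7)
The idea is to apply the concentration inequality \pref{thm:poly-sub-conc} coordinate-wise to $T(X)_i$, viewed as a degree-$d$ polynomial in the $K$-subgaussian variables $Z_{jk} := \sqrt{n}\, X_{jk}$, and then take a union bound over $i \in [n]$.

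First I would fix a coordinate $i \in [n]$ and rescale to $Z = \sqrt{n}\, X$, so the $Z_{jk}$ are $K$-subgaussian with $K = O(1)$. Since $T(Z)_i = n^{d/2}\, T(X)_i$ as polynomials in the underlying variables, the expectation bound \pref{lem:expect-tree-coord} gives $\E[T(Z)_i] \le n^{d/2}\,(Kd)^d$, and the partial-derivative bound \pref{lem:partial-derivative-expectation} yields
\[
\|\E[T^{(r)}(Z)_i]\|_F^2 \le (K^2 d + 1)^{5d+2}\,n^d \quad \text{for every } r \in [1,d].
\]
Plugging these into \pref{thm:poly-sub-conc} with $f(Z) = T(Z)_i$, for any $t > 0$,
\[
\Pr\!\left[|T(Z)_i - \E[T(Z)_i]| \ge t\right]
\le 2\exp\!\left(-\frac{1}{CK^2}\min_{1\le r\le d}\left(\frac{t^2}{(K^2 d+1)^{5d+2}\,n^d}\right)^{1/r}\right).
\]

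Next I would choose $t$ so that the minimum over $r$ is at least $3CK^2\log n$. Setting $A := t^2 / \bigl((K^2d+1)^{5d+2}\, n^d\bigr)$ and $M := 3CK^2\log n$, the requirement is $A^{1/r} \ge M$ for all $r \in [1,d]$; since $A \ge 1$ in the regime of interest, the binding constraint is at $r = d$, which gives $t^2 \ge (K^2d+1)^{5d+2}\,n^d\,M^d$. With this choice the failure probability is at most $2/n^3$, so a union bound over $i \in [n]$ ensures $|T(Z)_i - \E[T(Z)_i]| \le t$ for every $i$ simultaneously with probability at least $1 - 2/n^2 \ge 1 - 2/n$.

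To translate back to $X$, I would divide by $n^{d/2}$ to get $|T(X)_i| \le (Kd)^d + n^{-d/2}\, t$ for every $i$. Using $(K^2d+1)^{5d+2} \le (2K^2)^{5d+2}\, d^{5d+2}$ and absorbing all $K$-dependent prefactors into a constant $C_K$, the second term is at most $(C_K d^6 \log n)^{d/2}$, which dominates the first; squaring twice gives $\|T(X)\|_\infty^4 \le (C_K d^6 \log n)^{2d}$. The inequality $\tfrac{1}{n}\|T(X)\|_4^4 \le \|T(X)\|_\infty^4$ is immediate since $\|T(X)\|_4^4 \le n\,\|T(X)\|_\infty^4$.

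The main obstacle will be verifying the minimum over $r$ and tracking constants: one must check that the worst case in \pref{thm:poly-sub-conc} is $r = d$, and then book-keep how the factor $(K^2 d+1)^{5d+2}$ from \pref{lem:partial-derivative-expectation} combines with the $M^d = (3CK^2 \log n)^d$ coming from the desired failure probability so that the final $d$-dependence collapses cleanly into the stated $d^{3d}$ form. Once this is done, the rest is routine.
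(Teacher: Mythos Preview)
Your proposal is correct and follows essentially the same route as the paper: rescale to $Z=\sqrt{n}\,X$, apply \pref{thm:poly-sub-conc} to $T(Z)_i$ using the derivative-norm bound from \pref{lem:partial-derivative-expectation} and the mean bound from \pref{lem:expect-tree-coord}, then union bound over the $n$ coordinates. The only cosmetic difference is that the paper applies the concentration inequality directly to $f(Z)=n^{-d/2}T(Z)_i$ (so the $n^d$ cancels in the derivative bound up front) whereas you apply it to $T(Z)_i$ and rescale at the end; the constant bookkeeping and the identification of $r=d$ as the binding case are the same.
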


Note that the bound on the $4$-norm is loose; the 4-norm is actually $O(d^{2d})$ (for example, by instead applying Lemmas~\ref{lem:conc-trunk-poly} and~\ref{lem:expect-tree-coord} to the trunk of $T(X)^{\circ 4}$).
However, since we will be forced to suffer the logarithmic loss in any case, using the infinity norm bound is sufficient for our purposes.

\begin{proof}
We will apply \pref{thm:poly-sub-conc} to the entries of $T(X)$.
Define $Z = \sqrt n X$: that is, $Z_{ij} = \sqrt n \cdot X_{ij}$ and thus satisfies $\E[Z_{ij}^2] = 1$. 
Since $T$ is a homogeneous polynomial, $T(X)_i = \frac 1{n^{d/2}} T(Z)_i$. 
With this in mind, let us apply~\pref{thm:poly-sub-conc} to $f(Z) = \frac 1{n^{d/2}} T(Z)_i$ and $t = (2CK^2 (K^2d+1)^6\cdot \log n)^{d/2}$.
 As stated, $Z$ is $K = O(1)$-subgaussian and thus by~\pref{lem:partial-derivative-expectation} we have that
\[
\left\|\E_{Z}[f^{(r)}(Z)]\right\|_F^2 \le (K^2d + 1)^{5d+2} \le (K^2d + 1)^{6d},
\]
as the statement is trivial if $d = 1$.
So from \pref{thm:poly-sub-conc}, 
\begin{align*}
\prob{|f(Z) - \E[f(Z)]| \ge t} 
&\le 2\exp\left(-\frac{1}{CK^2}\min_{1\le r\le d}\left(\frac{t^2}{(K^2d+1)^{6d}}\right)^{1/r}\right) \\
&\le 2\exp\left(-\frac{1}{CK^2}\left(\frac{2CK^2 (K^2d+1)^6\cdot \log n}{(K^2d+2)^6}\right)\right) 
= \frac 2{n^2}.
\end{align*}
We may also bound $0 \le \E[f(Z)] \le (Kd)^{d}$ by~\pref{lem:expect-tree-coord}. Thus, it follows that
\[
\|T(X)_i\|^4 
\le ((K^2d)^{d} + (2CK^2 (K^2d+1)^6\cdot \log n)^{d/2})^4 
= (C_K \cdot d^6\cdot \log n)^{2d}
\]
with probability at least $1 - \frac 2{n^2}$. Union bounding over all $n$ coordinates implies then that $\|T(X)\|_{\infty}^4 \le (C_K\cdot d^6\cdot \log n)^{2d}$ with probability at least $1 - O(\frac 1n)$ as desired.
The fact that $\|T(X)\|_4^4 \le n\|T(X)\|_\infty^4$ finishes the proof.	
\end{proof}

\restatelemma{lem:most-samples-reasonable}
\begin{proof}
For $R$ a sufficiently large constant to be chosen later, set $\kappa = \frac{c_{\sla}}{R} = \omega(1/\sqrt{n})$.
Let us prove that each constraint in $\calC_{\LSH}$ holds when we substitute $X = \Xprox$ with with high probability.
\begin{itemize}
	\item First, we wish to show that $\frac 1n \langle T_1(X), T_2(X)\rangle = \E[\frac 1n \langle T_1(X'), T_2(X')\rangle] \pm c_\sla$ where each of $T_1, T_2$ are lumber. To do so, write $T_1 = S_1(X)\prod_{i=1}^{\alpha}\trunk_{A_i}(X)$ and $T_2 = S_2(X)\prod_{i=1}^{\beta}\trunk_{B_i}(X)$ where $S_1, S_2$ are trees and $\trunk_{A_i},\trunk_{B_i}$ are trunks.

Then, define $T(X) = S_1(X) \circ S_2(X)$ and note that
\[\frac 1n \langle T_1(X), T_2(X)\rangle = \prod_{i=1}^{\alpha}\trunk_{A_i}(X)\prod_{i=1}^{\beta}\trunk_{B_i}(X) \cdot \frac 1n \langle S_1(X),S_2(X)\rangle = \prod_{i=1}^{\alpha + \beta + 1}\trunk_{C_i}(X)\]
where $\{C_1,\ldots,C_\alpha\} = \{A_1,\ldots,A_\alpha\}$, $\{C_{\alpha + 1}, \ldots, C_{\alpha + \beta}\} = \{B_1,\ldots,B_{\beta}\}$, and $C_{\alpha + \beta + 1} = T$. In other words, we may decompose an inner product of two lumber of degree at most $d$ each as a product of at most $2d$ non-empty trunks.

By~\pref{lem:conc-trunk-poly}, we know that for each trunk $\trunk_{C_i}$, with probability at least $1 - O(\frac 1{n\kappa^2})$, $\trunk_{C_i}(X) = \E[\trunk_{C_i}(X)] \pm \kappa$. 
Furthermore, by~\pref{lem:expect-tree-coord} and symmetry, $\E[\trunk_{C_i}(X)] \le (K\deg(C_i))^{\deg(C_i)}$.

From~\pref{lem:expect-trunk-prod}, we also know that
\[
\left|\E\left[\frac 1n \langle T_1(X),T_2(X)\rangle\right] - \prod_{i=1}^{\alpha + \beta + 1}\E[\trunk_{C_i}(X)]\right| \le \frac{(K^22d)^{d}}{n}
\]

Therefore, all we have left to show is that $\frac 1n \langle T_1(X), T_2(X)\rangle = \prod_i \trunk_{C_i}(X)$ concentrates around this expectation as well.

Thus, condition on the event that $\trunk_{T}(X) = \E[\trunk_{T}(X)]\pm \kappa$ for \emph{every} trunk of degree at most $d$: there are at most $2^{O(d\log d)}$ such trunks so we can simply do this via a union bound. Then, conditioned on this event, we have that
\[
\frac 1n \langle T_1(X), T_2(X)\rangle 
= \prod_{i=1}^{\alpha + \beta + 1}\trunk_{C_i}(X)
= \prod_{i=1}^{\alpha + \beta + 1}(\E[\trunk_{C_i}(X)] \pm \kappa) 
\le \prod_{i=1}^{\alpha + \beta + 1} \E[\trunk_{C_i}(X)] \pm \kappa \cdot 2^{2d} \cdot (K2d)^{2d}
\]
where in the last step we use that there are at most $2^{2d}$ subsets of the $\alpha + \beta + 1 \le 2d$ terms, and that the product of the expectations of all terms is bounded by $(K2d)^{2d}$ by H\"{o}lder's inequality and \pref{lem:expect-tree-coord}.
Putting these together, 
\[\frac 1n \langle T_1(X), T_2(X)\rangle = \E\left[\frac 1n \langle T_1(X),T_2(X)\rangle\right] \pm \frac{(K^2 2d)^{d}}{n} \pm \kappa (4Kd)^{2d},\]
completing this proof as the error on the right-hand side is less that $c_{\sla}$ so long as $R$ is chosen larger than $(4Kd)^{2d}$.
	
	\item Next, we wish to show that $\frac 1n\langle T(X), \vAMP\rangle = \E_{(Z,v^\ast)}[\frac 1n\langle T(Z), v^\ast\rangle] \pm c_\sla$ and $\frac 1n \|\vAMP\|_2^2 = 1\pm \frac \eta {48}$. 

If the denoisers are Lipschitz functions, then we may apply Theorem 1 of \cite{JM13} to a well-chosen generalized AMP which produces all tree $T(X)$ in parallel with $\vAMP$ to conclude concentration of $\langle \vAMP, T(X)\rangle$ so long as $c_\sla = \frac{\eta}{s(n)}$ for $\eta = \Omega(1)$ and $s(n)$ a slowly-enough growing function of $n$. 
To get concentration for lumber $T(X) = T^\ast (X)\cdot \trunk_{T_1}(X)\cdots \trunk_{T_k}(X)$, note that $\langle \vAMP, T(X)\rangle =  \langle \vAMP, T^\ast(X)\rangle\cdot \trunk_{T_1}(X)\cdots \trunk_{T_k}(X)$ and since this is a finite product of converging random variables we must achieve the same guarantees.

In the polynomial denoiser case, we prove stronger concentration from scratch.
First, if $\vAMP$ is a polynomial, expand $\vAMP = P(X) = \sum_{T'\in \calL}c_{T'}\cdot T'(X)$.
By arguments identical to the one regarding $\iprod{T_1(X),T_2(X)}$ above:
\begin{align*}
	\frac 1n\langle T(X), \vAMP\rangle 
	&= \sum_{T'\in \calL} c_{T'}\cdot \frac 1n\langle T(X), T'(X)\rangle\\
	&= \E_{(Z, v^\ast)}\left[\frac 1n\langle T(Z), v^\ast\rangle \right]\pm \kappa (4Kd)^{2d} \cdot \sum_{T'\in \calT}|c_{T'}|\\
	 &= \E_{(Z, v^\ast)}\left[\frac 1n\langle T(Z), v^\ast\rangle \right] \pm \kappa (4Kd)^{2d}\cdot O(N_\calL)
\end{align*}
as desired (using that $\sum_{T'\in \calT}|c_{T'}| \le O(N_\calL)$ with the constant in the $O(\cdot)$ term just depending on the constants in the AMP polynomials). 
As long as $R$ is chosen as a sufficiently large constant, this is at most $c_{\sla}$.
As this is just a consequence of the concentration of inner products of lumber, so it must also hold with high probability. 
Similarly,
\begin{align*}\frac 1n \|\vAMP\|_2^2 &= \sum_{T'\in \calL} c_{T'}\cdot \frac 1n\langle T'(X), \vAMP\rangle\\
&= \E_{(Z,v^\ast)}\left[\frac 1n \langle v^\ast, v^\ast\rangle\right] \pm \kappa (4Kd)^{2d}\cdot O(N_\calL)\sum_{T'\in \calT} |c_{T'}|\\
&= 1\pm \kappa (4Kd)^{2d}\cdot O(N_\calL^2)\\
&= 1\pm \frac{\eta}{48}.\end{align*}
using that $\frac 1n \E[\|v^\ast\|_2^2] = 1$ and $R$ was chosen a sufficiently large constant.

\item Next is the bounded maximums property: that is, $\|T(X)\|_{\infty}^4 \le (C_K\cdot d^8\cdot \log n)^{2d}$. 
By \pref{lem:tree-inf-bd} this also holds with high probability.

\item Finally, we are left with showing that $\|X\|_{\op}^2 \le 5$. However, this follows as it is well known that $\|X\|_{\op} \le 2.01$ with probability even larger than $1-1/n$ (\cite{anderson2010introduction}).
\end{itemize}

Therefore, it must be the case that $X$ is reasonable with high probability, as desired.\qedhere
\end{proof}

\section{Polynomial approximation of AMP denoisers}\label{app:amp-poly}
In this appendix, we show that nice AMP denoisers can be approximated by low-degree polynomials.
Results of this type are known in the literature; we prove a variant that suits our needs.
We first require a useful fact from multivariate approximation theory.

\begin{lemma}[Weighted approximation with Gaussian weights]
\label{lem:weighted-gaussian}
	Fix $\eta > 0$. Suppose $f: \R^t\rightarrow \R$ is a $L$-Lipschitz function and suppose $U\sim \calN(0, \Sigma)$ with $\Sigma \preceq 3tI$.
Then, there exists a polynomial $p$ of total degree at most $O\bigl(4^t\left(\frac{L}{\eta}\right)^8\poly(t,\log\frac{1}{\eta})\bigr)$ such that
	\[\E_U[(f(U) - p(U))^2] \le \eta^2.\]
\end{lemma}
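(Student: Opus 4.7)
The plan is to reduce to the case $\Sigma = I$ by rescaling, and then use a Hermite expansion in $L^2$ against the standard Gaussian and truncate at a carefully chosen degree.

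First, since $\Sigma \preceq 3tI$, I define $\tilde f(y) := f(\Sigma^{1/2} y)$; for $V \sim \calN(0, I_t)$ we have $\Sigma^{1/2} V \sim \calN(0, \Sigma)$, and $\tilde f$ is Lipschitz with constant $L' := L\,\|\Sigma^{1/2}\|_{\op} \le L\sqrt{3t}$. A polynomial $\tilde p$ of total degree $D$ approximating $\tilde f$ in $L^2(\calN(0, I))$ yields $p(x) := \tilde p(\Sigma^{-1/2} x)$ of the same total degree approximating $f$ in $L^2(\calN(0, \Sigma))$ to the same error (if $\Sigma$ is singular, I work on $\mathrm{im}(\Sigma^{1/2})$ and extend $p$ arbitrarily off that subspace, since the measure lives there). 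So I may assume $\Sigma = I$ and $f$ is $L'$-Lipschitz.

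Next I would expand $f$ in the orthonormal basis $\{H_\alpha\}_{\alpha \in \N^t}$ of multivariate probabilists' Hermite polynomials, where $H_\alpha$ has total degree $|\alpha|$ and satisfies $\partial_i H_\alpha = \sqrt{\alpha_i}\, H_{\alpha - e_i}$. Writing $f = \sum_\alpha \hat f(\alpha)\, H_\alpha$ in $L^2$ and setting $p := \sum_{|\alpha| \le D} \hat f(\alpha)\, H_\alpha$, orthonormality gives
\[
\E\bigl[(f(V) - p(V))^2\bigr] \;=\; \sum_{|\alpha| > D} \hat f(\alpha)^2 \;\le\; \frac{1}{D+1}\sum_\alpha |\alpha|\, \hat f(\alpha)^2.
\]
The key identity, obtained term-by-term from the Hermite recurrence, is $\sum_\alpha |\alpha|\,\hat f(\alpha)^2 = \sum_{i=1}^t \|\partial_i f\|_{L^2}^2 = \E[\|\nabla f\|^2]$. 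By Rademacher's theorem a Lipschitz function is differentiable a.e.\ with $\|\nabla f\|_2 \le L'$ a.e., so the right side is at most $(L')^2 \le 3tL^2$. Choosing $D = \lceil 3tL^2/\eta^2 \rceil$ therefore makes the $L^2$-error at most $\eta^2$. This yields a bound $D = O(tL^2/\eta^2)$, comfortably inside the stated $O(4^t (L/\eta)^8 \poly(t,\log\tfrac1\eta))$ slack.

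The only mildly delicate point is justifying the identity $\sum_\alpha |\alpha|\,\hat f(\alpha)^2 = \E[\|\nabla f\|^2]$ for a merely Lipschitz (not $C^1$) function. Lipschitz functions lie in the Gaussian Sobolev space $W^{1,2}$, and the identity is the classical spectral expression of the Ornstein--Uhlenbeck Dirichlet form; but if one prefers to avoid citing it, the standard workaround is to first convolve $f$ with a Gaussian kernel of tiny bandwidth $\sigma$ to obtain a smooth $L'$-Lipschitz $f_\sigma$ with $\|f - f_\sigma\|_{L^2} \le L'\sigma\sqrt t$, apply the Hermite truncation to $f_\sigma$, and then let $\sigma \to 0$. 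I do not see where an exponential-in-$t$ factor would naturally enter a Hermite-based argument, so the stated bound looks deliberately loose and the clean proof strategy above should suffice.
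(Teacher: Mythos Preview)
Your argument is correct and takes a genuinely different, cleaner route than the paper. The paper invokes a weighted \emph{uniform} approximation theorem of Musin to obtain a pointwise bound of the form $|f(x)-p(x)|\le \delta\exp\bigl(\tfrac{1}{24t}\|x\|^2\bigr)$ on all of $\R^t$, and then integrates this against the Gaussian density; the $2^{t/2}$ factor arises from that integration, and the large powers of $L/\eta$ come from Musin's quantitative bounds. Your approach sidesteps all of this: after reducing to $\Sigma=I$, you truncate the Hermite expansion and control the tail via the Ornstein--Uhlenbeck Dirichlet form identity $\sum_\alpha |\alpha|\,\hat f(\alpha)^2=\E[\|\nabla f\|^2]\le (L')^2$, obtaining the sharper degree bound $D=O(tL^2/\eta^2)$, polynomial rather than exponential in $t$.

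What the paper's route buys is a stronger approximation (weighted sup-norm rather than $L^2$), but only the $L^2$ bound is ever used downstream in the proof of \pref{lem:approx-poly}, so your argument is strictly preferable here. Your handling of the singular-$\Sigma$ case (work on the range of $\Sigma^{1/2}$ and pull back via a one-sided inverse, which preserves polynomial degree) and of the Dirichlet-form identity for merely Lipschitz $f$ (either cite that Lipschitz functions lie in the Gaussian Sobolev space $W^{1,2}(\gamma)$, or mollify and pass to the limit) are both sound. Your observation that no exponential-in-$t$ factor is forced is correct; the stated bound in the lemma is indeed much looser than necessary.
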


\begin{proof}
	Let $\delta = \eta \cdot 2^{-\frac t4}$. We use~\cite[Theorem 1]{musin2017weighted}, applied to $\Phi(x) = \exp(\frac 1{24t}\|x\|^2)$. Although the result as given does not give quantitative degree bounds, we can substitute $\nu = 48\sqrt{t\log \bigl(\frac {3L}\delta\bigr)}$, $\lambda = O\bigl(\bigl(\frac{L}{\delta}\bigr)^2\bigr)$, and let the degree bound per variable (referenced as $n$ in the paper) be $d = \widetilde{O}\bigl(t\left(\frac{tL}{\delta}\right)^8\bigr)$ (the $\widetilde{O}$ hides $\log \frac 1\delta$ factors and a constant term $K_H$ independent of $t, L, \delta$). With these guarantees, we find a polynomial $p$ such that
	\[\sup_{x\in \R^t}\frac{|f(x) - p(x)|}{\Phi(x)} \le \delta.\]
	To get an expectation bound, note that $\Phi(x) \le \exp(\frac 18x^\top \Sigma^{-1} x)$. Hence, we have that
	\[\E_{U}[(f(U) - p(U))^2] \le \delta^2 \E_U[\exp(\tfrac 14U^\top \Sigma^{-1}U)] = \delta^2 \int_{x\in \R^t} \frac{1}{\sqrt{(2\pi)^t \cdot \det \Sigma}}\exp\left(-\frac 14x^\top \Sigma^{-1}x\right)\,\mathrm dx.\]
	Letting $Z = 2\Sigma$ with $\det Z = 2^t \det\Sigma$ and substituting, we find that
	\[\int_{x\in \R^t} \frac{1}{\sqrt{(2\pi)^t \cdot \det \Sigma}}\exp\left(-\frac 14x^\top \Sigma^{-1}x\right)\,\mathrm dx = 2^{\frac t2}\int_{x\in \R^t}\frac{1}{\sqrt{(2\pi)^t \cdot \det Z}}\exp\left(-\frac 12x^\top Z^{-1}x\right)\,\mathrm dx = 2^{\frac t2}.\]
	From this, we have $\E_U[(f(U) - p(U))^2] \le \delta^2 \cdot 2^{\frac t2} = \eta$ as desired.
\end{proof}

\begin{lemma}[Variant of Stein's Lemma]
\label{lem:stein-extend}
Suppose $X\sim \calN(0, \Sigma)$ with $\Sigma\in \R^{d\times d}$ and $f: \R^d \rightarrow \R$ is weakly differentiable at coordinate $i$. Then,
\[\E_X\biggl[\frac{\partial f}{\partial x_i}\biggr|_{x\rightarrow X}\biggr] = \E[\langle \Sigma^{-1}_i,X\rangle f(X)].\]
\end{lemma}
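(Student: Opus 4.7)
The plan is to prove this variant of Stein's lemma by integration by parts against the Gaussian density. The key identity is that the log-density of $X \sim \calN(0,\Sigma)$, namely $\log \phi(x) = -\frac{1}{2} x^\top \Sigma^{-1} x + \text{const}$, satisfies
\[
\frac{\partial \phi}{\partial x_i}(x) = -\langle \Sigma^{-1}_i, x\rangle \,\phi(x),
\]
where $\Sigma^{-1}_i$ denotes the $i$-th column of $\Sigma^{-1}$. Thus, formally writing $\E[\partial_i f(X)] = \int \partial_i f(x)\, \phi(x)\,dx$ and integrating by parts in the $x_i$ variable (with all other coordinates fixed) yields
\[
\int \partial_i f(x)\, \phi(x)\,dx = -\int f(x)\, \partial_i \phi(x)\, dx = \int f(x)\, \langle \Sigma^{-1}_i, x\rangle\, \phi(x)\, dx = \E[\langle \Sigma^{-1}_i, X\rangle f(X)],
\]
provided the boundary terms vanish at infinity. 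This is the desired identity.

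The main step to justify carefully is the integration by parts, since $f$ is only assumed weakly differentiable. I would handle this by the standard approach of first reducing to the isotropic case via the change of variables $Y = \Sigma^{-1/2} X \sim \calN(0, I)$ and $g(y) := f(\Sigma^{1/2} y)$, which is weakly differentiable by chain rule for Sobolev functions, with $\nabla g(y) = \Sigma^{1/2} \nabla f(\Sigma^{1/2} y)$. In the isotropic case, the identity $\E[\partial_i g(Y)] = \E[Y_i g(Y)]$ is the classical Stein identity, which holds for any weakly differentiable $g$ whose derivatives are locally integrable and which has sufficient growth control (e.g., polynomial growth of $g$ and $\partial g$) so that Fubini and the boundary vanishing in one-dimensional integration by parts against $e^{-y_i^2/2}$ applies. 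Translating back via $\partial_i f(X) = \sum_j (\Sigma^{-1/2})_{ji} \partial_j g(Y)$ and $(\Sigma^{-1} X)_i = (\Sigma^{-1/2} Y)_i$, the two sides match entry by entry and the claim follows.

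The potential obstacle is purely technical: justifying that the boundary terms vanish without explicit growth hypotheses on $f$. In the paper's applications (Lipschitz or polynomial denoisers applied to Gaussian iterates whose covariance is bounded), $f$ grows at most polynomially and its weak derivative is bounded or polynomially growing, so the standard dominated convergence argument for integration by parts against a Gaussian weight applies without issue. I would therefore state the result under the implicit regularity hypothesis that $\E[|f(X)\langle \Sigma^{-1}_i, X\rangle|] < \infty$ and $\E[|\partial_i f(X)|] < \infty$, which is automatic in every application.
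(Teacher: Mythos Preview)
Your proposal is correct and follows essentially the same approach as the paper: both compute $\partial_i \phi(x) = -\langle \Sigma^{-1}_i, x\rangle \phi(x)$ for the Gaussian density and then integrate by parts in the $x_i$ variable, noting that the boundary terms vanish. Your reduction to the isotropic case and explicit discussion of the growth conditions needed for the boundary terms to vanish add rigor beyond the paper's two-line argument, but the underlying method is identical.
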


\begin{proof}
	Let $p_X(x)$ be the probability density of $X$. Then, note that $\frac{\partial p_X}{\partial x_i} = -\langle \Sigma^{-1}_i,x\rangle p_X(x)$. With this in mind, we apply integration by parts on coordinate $i$ (noting that the evaluation of $f(x)p_X(x)$ at the limits of integration is $0$):
	\[\int_{\R^d}\frac{\partial f(x)}{\partial x_i}p_X(x)\,\mathrm dx = \int_{\R^d}\langle \Sigma_i^{-1},x\rangle f(x)p_X(x)\,\mathrm dx = \E[\langle \Sigma^{-1}_i,X\rangle f(X)].\qedhere\]
\end{proof}

\begin{lemma}[Compression of iterates]
\label{lem:iterate-comp}
	Suppose that $\{p^t: \R^t \rightarrow \R\}$ is a sequence of polynomial denoisers whose degree does not depend on $n$, and $\hat{x}^0,\hat{x}^1,\ldots$ are the AMP iterates produced by these denoisers on the input matrix $X$.
Then, for any pseudo-Lipschitz function $\psi: \R^{k + 2} \rightarrow \R$,
	\[\operatorname*{p-lim}_{n\rightarrow\infty}\frac 1n\sum_{i=1}^{n}\psi(\hat{x}_i^0,\hat{x}_i^1,\ldots,\hat{x}_i^k;y_i) = \E[\psi(U^0,U^1,\ldots,U^k;Y)]\]
	for a centered Gaussian Process $U$ with covariance matrix $Q$ given by \[Q_{j + 1,k+1} = \E[p^j(U^0,\ldots,U^j;Y)p^k(U^0,\ldots,U^k;Y)].\]
\end{lemma}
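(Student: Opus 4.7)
The plan is to derive this lemma as a straightforward extension of the state evolution theorem (\pref{thm:state-evolution}) that accommodates the coordinate-wise auxiliary vector $y$. The only novelty relative to \pref{thm:state-evolution} is the extra argument $y_i$ appearing in $\psi$ (and in the covariance formula, which suggests the denoisers may also implicitly depend on $y$); so the whole proof will consist of bookkeeping around a ``virtual'' side-information iterate.

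First, I would formalize the side information by declaring a virtual zeroth iterate $\hat{x}^{-1} = y$ which is never updated by the AMP recursion (the corresponding update map is the identity, so no matrix-vector multiplication is introduced and no new Onsager term appears). The polynomial denoiser $p^t$ is then viewed as a polynomial in $(\hat x^0, \ldots, \hat x^t, \hat x^{-1})$. Assuming the empirical distribution of $y$ converges weakly to the law of a random variable $Y$ (and the entries of $y$ are bounded in the sense required by the subgaussian moment bounds of \pref{app:hyp}, independently of $X$), I would invoke the polynomial-denoiser case of \pref{thm:state-evolution} on this extended iteration. Because $y$ is independent of $X$, the limit Gaussian process $U$ produced by state evolution is automatically independent of $Y$, and the recursion for the covariance reduces to exactly
\[
Q_{j+1,k+1} = \E\bigl[p^j(U^0,\ldots,U^j;Y)\,p^k(U^0,\ldots,U^k;Y)\bigr],
\]
with the $Y$ simply being carried through as a conditioning variable at each step.

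Second, pseudo-Lipschitzness of $\psi$ lets us upgrade convergence of the joint empirical distribution of $(\hat x_i^0, \ldots, \hat x_i^k, y_i)$ to the in-probability convergence of $\tfrac 1n \sum_i \psi(\hat x_i^0, \ldots, \hat x_i^k; y_i)$ to $\E[\psi(U^0,\ldots,U^k;Y)]$, which is a standard manipulation (truncate $\psi$, apply Skorohod/weak convergence on the empirical measure, and control tails using the polynomial growth bound together with uniform $L^p$ control of the iterates from state evolution). For polynomial denoisers, tail control follows from the subgaussian moment bounds on $X$ combined with the concentration statements for tree polynomials developed in \pref{app:hyp}.

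The main obstacle is that \pref{thm:state-evolution} as stated in the preliminaries does not explicitly cover side information, so one must justify the ``virtual iterate'' reduction. This is standard in the AMP literature (see e.g. the generalized AMP formalism of \cite{JM13} and Bayati--Montanari), but if a self-contained argument is desired, one can reprove the polynomial-denoiser case directly by the method of moments: since every $\hat x^t$ is a polynomial in the entries of $X$ (with fixed coefficients depending on $y$), concentration of low-degree polynomials in subgaussian variables (as in \pref{app:hyp}) reduces the claim to matching the expected joint polynomial moments of $(\hat x^0_i, \ldots, \hat x^k_i, y_i)$ with the Gaussian moments of $(U^0, \ldots, U^k, Y)$. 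This moment match is forced by the combinatorial expansion over interaction graphs from \pref{sec:prelims} together with the Onsager cancellation identity, which is exactly the content of the usual state evolution proof.
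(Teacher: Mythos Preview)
Your proposal is correct and aligns with the paper's approach, which is even terser: the paper simply invokes \cite[Proposition~2.1]{Mon21} (which already incorporates the side-information vector $y$) and notes that, because polynomial denoisers are not Lipschitz, one must substitute \cite[Theorem~4]{bayati2015universality} for the appeal to \cite[Theorem~1]{javanmard2013state} made there. Your ``virtual iterate'' bookkeeping and your moment-method fallback are precisely the content of these two cited results, so your write-up is a more detailed unpacking of the same citation-based argument; the one reference worth swapping in is the universality result \cite{bayati2015universality} rather than \cite{JM13}/\cite{BM11}, since the latter cover only Lipschitz denoisers and that is exactly the gap the paper's substitution is addressing.
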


\begin{proof}
	This follows immediately by~\cite[Proposition 2.1]{Mon21}, substituting~\cite[Theorem 4]{bayati2015universality} for the usage of~\cite[Theorem 1]{javanmard2013state}.
\end{proof}

To prove the next lemma, we make use of some special assumptions on $Q$, the covariance matrix of the (not-necessarily polynomial) denoisers $f^t$ when applied to the AMP iterates.

\begin{lemma}[Approximating AMP with polynomials]
\label{lem:approx-poly}
	Fix $\delta > 0$ and $T\in \N$. Suppose that:
	\begin{itemize}
	\item $\{f^t: \R^t \rightarrow \R\}$ is a sequence of $L$-Lipschitz denoiser functions which produce AMP iterates $x^0, x^1,x^2,\ldots$,
	\item For each $i \in [n]$, $\frac{\partial f^t}{\partial x_i}$ is either pseudo-Lipschitz or an indicator,
	\item The covariance matrix $Q^t$ corresponding to $f^1,\ldots,f^t$
	 satisfies $Q\succeq I$ and $\max_{i,j}|Q_{i,j}| \le 2$.
	\end{itemize}
Then, there exists a sequence of polynomial denoisers $\{p^t: \R^t\rightarrow \R\}$ producing AMP iterates $\hat{x}^{t+1} = Xp^t(\hat{x}^0,\hat{x}^1,\ldots,,\hat{x}^t) - \sum_{j=1}^t \hat{b}_{t,j}p^{j-1}(\hat{x}^0,\hat{x}^1,\ldots,\hat{x}^{j-1})$ such that
	\[\operatorname*{p-lim}_{n\rightarrow\infty}\frac 1n\|x^t - \hat{x}^t\|^2 \le \delta^2\]
	for all $t\le T$.
	Furthermore, we can choose such polynomials with $\deg p^t = \widetilde{O}\biggl((256t^{9/2}L(K + L)/\delta)^{2^{T+5-t}}\biggr)$, where the $\sim$ hides polynomial factors in the logarithm of the argument to $\tilde{O}$.
\end{lemma}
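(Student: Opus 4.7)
I would prove the lemma by induction on $t$, constructing the polynomial denoisers $p^0, p^1, \ldots, p^{T-1}$ one at a time so that the polynomial-AMP iterate $\hat{x}^t$ tracks $x^t$ in the $L^2$ empirical sense. The base case $t=0$ is trivial since both iterations start at $\vec{1}$. Assume inductively that $p^0, \ldots, p^{t-2}$ have been chosen so that $\plim_{n\to\infty}\frac{1}{n}\|x^s - \hat{x}^s\|_2^2 \le \delta_s^2$ for $s < t$, where the $\delta_s$ are decreasing parameters chosen at the end. To build $p^{t-1}$, I invoke state evolution (\pref{thm:state-evolution}): the empirical joint distribution of $(x_i^0, \ldots, x_i^{t-1})_{i\in[n]}$ converges to a centered Gaussian vector $U = (U^0,\ldots,U^{t-1})$ with covariance $Q^{t-1}$. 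The hypothesis $\max_{ij}|Q_{ij}|\le 2$ gives $Q^{t-1} \preceq 3tI$ by Gershgorin, so \pref{lem:weighted-gaussian} applies: for any $\eta_t > 0$ I obtain a polynomial $p^{t-1}$ of degree $O\bigl(4^{t-1}(L/\eta_t)^8\,\poly(t,\log(1/\eta_t))\bigr)$ with $\E_U[(f^{t-1}(U) - p^{t-1}(U))^2]\le \eta_t^2$.

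\paragraph{Error propagation.}
Writing out the two AMP recursions and applying $(a+b)^2 \le 2a^2 + 2b^2$, the quantity $\frac{1}{n}\|x^t - \hat{x}^t\|_2^2$ splits into the matrix-multiplication term $\frac{1}{n}\|X(f^{t-1}(x^{<t}) - p^{t-1}(\hat{x}^{<t}))\|_2^2$ and the Onsager-difference term $\frac{1}{n}\|\sum_{j} (b_{t-1,j} f^{j-1}(x^{<j}) - \hat{b}_{t-1,j} p^{j-1}(\hat{x}^{<j}))\|_2^2$. The first is bounded by $\|X\|_{\op}^2$ (a constant with high probability) times a further triangle split into (i) a Lipschitz contribution $L^2 \sum_{s<t} \frac{1}{n}\|x^s - \hat x^s\|_2^2$, which is controlled by induction, and (ii) the pure approximation piece $\frac{1}{n}\|f^{t-1}(\hat x^{<t}) - p^{t-1}(\hat x^{<t})\|_2^2$. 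For (ii), \pref{lem:iterate-comp} applied to the polynomial AMP says the $p$-limit is $\E_{\hat U}[(f^{t-1}(\hat U)-p^{t-1}(\hat U))^2]$ for $\hat U$ the Gaussian process of polynomial iterates, with covariance $\hat Q^{t-1}$; a continuity argument (propagating the inductive $\delta_s$ through the state-evolution recursion for $Q$) shows $\hat Q^{t-1}$ is close to $Q^{t-1}$ in operator norm, which transfers the approximation bound $\eta_t^2$ to the $\hat U$ distribution up to a small multiplicative factor.

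\paragraph{Onsager via Stein.}
The Onsager coefficients $b_{t-1,j}$ are empirical averages of $\partial f^{t-1}/\partial x^j$, which converge (by state evolution applied to pseudo-Lipschitz functionals, or by a separate argument in the indicator case) to $\E[\partial_j f^{t-1}(U)]$. The generalized Stein identity (\pref{lem:stein-extend}) rewrites this as $\E[\iprod{(Q^{t-1})^{-1}_j, U}\cdot f^{t-1}(U)]$, which depends on $f^{t-1}$ only through its $L^2(\calN)$-inner product with a fixed linear form; the analogous statement holds for $\hat b_{t-1,j}$ and $p^{t-1}$. Thus Cauchy-Schwarz and the $L^2$-approximation bound yield $|\lim_n b_{t-1,j} - \lim_n \hat b_{t-1,j}| \le O(\sqrt{t}\,\eta_t)$, even when $\partial_j f^{t-1}$ is only an indicator, which is the whole point of routing through Stein. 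Combined with the inductive bounds on $f^{j-1}(x^{<j}) - p^{j-1}(\hat x^{<j})$ (handled identically to the matrix term, one level down), this gives the Onsager-difference a bound of the form $O(t^2 \eta_t^2 + t L^2 \sum_{s<t} \delta_s^2)$.

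\paragraph{Recursion and main obstacle.}
Summing the two contributions yields a recursion $\delta_t^2 \le C t L^2 \sum_{s < t}\delta_s^2 + C t \eta_t^2$ with $C = C(K)$. Solving this Gronwall-style recurrence with terminal condition $\delta_T = \delta$ forces $\eta_t$ to decrease doubly-exponentially in $T-t$, and feeding this back into the degree bound of \pref{lem:weighted-gaussian} produces the stated $\widetilde O\bigl((256 t^{9/2} L(K+L)/\delta)^{2^{T+5-t}}\bigr)$. The main obstacle is the Onsager correction: an $L^2$-close polynomial approximation of $f^{t-1}$ need not have $L^2$-close derivatives, and when $\partial_j f^{t-1}$ is a mere indicator there is no direct way to approximate it by a polynomial derivative in $L^2$. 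Stein's identity is the essential reduction that reroutes the Onsager term into an $L^2$-inner product with a \emph{linear} form of the Gaussian, eliminating the need for derivative approximation and closing the induction. A secondary technical nuisance is the transfer between state-evolution covariances $Q^{t-1}$ and $\hat Q^{t-1}$, which requires showing continuity of the state-evolution map in the denoiser---handled by another application of induction.
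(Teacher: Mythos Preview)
Your high-level plan matches the paper's: induct on $t$, use \pref{lem:weighted-gaussian} to build $p^{t}$, split the iterate error into a matrix-multiplication piece and an Onsager piece, and route the Onsager comparison through Stein's identity (\pref{lem:stein-extend}) to avoid approximating derivatives. That is exactly the skeleton of the paper's argument.

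There is, however, a real gap in your Onsager step, and it propagates into the recursion. You claim
\[
\Bigl|\lim_n b_{t-1,j}-\lim_n \hat b_{t-1,j}\Bigr|\le O(\sqrt t\,\eta_t),
\]
but Stein gives $\lim b_{t-1,j}=\E_U[\iprod{Q^{-1}_j,U}f^{t-1}(U)]$ and $\lim \hat b_{t-1,j}=\E_{\hat U}[\iprod{\hat Q^{-1}_j,\hat U}p^{t-1}(\hat U)]$, so the difference carries three contributions: (i) $f$ vs.\ $p$ (your $\eta_t$), (ii) $Q^{-1}$ vs.\ $\hat Q^{-1}$, and (iii) the distributional shift $U\to\hat U$. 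Pieces (ii)--(iii) are governed by $\|Q^{t}-\hat Q^{t}\|$, not by $\eta_t$, and after the natural coupling $U=Q^{1/2}g,\ \hat U=\hat Q^{1/2}g$ they enter at the scale $\sqrt{\|Q^{t}-\hat Q^{t}\|}$. The paper tracks this explicitly as a separate functional $g_1(t)=\|Q^t-\hat Q^t\|_F$, and the same coupling shows $g_1(t)^2\lesssim g_1(t-1)$, i.e.\ a \emph{square-root} recursion. That nonlinearity is precisely what produces the doubly-exponential exponent $2^{T+5-t}$ in the degree bound.

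Your recursion $\delta_t^2 \le C\,tL^2\sum_{s<t}\delta_s^2 + C\,t\,\eta_t^2$ is linear and, if it were correct, would force $\eta_t$ to be only \emph{singly} exponentially small in $T-t$, contradicting both the lemma's stated degree and your own claim that ``solving this Gronwall-style recurrence \ldots forces $\eta_t$ to decrease doubly-exponentially.'' The missing ingredient is an explicit bound on $\|Q^t-\hat Q^t\|$ and the recognition that, through Stein, the Onsager error inherits a $\sqrt{\|Q-\hat Q\|}$ term; once you add that (as the paper does via its $g_1,g_3$ functionals), the recursion becomes $g_1(t)\sim\sqrt{g_1(t-1)}$ and the doubly-exponential degree follows. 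Your ``continuity of the state-evolution map'' remark gestures at this but does not quantify it, and the quantitative form is what drives the final answer.
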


\begin{proof}
	We will define the polynomials $p^1,p^2,\ldots$ inductively.
	Note that for $t = 0$, the result immediately holds (as $\hat{x}^0 = x^0$). Suppose we are now at some iterate $t + 1$, having defined polynomials $p^1,p^2,\ldots,p^t$.
As established in \pref{lem:iterate-comp}, the original denoisers $f^1,\ldots,f^t$ and the polynomial denoisers $p^1,\ldots,p^t$ produce two centered Gaussian Processes $U, \hat{U}$ with covariances $Q^t$ and $\hat{Q}^t$ satisfying
	\[Q_{i,j}^t = \E[f^{i-1}(U^0,\ldots,U^{i-1})f^{j-1}(U^0,\ldots,U^{j-1})]\]
	and \[\hat{Q}_{i,j}^t = \E[p^{i-1}(\hat{U}^0,\ldots,\hat{U}^{i-1})p^{j-1}(\hat{U}^0,\ldots,\hat{U}^{j-1})].\]
	We inductively prove four statements dependent on four parameters $g_1(t), g_2(t), g_3(t), g_4(t)$ to be chosen later:
	\begin{enumerate}
		\item $Q^t$ and $\hat{Q}^t$ are close in Frobenius norm: $\|Q^t - \hat{Q}^t\|_F \le g_1(t).$
		\item There exists a polynomial $p^{t}$ of total degree at most $\widetilde{O}\left(4^t\left(\frac L{g_2(t)}\right)^8\right)$ such that
		\[\E_{\hat{U}} [(f^t(\hat{U}^0,\ldots,\hat{U}^t) - p^t(\hat{U}^0,\ldots,\hat{U}^t))^2] \le g_2(t)^2.\]
		\item For all $1\le j\le t$, $|b_{t,j} - \hat{b}_{t,j}| \le g_3(t)$.
		\item $\operatorname*{p-lim}_{n\rightarrow\infty} \frac 1n\|x^{t+1} - \hat{x}^{t+1}\|_2^2 \le g_4(t)^2$.
	\end{enumerate}

	We prove each of these inductive hypotheses as a separate claim.
	\begin{claim}
	\label{clm:frob-bound}
		The first statement is true. That is, $\|Q^t - \hat{Q}^t\|_F \le g_1(t).$
	\end{claim}
	\begin{proof}
	To prove such a Frobenius bound, note that \[\|Q^t - \hat{Q}^t\|_F^2 \le \|Q^{t - 1} - \hat{Q}^{t-1}\|_F^2 + 2\sum_{j=0}^{t-1}(Q_{t,j+1} - \hat{Q}_{t,j+1})^2 \le g_1(t - 1)^2 + 2\sum_{j=0}^{t-1}(Q_{t,j+1} - \hat{Q}_{t,j+1})^2.\]

	Let's consider the contribution of each term in the latter sum separately. We begin by coupling $U$ and $\hat{U}$ from the definition of $Q$ and $\hat{Q}$. In particular, write $U = Q^{\frac 12}g$ and $\hat{U} = \hat{Q}^{\frac 12}g$ for $g\sim \calN(0, I)$. We can do this coupling since $Q_{t,j+1} - \hat{Q}_{t,j+1}$ does not have any cross terms in $U, \hat{U}$. By the Triangle Inequality, Jensen's Inequality, and Cauchy Schwarz, we have that
	\begin{align*}
	|Q_{t,j+1} - \hat{Q}_{t,j+1}| &\le |\E_{g}[f^{t-1}(U)f^j(U) - p^{t-1}(\hat{U})p^j(\hat{U})]|\\
	&= \E_{g}[|f^{t-1}(U)f^j(U) - f^{t-1}(U)p^j(\hat{U}) + f^{t-1}(U)p^j(\hat{U}) - p^{t-1}(\hat{U})p^j(\hat{U})|]\\
	&\le \E_{g}[|f^{t-1}(U)(f^j(U) - p^j(\hat{U}))|] + \E_{U,\hat{U}}[|p^j(\hat{U})(f^{t-1}(U) - p^{t-1}(\hat{U})|]\\
	&\le \sqrt{\E_U[f^{t-1}(U)^2]\E_{g}[(f^j(U) - p^j(\hat{U}))^2]} + \sqrt{\E_{\hat{U}}[p^j(\hat{U})^2]\E_{g}[(f^{t-1}(U) - p^{t-1}(\hat{U}))^2]}.
	\end{align*}
	To simplify this expression, we apply the Almost-Triangle Inequality upon squaring:
	\[(Q_{t,j+1} - \hat{Q}_{t,j+1})^2 \le 2\E_{g}[(f^{t-1}(U) - p^{t-1}(\hat{U}))^2]\left(\E_U[f^{t-1}(U)^2] + \E_{\hat{U}}[p^{t-1}(\hat{U})^2]\right)\]
	Here, we assume that $ \E_{\hat{U}}[p^j(\hat{U})^2]\le \E_U[p^{j}(U)^2]$ and similarly for the other two varieties of terms present.

	To deal with the first term, note that once more by the Almost Triangle Inequality, the duality of nuclear and 1-norm
	 and induction that
	\begin{align*}
		\E_{g}[(f^{t-1}(U) - p^{t-1}(\hat{U}))^2] &\le 2\E_{g}[(f^{t-1}(U) - f^{t-1}(\hat{U}))^2] + 2\E_{\hat{U}}[(f^{t-1}(\hat{U}) - p^{t-1}(\hat{U}))^2]\\
		&\le 2L^2 \E_{U,\hat{U}}[\|U - \hat{U}\|_2^2] + 2g_2(t-1)^2\\
		&=2L^2\E_g[\|(Q^\frac 12 - \hat{Q}^{\frac 12})g\|_2^2] + 2g_2(t-1)^2\\
		&\le 2L^2\|Q^{\frac 12} - \hat{Q}^{\frac 12}\|_{\mathsf{op}}^2\E_g[\|g\|_2^2] + 2g_2(t - 1)^2\\
		&\le 2tL^2\|Q - \hat{Q}\|_{\ast} + 2g_2(t - 1)^2\\
		&\le 2t^{3/2}L^2g_1(t - 1) + 2g_2(t-1)^2.
	\end{align*}

	The second term follows by a direct Lipschitz condition bound: $f^{t-1}(U)^2 \le L^2\|U\|_2^2$. Hence, \[\E_U[f^{t-1}(U)]^2 \le L^2\sum_{i=1}^{t}Q_{i,i} = L^2 \|Q^{t-1}\|_\ast.\]
	For the last term, we have by the Almost-Triangle Inequality that
	\begin{align*}
\E_{\hat{U}}[p^{t-1}(\hat{U})^2] &= \E_{\hat{U}}[(f^{t-1}(\hat{U})^2 + (p^{t-1}(\hat{U}) - f^{t-1}(\hat{U})))^2] \\
&\le 2\E_{\hat{U}}[f^{t-1}(\hat{U})^2] + 2\E_{\hat{U}}[(p^{t-1}(\hat{U}) - f^{t-1}(\hat{U}))^2] \\
&\le 2L^2 \|\hat{Q}^{t-1}\|_\ast+2g_2(t-1)^2.
	\end{align*}

Hence, putting everything together we obtain that
\begin{align*}(Q_{t,j+1} - \hat{Q}_{t,j+1})^2 &\le 2\left(2t^{3/2}L^2g_1(t - 1) + 2g_2(t-1)^2\right)\left(3L^2\|Q^{t-1}\|_\ast + 2g_2(t-1)^2\right)\\
&\le 8(t^{3/2}L^2g_1(t-1)+g_2(t-1)^2)(3t^2L^2 + g_2(t-1)^2)\\
&\le 16t^2L^2(t^{3/2}L^2g_1(t-1)+g_2(t-1)^2).\end{align*}

Therefore,
\[\|Q^t - \hat{Q}^t\|_F^2 \le g_1(t-1)^2+32t^3L^2(t^{3/2}L^2g_1(t-1)+g_2(t-1)^2)\le 64t^3L^2(t^{3/2}L^2g_1(t-1)+g_2(t-1)^2).\]
Choosing $g_1(t)^2$ to be the expression on the right hand side gives us the conclusion.
	\qedhere
	\end{proof}

	Next, we handle the polynomial approximation.
	\begin{claim}
		The second statement is true. That is, there exists a polynomial $p^{t}$ of total degree at most $\widetilde{O}\left(4^t\left(\frac L{g_2(t)}\right)^8\right)$ such that
		\[\E_{\hat{U}} [(f^t(\hat{U}^0,\ldots,\hat{U}^t) - p^t(\hat{U}^0,\ldots,\hat{U}^t))^2] \le g_2(t)^2.\]
	\end{claim}

	\begin{proof}
	Since $\|\hat{Q}^{t} - Q^t\|_F \le g_1(t)$, it follows that $\hat{Q}^t \le (2t + g_1(t))I \le 3tI$.
	From \pref{lem:weighted-gaussian} applied to $f^t$ and $\Sigma = \hat{Q}^{t}$, it follows that we have such a polynomial $p^t$ with total degree bound $\widetilde{O}\left(4^t\left(\frac L{g_2(t)}\right)^8\right)$.
	\end{proof}

	Using our derived polynomial, we show that the Onsager correction term is also close.
	\begin{claim}
		The third statement is true. That is, for all $1\le j\le t$, $|b_{t,j} - \hat{b}_{t,j}| \le g_3(t)$.
	\end{claim}
	\begin{proof}
	Recall that $b_{t,j} = \frac 1n\sum_{i=1}^{n}\frac{\partial f^t(x^0,x^1,\ldots,x^t)}{\partial x^j_i}$ and similarly for $\hat{b}_{t,j}$. Hence, by state evolution and \pref{lem:stein-extend} (using that the partial derivative is either an indicator or pseudo-Lipschitz) it follows that
	\[\operatorname*{p-lim}_{n\rightarrow\infty} b_{t,j} = \E_U \biggl[\frac{\partial f^t(x^0,\ldots,x^t)}{\partial x^j}\biggr|_{x\rightarrow U}\biggr] = E_U[\langle Q^{-1}_j,U\rangle f^t(U)].\]
	Similarly, we have \[\operatorname*{p-lim}_{n\rightarrow\infty} \hat{b}_{t,j} = \E_{\hat{U}}[\langle \hat{Q}^{-1}_j,\hat{U}\rangle p^t(\hat{U})].\]
	Now, similarly to showing closeness of $Q$ and $\hat{Q}$, couple $U = Q^{\frac 12}g, \hat{U} = \hat{Q}^{\frac 12}g$ and write
	\begin{align*}|\operatorname*{p-lim}_{n\rightarrow\infty} b_{t,j} - \hat{b}_{t,j}| &\le \bigl|\E_{g}\bigl[\langle Q^{-1}_j,U\rangle f^t(U) - \langle \hat{Q}^{-1}_j,\hat{U}\rangle p^t(\hat{U})\bigr]\bigr|\\
	&\le \E_{g}\bigl[\bigl|\langle Q^{-1}_j,U\rangle(f^t(U) - p^t(\hat{U}))\bigr|\bigr] + \E_{g}\bigl[\bigl|(\langle Q^{-1}_j,U\rangle - \langle \hat{Q}^{-1}_j,\hat{U}\rangle)p^t(\hat{U})\bigr|\bigr]\\
	&\le \sqrt{\E_{U}[\langle Q^{-1}_j,U\rangle^2]\E_{g}\bigl[(f^t(U) - p^t(\hat{U}))^2\bigr]} + \sqrt{\E_{g}\bigl[(\langle Q^{-1}_j,U\rangle - \langle \hat{Q}^{-1}_j,\hat{U}\rangle)^2\bigr]\E_{\hat{U}}[p^t(\hat{U})^2]}.\end{align*}
		From before, we immediately know that
	\[\E_{g}[(f^t(U) - p^t(\hat{U}))^2] \le 2t^{3/2}L^2g_1(t)+2g_2(t)^2\]
	and \[\E_{\hat{U}}[p^t(\hat{U})^2]\le 2L^2\|\hat{Q}^t\|_\ast+2g_2(t)^2.\]
	So, it suffices to bound the two remaining terms. For the former, note that
	\[\E_U[\langle Q^{-1}_j,U\rangle^2] = \E_U\left[\sum_{i=1}^{t}\sum_{k=1}^{t}Q^{-1}_{ij}Q^{-1}_{jk}U_iU_k\right] = \sum_{i=1}^t\sum_{k=1}^{t}Q^{-1}_{ij}Q^{-1}_{jk}Q_{ik} = \sum_{k=1}^{t}Q^{-1}_{jk}\delta_{j=k} = Q^{-1}_{jj}.\]
	Since $Q\succeq I$, it follows that $Q^{-1}\preceq I$ and hence $Q^{-1}_{jj} \le \sqrt{t}$ (by using the fact that spectral ordering inequalities imply Frobenius norm inequalities on the positive-semidefinite matrices).

	For the latter term, write by the Almost Triangle Inequality that
	\begin{align*}\E_{g}\bigl[(\langle Q^{-1}_j,U\rangle - \langle \hat{Q}^{-1}_j,\hat{U}\rangle)^2\bigr] &\le 2\E_{g}[\langle Q^{-1}_j,U-\hat{U}\rangle^2] + 2\E_{\hat{U}}[\langle Q^{-1}_j - \hat{Q}^{-1}_j,\hat{U}\rangle^2]\\
	&\le 2\|Q^{-1}_j\|_2^2\E_{U,\hat{U}}[\|U-\hat{U}\|_2^2] + 2\|Q^{-1}_j-\hat{Q}^{-1}_j\|_2^2\E_{\hat{U}}[\|\hat{U}\|_2^2].\end{align*}
	By the Frobenius bound, we once more have that $\|Q^{-1}_j\|_2^2 \le t$. Similarly, we have that $\E_{\hat{U}}[\|\hat{U}\|_2^2]=\|\hat{Q}\|_\ast$. By the duality of norms, we find that $\E_{g}[\|U - \hat{U}\|_2^2] \le t^{3/2}g_1(t)$. Hence, the only remaining term is $\|Q^{-1}_j-\hat{Q}^{-1}_j\|_2^2$. Note that
	\[\|Q^{-1}_j-\hat{Q}^{-1}_j\|_2^2\le \|Q^{-1} - \hat{Q}^{-1}\|_F^2 = \|Q^{-1}(\hat{Q} - Q)\hat{Q}^{-1}\|_F^2\le \|Q^{-1}\|_{\mathsf{op}}^2\|\hat{Q}^{-1}\|_{\mathsf{op}}^2\|\hat{Q} - Q\|_F^2\le \frac{1}{(1-g_1(t))^2}\cdot g_1(t)^2 \le 2g_1(t)^2\]
	where we use that $Q\succeq I$ so $\hat{Q}\succeq (1 - g_1(t))I$.

	Unwinding bounds, it follows that
	\[\E_{g}\bigl[(\langle Q^{-1}_j,U\rangle - \langle \hat{Q}^{-1}_j,\hat{U}\rangle)^2\bigr] \le 2t^{5/2}g_1(t) + 4g_1(t)^2\cdot \|\hat Q\|_\ast.\]
	Then, substituting back gives that
	\begin{align*}\operatorname*{p-lim}_{n\rightarrow\infty} |b_{t,j} - \hat{b}_{t,j}| &\le \sqrt{2t^2L^2g_1(t)+2\sqrt{t}g_2(t)^2} + \sqrt{(2t^{5/2}g_1(t) + 4g_1(t)^2\|\hat Q^t\|_\ast)(2L^2\|\hat{Q}^t\|_\ast+2g_2(t)^2)}\\
	&\le 16\sqrt{L^2t^{9/2}g_1(t)}\end{align*}
	by using that $\|\hat Q^t\|_\ast \le 2t^2$ and $g_2(t) \le \frac 12$. This is our function $g_3(t)$.\qedhere
	\end{proof}

	The final subclaim gives the guarantees we are after: closeness of the AMP iterates.
	\begin{claim}
		The final bullet point is true. That is, $\operatorname*{p-lim}_{n\rightarrow\infty}\frac 1{\sqrt n}\|x^{t+1} - \hat{x}^{t+1}\|_2 \le g_4(t)$.
	\end{claim}
	\begin{proof}
	Let $K = \|X\|_{\mathsf{op}}$ (which is $O(1)$ by sub-Gaussian concentration.
	Write
	\begin{align*}
	\|x^{t+1} - \hat{x}^{t+1}\|_2 &= \Biggl\|X\left(f^t(x^0,x^1,\ldots,x^t) - p^t(\hat{x}^0,\hat{x}^1,\ldots,\hat{x}^t)\right) + \sum_{j=1}^{t-1}\left(b_{t,j}f^{j-1}(x)-\hat{b}_{t,j}p^{j-1}(\hat x)\right)\Biggr\|_2	\\
	&\le \|X\|_{\mathsf{op}}\|f^t(x) - p^t(\hat x)\|_2 + \sum_{j=1}^{t-1}\|b_{t,j}f^{j-1}(x) - \hat{b}_{t,j}p^{j-1}(\hat x)\|_2\\
	&\le K\|f^t(x) - p^t(\hat x)\|_2 + \sum_{j=1}^{t-1}\left(|b_{t,j}-\hat{b}_{t,j}|\|p^{j-1}(\hat x)\|_2 + |b_{t,j}|\|f^{j-1}(x) - p^{j-1}(\hat{x})\|_2\right)\\
	&\le K\|f^t(x) - p^t(\hat x)\|_2 + \sum_{j=1}^{t-1}(g_3(t)\cdot \|p^{j-1}(\hat x)\|_2 + L\cdot \|f^{j-1}(x) - p^{j-1}(\hat{x})\|_2)\\
	&\le (K + tL)\|f^t(x) - p^t(\hat x)\|_2 + tg_3(t)\|p^{t-1}(\hat x)\|_2\\
	&\le (K + tL)\|f^t(x) - f^t(\hat x)\|_2 + (K + tL)\|f^t(\hat x) - p^t(\hat x)\|_2 + tg_3(t)\|p^{t-1}(\hat x)\|_2
	\end{align*}
	where in the second to last step we assume that all errors pile up to the last iteration.

	Let's bound the three remaining terms. The first follows directly by induction:
	\begin{align*}
		\frac 1n\|f^t(x) - f^t(\hat{x})\|_2^2 &= \frac 1n\sum_{i=1}^{n}(f^t(x_i) - f^t(\hat{x}_i))^2\\
		&\le \frac {L^2}n\sum_{i=1}^{n}\sum_{j=1}^{t}(x^j_i - \hat{x}^j_i)^2\\
		&= \frac {L^2}n\sum_{j=1}^{t}\|x^j - \hat{x}^j\|_2^2\\
		&\le L^2\sum_{j=1}^{t}g_4(t-1)^2\\
		&\le tL^2g_4(t-1)^2.
	\end{align*}
	The second follows by the second subclaim:
	\begin{align*}
		\operatorname*{p-lim}_{n\rightarrow\infty}\frac 1n\|f^t(\hat x) - p^t(\hat x)\|_2^2 = \E_{\hat U}[(f^t(\hat U) - p^t(\hat U))^2] \le g_2(t)^2.
	\end{align*}
	Finally, the third follows by a sub-argument of the first subclaim:
	\begin{align*}
		\operatorname*{p-lim}_{n\rightarrow\infty}\frac 1n\|p^{t-1}(\hat x)\|_2^2 = \E_{\hat U}[p^{t-1}(\hat U)^2] \le 2L^2 \|\hat{Q}^{t-1}\|_{\ast} + 2g_2(t - 1)^2 \le 9tL^2.
	\end{align*}
	Putting everything together, we find that
	\[\operatorname*{p-lim}_{n\rightarrow\infty}\frac 1{\sqrt n}\|x^{t+1} - \hat{x}^{t+1}\|_2 \le (K + tL)(t^{1/2}Lg_4(t-1) + g_2(t)) + 3t^{3/2}Lg_3(t).\]
	This is our function $g_4(t)$.
	\end{proof}

	Finally, we claim that we can choose increasing functions $g_1,g_2,g_3,g_4$ such that $g_4(T) = \delta$ and $g_2(t)$ is not too quickly growing.

	To review the current expressions regarding $g_1,g_2,g_3,g_4$, we have:
	\begin{align*}
		g_1(t)^2 &= 64t^3L^2(t^{3/2}g_1(t-1)+g_2(t-1)^2)\\
		g_3(t)^2 &= 256L^2t^{9/2}g_1(t)\\
		g_4(t) &= (K + tL)(t^{1/2} Lg_4(t-1) + g_2(t)) + 3t^{3/2}Lg_3(t).
	\end{align*}
	We will choose functions such that $g_1(t),g_4(t-1) \ge g_2(t)$ and $g_4(t-1)\ge g_3(t)$ to simplify the first expression to $g_1(t) = 16t^{9/4}L\cdot \sqrt{g_1(t-1)}$ and the last expression to $g_4(t) = 5t^{1/2}L(K + tL)\cdot g_4(t-1)$.

	To find $g_1(t)$, rewrite this as $g_1(t) = S\sqrt{g_1(t-1)}$ where we want $g_1(T) = \eta$ and we fix $t = T$ in all iterations as an upper bound (to keep $S$ a constant). Stepping back the recursion, this implies that $g_1(t - 1) = \frac{\eta^2}{S^2}$, $g_1(t - 2) = \frac{\eta^4}{S^6}$, and successively back to $g_1(0) = \frac{\eta^{2^{T}}}{S^{2^{T+1} - 2}}$.

	Next, write $g_4(t) = Cg_4(t - 1)$ (by treating $t = T$ as a constant). Then, $g_4(T) = C^Tg_4(0)$. Since we want $g_4(T) = \delta$, we must have $g_4(0) = \frac{\delta}{C^T}$.

	By hypothesis, we required $g_4(t-1) \ge g_3(t)$. Rewriting, this implies
	\[\delta \cdot C^{t + 1 - T} \ge S\sqrt{g_1(t)} = S\sqrt{\frac{\eta^{2^{T - t}}}{S^{2^{T+1-t} - 2}}} \implies \eta^{2^{T - t}}\le \delta^2 \cdot C^{2(t+1-T)} \cdot S^{2^{T + 1 - t}}.\]
	Take the $2^{T - t}$'th root: this yields that $\eta \le \min_{0\le t\le T}S^2 \cdot \left(\delta^2 \cdot C^{2(t+1-T)}\right)^{\frac 1{2^{T - t}}}$. For small enough $\delta > 0$, this minimum is obtained at $t = T$, which gives $\eta = (SC\delta)^2$.

	Then, we may choose $g_2(t) = g_1(t)$: this gives
	\[g_2(t) = \frac{(SC\delta)^{2^{T + 1 - t}}}{S^{2^{T + 1 -t} - 2}} = S^2\cdot C^{2^{T+1-t}}\cdot \delta^{2^{T+1-t}} = 256t^{9/2}L^2\cdot (5t^{1/2}\cdot L (K +tL))^{2^{T+1-t}}\cdot \delta^{2^{T+1-t}}.\]

	Simplifying further yields our final bound of $g_2(t) \le (256t^{9/2}L(K + L)\delta)^{2^{T+2-t}}$ being sufficient to get $g_4(T) \le \delta$.

	Therefore, the total degree of $p^t$ is at most $\widetilde{O}\biggl((256t^{9/2}L(K + L)/\delta)^{2^{T+5-t}}\biggr)$ as desired.
\end{proof}

We can finally prove the approximability of AMP as required by our algorithm.

\restateprop{prop:final-approx}
\begin{proof}
For the first statement, by state evolution, it suffices to show that $\|\hat{x}^{t+1} - x^{t+1}\|_2 \le \tfrac{1}{2}\delta \sqrt{n}$. 
By \pref{lem:approx-poly}, we can achieve this with polynomials $p^t$ of total $\widetilde{O}\biggl((512t^{9/2}L(K + L)/\delta)^{2^{T+5-t}}\biggr)$. 
The only remaining question is to figure out the total compounding degree of the $p^t$ to form $p(X)$.

Let $d(t)$ denote the total degree in $X$ required to form $\hat{x}^{t}$ and $\deg p^t$ the total degree of $p^t$. 
We begin with $d(0) = 0$ and $d(1) = 1$. Then, by iteration it follows that
\[d(t + 1) \le \max\left(1 + \deg p^t\cdot \sum_{j=0}^{t}d(j), \max_{1\le k< t}\left(\deg p^t - 1 + \deg p^k\cdot \sum_{j=1}^{k}d(j)\right)\right) \le \deg p^{t}\left(1 + \sum_{j=1}^{t}d(j)\right).\]
Upper bounding $\deg p^t\le S$ then yields that $d(t + 1) = S(1 + \sum_{j=1}^{t-1}d(j) + d(t)) = (1 + S)d(t)$, and hence $d(t + 1) = (1 +S)^{t-1}\cdot d(2) = 2S(1 + S)^{t-1}$.

Finally, this gives us the degree bound we desire: it is
\[d(T) \le \widetilde{O}\biggl((512T^{9/2}L(K + L)\delta)^{T\cdot 2^{T+5}}\biggr) \le \widetilde{O}\biggl((TL(K + L)/\delta)^{2^{2T}}\biggr)\]
for $T\ge 10$. 
Hence, we are done.

What remains is showing the second statement. 
By the GAMP state evolution (see~\cite[Theorem 1]{JM13}) applied to the pair of iterates $(x^t, \hat{x}^t)$ it follows that almost surely $\frac 1n\|\vAMP - P(X)\|_2^2 \rightarrow C$ for some constant $0 \le C \le \frac{1}{4}\delta^2$. 
Thus, this implies that $\frac 1n \E[\|\vAMP - P(X)\|_2^2] \rightarrow C$ as well. 
Taking $n_0$ large enough for $\frac 1n \E[\|\vAMP - P(X)\|_2^2] \le 2C \le \delta^2$ to hold then proves the claim.
\end{proof}

\end{document}